\newtheorem{proposition}{Proposition}[section]
\newtheorem{theorem}{Theorem}[section]
\newtheorem{lemma}[theorem]{Lemma}
\newtheorem{remark}[theorem]{Remark}
\renewcommand{\Re}{\mathrm{Re}}
\renewcommand{\Im}{\mathrm{Im}}
\newcommand{\tr}{\mathrm{Tr}}
\newcommand{\id}{\mathbbm{1}}
\newcommand{\Number}{\mathcal{N}}
\newcommand{\ch}{\chi_{\le N}}
\newcommand{\chb}{\chi_{\rm{B}}}
\newcommand{\chbm}{\chi_{\rm{B},M}}
\newcommand{\abs}[1]{\left| #1 \right|}
\newcommand{\norm}[1]{\left\| #1 \right\|}
\newcommand{\scp}[2]{\left\langle #1 , #2 \right\rangle}
\newcommand{\bra}[1]{\langle #1 |}
\newcommand{\ket}[1]{| #1 \rangle}
\numberwithin{equation}{section}
\begin{document}

\title{Norm approximation for the Fr\"ohlich dynamics in the mean-field regime}

\author{Nikolai Leopold\footnote{University of Basel, Department of Mathematics and Computer Science, Spiegelgasse 1, 4051 Basel, Switzerland, E-mail address: {\tt nikolai.leopold@unibas.ch}} }

\maketitle

\frenchspacing

\begin{abstract}

We study the time evolution of the Fr\"ohlich Hamiltonian in a  mean-field limit in which many particles weakly couple to the quantized phonon field. Assuming that the particles are initially in a Bose-Einstein condensate and that the excitations of the phonon field are initially in a coherent state we provide an effective dynamics which approximates the time evolved many-body state in norm, provided that the number of particles is large. The approximation is given by a product state which evolves according to the Landau--Pekar equations and which is corrected by a Bogoliubov dynamics. In addition, we extend the results from \cite{LMS2021} about the approximation of the time evolved many-body state in trace-norm topology to a larger class of many-body initial states with an improved rate of convergence.

\end{abstract}

\section{Introduction and main results}

We are interested in the evolution of a Bose-Einstein condensate which weakly interacts with the excitations of a quantized phonon field. For this purpose we consider the Fr\"ohlich model in the mean-field regime. It is defined on the Hilbert space
\begin{align}
\mathcal{H}^{(N)} = \left(L^2 \left( \mathbb{R}^{3} \right) \right)^{\otimes_s N} \otimes \mathcal{F} ,
\end{align}
where $\otimes_s$ denotes the symmetric tensor product and 
$\mathcal{F}  =  \bigoplus_{n = 0}^{\infty} \left(L^2 \left( \mathbb{R}^{3} \right) \right)^{\otimes_s n}
$.
The state of system evolves according to the Schr\"odinger equation
\begin{align}
\label{eq:Schroedinger equation}
 i \partial_t \Psi_{N,t} = H^{\rm F}_{N,\alpha} \Psi_{N,t}
\end{align}
with Fr\"ohlich Hamiltonian
\begin{align}
\label{eq: Froehlich Hamiltonian quadratic form}
H^{\rm F}_{N,\alpha} &= \sum_{j=1}^N  \left[ - \Delta_j 
+ \sqrt{\frac{\alpha}{N}}
\int dk \, \abs{k}^{-1}
\left( e^{2 \pi ikx_j}  a_k + e^{- 2 \pi i k x_j } a^*_k  \right)  \right] + \mathcal{N}_a .
\end{align}
The annihilation operators $a_k$ and creation operators $a_k^*$ satisfy the canonical commutation relations
\begin{align}
\label{eq: canonical commutation relation}
[a_k, a^*_l ] &=  \delta(k-l), \quad
[a_k, a_l ] = 
[a^*_k, a^*_l ] = 0
\end{align}
and $\mathcal{N}_a$ is the number operator defined by $\mathcal{N}_a= \int d^3 k\, a_k^* a_k$. 
The coupling parameter $\sqrt{\alpha/N}$ scales the strength of the interaction. It is chosen in a way such that all terms in the Hamiltonian are of order $N$ if the number of phonons is of order $N$. Let us remark that the definition \eqref{eq: Froehlich Hamiltonian quadratic form} is rather formal since the form factor of the phonon field is not square integrable. Using the commutator method of Lieb and Yamazaki \cite{LY1958} the Hamiltonian $H_{N,\alpha}^{\rm F}$ can, however, always be defined via its associated quadratic form. More information on this account and about the domain of $H_{N,\alpha}^{\rm F}$ are given in \cite{GW2016, LS2019}. 
We are interested in the evolution of many-body initial states of the form
\begin{align} 
\label{eq:initial data many body state}
\Psi_{N} &\approx \psi^{\otimes N}\otimes W(\sqrt N \varphi) \Omega.
\end{align}
Here $\psi, \varphi \in L^2(\mathbb{R}^3)$, $\Omega$ denotes the vacuum in $\mathcal{F}$ and $W(f)$ (with $f \in L^2(\mathbb{R}^3)$) is the unitary Weyl operator
\begin{align}
\label{eq: Weyl operator}
W(f) = \exp \left(  \int dk \, \big( f(k) a^*_k - \overline{f(k)} a_k \big) \right)
\end{align} 
satisfying 
\begin{align}
\label{eq: Weyl operators product}
W^{-1}(f)= W(-f), \quad W(f) W(g) = W(g) W(f) e^{-2 i \Im \langle f , g\rangle}  = W(f+g) e^{-i \Im \langle f , g \rangle }
\end{align}
as well as the shift property
\begin{align}
\label{eq: Weyl operators shift property}
W^*(f) a_k W(f) = a_k + f(k).
\end{align}
The initial datum \eqref{eq:initial data many body state} describes a Bose-Einstein condensate of particles with condensate wave function $\psi$ and a coherent state of phonons with mean particle number $N \norm{\varphi}_{L^2(\mathbb{R}^3)}^2$. A central feature of \eqref{eq:initial data many body state} is that particles and phonons only have little correlations among each other.
During the time evolution correlations emerge because of the interaction and the state of the system will no longer be of product type. If the number of particles $N$ is large these correlations are, however, weak enough such that the solution of the many-body Schr\"odinger equation \eqref{eq:Schroedinger equation} with initial datum \eqref{eq:initial data many body state} can be approximated in trace-norm topology (see Theorem \ref{theorem:reduced density matrices} and \cite{LMS2021}) by a product state
\begin{align}
\label{eq:time evolved Pekar product state}
\psi_t^{\otimes N}\otimes W(\sqrt N \varphi_t) \Omega,
\end{align}
where $(\psi_t,\varphi_t) \in L^2(\mathbb R^3) \times L^2(\mathbb R^3)$ solves the time-dependent Landau--Pekar equations
\begin{align}
\label{eq:Landau Pekar equations}
\begin{cases}
 i \partial_t \psi_t(x) & = h(t) \psi_t(x),  \\[2mm]
i \partial_t \varphi_t(k) & = \ \   \varphi_t(k) + \sqrt{ \alpha } \abs{k}^{-1}  \int dx \,  e^{-2 \pi i k x} \abs{\psi_t(x)}^2  
\end{cases}
\end{align}
with 
\begin{align}
\label{eq:definition classical phonon field}
\Phi_{\varphi}(x) &=
 \int dk \, \abs{k}^{-1}
\left(  e^{2 \pi i k x} \varphi(k)  + e^{- 2 \pi i k x}  \overline{\varphi(k)}  \right) 
\quad \text{for} \, \varphi \in L^2(\mathbb{R}^3) , 
\\
\mu(t) &= \frac{1}{2} \sqrt{\alpha} \int d x \, \Phi_{\varphi_t}(x) \abs{\psi(t,x)}^2 
\\
h(t) &= - \Delta + \sqrt{ \alpha}  \Phi_{\varphi_t} 
 - \mu(t) 
\end{align}
and initial datum $(\psi_0,\varphi_0) = (\psi, \varphi)$. The Landau--Pekar equations model the interaction between a single quantum particle and a classical phonon field. The corresponding energy functional is given by 
\begin{align}
\label{eq:definition energy L-P}
\mathcal{E}[\psi, \varphi]
&= \scp{\psi}{\left( - \Delta + \sqrt{\alpha} \Phi_{\varphi} \right) \psi}_{L^2(\mathbb{R}^3)} + \norm{\varphi}_{L^2(\mathbb{R}^3)}^2  .
\end{align}
Approximating $\Psi_{N,t}$ by \eqref{eq:time evolved Pekar product state} enables us to compute the many-body time evolution by a set of non-linear differential equations which involve only the condensate wave function and a classical phonon field. This reduces the complexity of the description tremendously. The validity of the approximation, however, only holds for the particle and phonon reduced density matrices of $\Psi_{N,t}$.
In order to obtain an effective description which approximates $\Psi_{N,t}$ in the Hilbert space norm of $\mathcal{H}^{(N)}$ it is necessary to take the leading order of the quantum fluctuations around the Landau--Pekar equations into account. This can be established by means of a Bogoliubov dynamics which is defined on the tensor product of two Fock spaces (see Subsection \ref{subsection:Excitation Fock-space and Bogoliubov Hamiltonian}). The Bogoliubov dynamics is generated by a quadratic Hamiltonian and for this reason again much easier to analyze than the true time evolution. For a detailed discussion of this fact in the context of the Nelson model with ultraviolet cutoff we refer to \cite[Chapter 3]{FLMP2021}.

The goal of this work is twofold:
In the first part we look at the reduced density matrices of $\Psi_{N,t}$ and prove that their time evolutions are approximately given by the Landau--Pekar equations. The second part shows that the Bogoliubov dynamics from Subsection \ref{subsection:Excitation Fock-space and Bogoliubov Hamiltonian} takes the correlations between the particles and the phonons correctly into account, leading to a norm approximation of the time evolved many-body state.

The approximation for the reduced densities has previously been shown in \cite{LMS2021}. The merit of Theorem \ref{theorem:reduced density matrices} is to extend these results to a larger class of many-body initial states with an improved rate of convergence. In contrast to \cite{LMS2021} we do not require any assumptions on the variance of the many-body energy and for this reason are able to classify the many-body initial data and to prove the results without using the Gross transform. The fact that the many-body initial data is only restricted to the form domain of the non-interacting Hamiltonian is of great importance if one would like to study the mean-field limit of the renormalized Nelson model with similar techniques and a regularization of the interaction by means of the Gross transform.\footnote{With this respect it is important to note that the Gross transform classifies the domain of the Fr\"ohlich Hamiltonian while it only allows to study the form domain of the Nelson Hamiltonian \cite{GW2016, GW2018, LS2019}.}
To the best of our knowledge,
Theorem \ref{theorem:Bogoliubov approximation} provides the first rigorous derivation of the Bogoliubov dynamics for the Fr\"ohlich model in the mean-field regime.

In order to obtain our result we use the excitation map from \cite{FLMP2021} (a straightforward generalization of the excitation map originally introduced in \cite{LNSS2015}), the commutator method of Lieb and Yamazaki \cite{LY1958}, an operator bound which follows from \cite[Lemma 10]{FS2014} and energy estimates in the spirit of \cite[Theorem 8]{LNS2015} which have previously been applied in derivations of the Hartree-- and Gross--Pitaevskii equations \cite{BCS2017, BOS2015, BNNS2019, BS2019, NN2017, NN2019, NS2020}. The norm approximation of the many-body quantum state is proven by means of the approach from \cite{LNS2015}. The main difficulties arise from the singular interaction which requires to control the growth of the kinetic energy not only under the Bogoliubov dynamics but also under the fluctuations dynamics around the true many-body evolution. In addition, it is necessary to estimate the growth of higher moments of the number operator during the Bogoliubov time evolution. To this end we introduce a Bogoliubov dynamics which is truncated in the total number of particles. Such a truncation has previously been used in \cite{BNNS2019, NN2017, NN2019, NS2020, RS2009}.

\paragraph{Comparison with the literature.}
The Fr\"ohlich model with $N=1$ was originally introduced in \cite{F1937} to describe the behavior of an electron in an ionic crystal and the Landau--Pekar equations were presented in \cite{LP1948} as effective equations for this model in the strong coupling limit, i.e. $H_{1,\alpha}^{\rm F}$  with $\alpha \gg 1$. In recent years the rigorous derivation of the Landau--Pekar equations in the strong coupling regime was established in a  series of works \cite{FG2017, FS2014, G2017,LMRSS2021, LRSS2021, M2021}. For a comparison between the different results we refer to \cite[p. 658]{LMRSS2021}. In this regard let us also mention \cite{FG2020, LRSS2021, FRS2021} for results on adiabatic theorems of the Landau--Pekar equations (in one and three dimensions) and on the persistence of the spectral gap during the evolution of the Landau--Pekar equations.
In this work we are concerned with the Fr\"ohlich Hamiltonian in the mean-field regime as previously considered in \cite{LMS2021}. Here, the appearance of classical radiation rests on the fact that many weakly correlated particles in the same quantum state create radiation. This mechanism has been investigated for the Nelson model with ultraviolet cutoff \cite{AF2014, F2013, FLMP2021, LP2018}, the Pauli--Fierz Hamiltonian \cite{FL2022, LP2020}, the renormalized Nelson model \cite{AF2017} and for the Nelson model with ultraviolet cutoff in a limit of many weakly interacting fermions \cite{LP2019}. We would like to remark that the interaction of the renormalized Nelson model is more singular than the one of the Fr\"ohlich model and that an analysis  is more complicated in this case. The results from \cite{AF2017} do, however, not provide explicit error estimates and we view the present article as a starting point for a derivation of the Schr\"odinger--Klein--Gordon equations from the renormalized Nelson model with an explicit rate of convergence. Let us also mention \cite{AFH2022, CCFO2021, CF2018, CFO2019, D1979, GNV2006, T2002}  in which the classical behavior of radiation fields was shown in other scaling regimes.

The article is structured as follows:
In the rest of this section we present our main results.
Section~\ref{sec: Notation and basic estimates} specifies the notation and provides a series of operator estimates which will be useful for the proofs. The results are proven in Section~\ref{sec:proofs}.

\subsection{Approximation of the one-particle reduced density matrices}
As in \cite{LMS2021} we set the coupling constant $\alpha = 1$ and use the notation $H_{N}^{\rm F} = H_{N,1}^{\rm F}$. All results are, however, equally true for any $\alpha >0$ independent of $N$. For $m \in \mathbb{N}$, let $H^m(\mathbb{R}^3)$ denote the Sobolev space of order $m$ and $L_m^2(\mathbb{R}^3)$ be the weighted $L^2$-space with norm $\| \varphi\| _{L_m^2(\mathbb{R}^3)} = \| ( 1 + \abs{\,\cdot\,}^2 )^{m/2} \varphi \|_{L^2(\mathbb{R}^3)}$.
Within this work we rely on the following result about the Landau--Pekar equations which in a slightly different version was proven in \cite[Lemma~2.1 and Proposition~2.2]{FG2017}. In Appendix \ref{section:solution theory for LP equation} it is outlined how the proof of \cite[Proposition~2.2]{FG2017} has to be modified.
\begin{proposition}
\label{proposition:solution theory for LP equation}
For any $(\psi, \varphi) \in H^1(\mathbb{R}^3) \times L^2(\mathbb{R}^3)$ there is a unique global solution $(\psi_t, \varphi_t)$ of \eqref{eq:Landau Pekar equations}. One has the conservation laws
\begin{align}
\norm{\psi_t}_{L^2(\mathbb{R}^3)} = \norm{\psi}_{L^2(\mathbb{R}^3)}
\quad \text{and} \quad 
\mathcal{E}[\psi_t, \varphi_t] = \mathcal{E}[\psi, \varphi]
\quad \text{for all} \; t \in \mathbb{R} .
\end{align}

If $(\psi, \varphi) \in H^3(\mathbb{R}^3) \times L_{2}^2(\mathbb{R}^3)$  with $\norm{\psi}_{L^2(\mathbb{R}^3)} = 1$ then there exists a constant $C> 0$ depending only on the initial data such that 
\begin{align}
\label{eq:solution of the SKG equations bounds for the H-3 and L-2-2 norm}
\norm{\varphi_t}_{L_2^2(\mathbb{R}^3)} \leq C \left( 1 + \abs{t}^3 \right) 
\quad \text{and} \quad
\norm{\psi_t}_{H^3(\mathbb{R}^3)} \leq C \left( 1 +  \abs{t}^4  \right) 
\quad \text{for all} \; t \in \mathbb{R}.
\end{align}
\end{proposition}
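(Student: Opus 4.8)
The plan is to first solve the second Landau--Pekar equation explicitly and thereby reduce the system to a single Schr\"odinger-type equation for $\psi_t$. Since the $\varphi$-equation is, for each fixed $k$, a linear inhomogeneous ODE, Duhamel's formula gives
\begin{align}
\varphi_t(k) = e^{-it}\varphi(k) - i\sqrt{\alpha}\,\abs{k}^{-1}\int_0^t e^{-i(t-s)}\Big(\int dx\, e^{-2\pi ikx}\abs{\psi_s(x)}^2\Big)\,ds.
\end{align}
Inserting this into $h(t)$ turns the first equation into a self-contained equation $i\partial_t\psi_t = (-\Delta + \sqrt{\alpha}\,\Phi_{\varphi_t} - \mu(t))\psi_t$ whose potential depends nonlocally in time on $\psi$. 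The key analytic input is that, although $\Phi_\varphi$ is not bounded, the singular factor $\abs{k}^{-1}$ can be traded against one derivative: by the commutator method of Lieb and Yamazaki one has a bound of the form $\norm{\Phi_\varphi \psi} \lesssim \norm{\varphi}_{L^2(\RRR^3)}\norm{\psi}_{H^1(\RRR^3)}$, so that $\sqrt\alpha\,\Phi_{\varphi_t}$ is infinitesimally form-bounded with respect to $-\Delta$ as long as $\norm{\varphi_t}_{L^2(\RRR^3)}$ stays finite. Consequently $h(t)$ is self-adjoint with form domain $H^1(\RRR^3)$, and a standard contraction-mapping argument in $C([0,T];H^1(\RRR^3))$ yields a unique local-in-time solution; uniqueness propagates by a Gr\"onwall estimate on the difference of two solutions.

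For the conservation laws I would argue directly. Mass conservation is immediate: because $\Phi_{\varphi_t}$ and $\mu(t)$ are real, $h(t)$ is self-adjoint, the map $\psi\mapsto\psi_t$ is unitary, and $\frac{d}{dt}\norm{\psi_t}^2 = 2\Im\scp{\psi_t}{h(t)\psi_t} = 0$. For the energy I would differentiate $\mathcal{E}[\psi_t,\varphi_t] = \scp{\psi_t}{(-\Delta + \sqrt\alpha\,\Phi_{\varphi_t})\psi_t} + \norm{\varphi_t}^2$ in time and substitute both equations of motion; the explicit time derivative $\Phi_{\partial_t\varphi_t}$ produced by the moving potential cancels exactly against the contribution of the $\varphi$-equation to $\frac{d}{dt}\norm{\varphi_t}^2$, while the $-\Delta$ and $\Phi$ contributions coming from the $\psi$-equation cancel because $h(t)$ is self-adjoint (the scalar $\mu(t)$ drops out, being a pure phase). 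These two conservation laws, together with the infinitesimal form bound, make the energy coercive on $H^1(\RRR^3)\times L^2(\RRR^3)$: using $\norm{\psi_t}=\norm{\psi}$ one obtains $\norm{\nabla\psi_t}^2 + \norm{\varphi_t}^2 \lesssim 1$ uniformly in $t$, which rules out blow-up and upgrades the local solution to a global one.

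The remaining, and genuinely harder, task is the propagation of the higher norms $\norm{\psi_t}_{H^3(\RRR^3)}$ and $\norm{\varphi_t}_{L_2^2(\RRR^3)}$ with the stated polynomial rates, following and adapting to the present norms the energy method of \cite[Prop.~2.2]{FG2017}, as detailed in Appendix~\ref{section:solution theory for LP equation}. Here I would set up a closed hierarchy of a priori estimates for the pair $(\norm{\psi_t}_{H^m(\RRR^3)},\norm{\varphi_t}_{L_j^2(\RRR^3)})$ and integrate it level by level. The weighted norm of $\varphi_t$ is read off from the Duhamel formula above: the inhomogeneity $\abs{k}^{-1}\int dx\,e^{-2\pi ikx}\abs{\psi_s(x)}^2$ is controlled in $L_2^2(\RRR^3)$ by splitting into its infrared and ultraviolet parts, using that its value at $k=0$ equals $\norm{\psi_s}^2$ and that $\abs{k}^{-1}\in L^2_{\mathrm{loc}}(\RRR^3)$ near the origin, while at high frequencies the weight is absorbed by Sobolev norms of $\psi_s$; integrating in $s$ then trades each further time integration for one power of $t$. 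The growth of $\norm{\psi_t}_{H^m(\RRR^3)}$ is in turn governed by the commutators $[(-\Delta)^m,\Phi_{\varphi_t}]$, whose coefficients are derivatives of $\Phi_{\varphi_t}$ and are estimated by weighted norms of $\varphi_t$. \textbf{The main obstacle} is precisely to organise these mutually coupled estimates so that the singular form factor $\abs{k}^{-1}$ never forces a loss of derivatives and so that the feedback between the two equations yields only polynomial, rather than exponential, growth; this is where the careful distribution of derivatives in the commutators and the infrared/ultraviolet bookkeeping for $\Phi_{\varphi_t}$ and its derivatives enter, and it is the reason the rates $(1+\abs{t}^3)$ and $(1+\abs{t}^4)$ emerge from the number of integrations performed while climbing the regularity hierarchy.
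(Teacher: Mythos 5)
Your first three paragraphs (local well\hyp{}posedness via Duhamel and a fixed point in $H^1\times L^2$, the two conservation laws, globality from coercivity of the conserved energy) are a reasonable sketch of what the paper simply imports from \cite[Lemma~2.1 and Proposition~C.2]{FG2017}; the only statement the paper actually proves in Appendix~\ref{section:solution theory for LP equation} is the bound $\norm{\psi_t}_{H^3(\mathbb{R}^3)}\leq C(1+\abs{t}^4)$, and that is precisely where your argument has a gap. You propose to propagate $\norm{\psi_t}_{H^m}$ through the commutators $[(-\Delta)^m,\Phi_{\varphi_t}]$, ``whose coefficients are derivatives of $\Phi_{\varphi_t}$ and are estimated by weighted norms of $\varphi_t$''. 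The quantitative version of that statement is $\norm{\partial^\beta\Phi_{\varphi_t}}_{L^\infty(\mathbb{R}^3)}\leq C\norm{\varphi_t}_{L^2_{\abs{\beta}+1}(\mathbb{R}^3)}$ (Lemma~\ref{lemma:preliminary estimates for solution theory}). At the $H^3$ level the commutator unavoidably produces $\partial^\beta\Phi_{\varphi_t}$ with $\abs{\beta}\geq 2$, whose $L^\infty$ control requires $\varphi_t\in L^2_3$ or $L^2_4$ --- but the hypothesis and the propagated regularity only give $\varphi_t\in L^2_2$ with rate $1+\abs{t}^3$. Integration by parts does not repair this: moving a derivative off $\partial^\beta\Phi_{\varphi_t}$ puts a fourth derivative on $\psi_t$, which $H^3$ does not control; and replacing the $L^\infty$ bounds by $L^2$ bounds on $\partial^{2,3}\Phi_{\varphi_t}$ forces terms that are superlinear in $\norm{\psi_t}_{H^3}$, so the naive Gronwall argument threatens exponential rather than the claimed polynomial growth. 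You correctly identify ``the careful distribution of derivatives in the commutators'' as the main obstacle, but you do not supply the idea that overcomes it.

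The paper's (and \cite{FG2017}'s) resolution is to abandon $[(-\Delta)^3,\Phi_{\varphi_t}]$ altogether and to Gronwall the modified functional $\mathcal{E}^{(3)}[\psi_t,\varphi_t]=\norm{(-\Delta+\Phi_{\varphi_t}+M)^{3/2}\psi_t}^2_{L^2(\mathbb{R}^3)}$, with $M$ chosen via the form bound \eqref{eq:preliminary estimates for solution theory bound for energy} so that $-\Delta+\Phi_{\varphi_t}+M\geq 1$. Since $\psi_t$ evolves, up to the scalar phase $\mu(t)$, by exactly this operator, $\frac{d}{dt}\mathcal{E}^{(3)}$ contains no commutator of $\Phi_{\varphi_t}$ with powers of the Laplacian: the only contribution comes from the explicit time dependence of the potential, i.e.\ from $\Phi_{\dot{\varphi}_t}$ and its first derivative, and $\norm{\Phi_{\dot{\varphi}_t}}_{L^\infty}+\norm{\nabla\Phi_{\dot{\varphi}_t}}_{L^\infty}\leq C\norm{\varphi_t}_{L^2_2}\leq C(1+\abs{t}^3)$ is exactly what Proposition~\ref{proposition: Proposition C.2 in FG2017} provides. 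This yields $\frac{d}{dt}\mathcal{E}^{(3)}\leq C(1+\abs{t}^3)\sqrt{\mathcal{E}^{(3)}}$, hence the rate $1+\abs{t}^4$ after a single integration, together with a separate comparison showing that $\mathcal{E}^{(3)}$ is equivalent to $\norm{(-\Delta+M)^{3/2}\psi_t}^2_{L^2}$ up to polynomially growing errors. Without this (or an equivalent) device, your hierarchy does not close under the stated hypothesis $\varphi\in L^2_2(\mathbb{R}^3)$.
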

Moreover, let
\begin{align}
\gamma_{\Psi_N}^{(1,0)} = \tr_{2, \ldots, N} \otimes \tr_{\mathcal{F}} \big( \ket{\Psi_N} \bra{\Psi_N} \big)
\end{align}
be the one-particle reduced density matrix on $L^2(\mathbb{R}^3)$ of $\Psi_N \in \mathcal{H}^{(N)}$ and $\tr \abs{A}$ denote the trace norm of any trace class operator $A$. Our first result is the following.

\begin{theorem}
\label{theorem:reduced density matrices}
Let $(\psi, \varphi) \in H^3(\mathbb{R}^3) \times L_2^2(\mathbb{R}^3)$ s.t. $\norm{\psi}_{L^2(\mathbb{R}^3)}= 1$, $\Psi_N \in \mathcal{D} \Big( \big( \sum_{j=1}^N - \Delta_j + \mathcal{N}_a \big)^{1/2} \Big)$ s.t. $\norm{\Psi_N}_{\mathcal{H}^{(N)}} = 1$
and define
\begin{align}
a \left[ \Psi_N, \psi \right]
&= \tr \left| \sqrt{1 - \Delta} \left( 1 - \ket{\psi} \bra{\psi} \right) \gamma_{\Psi_N}^{(1,0)}  
\left( 1 - \ket{\psi} \bra{\psi} \right) \sqrt{1 - \Delta}
\right| ,
\\
b \left[ \Psi_N, \varphi \right]
&= N^{-1} \scp{W^* (\sqrt{N} \varphi)\Psi_N}{\mathcal{N}_a W^* (\sqrt{N} \varphi)\Psi_N}_{\mathcal{H}^{(N)}} .
\end{align}
Let $(\psi_t, \varphi_t)$ be the unique solution of \eqref{eq:Landau Pekar equations} with initial datum $(\psi, \varphi)$ and $\Psi_{N,t} = e^{- i H^{\rm F}_N t} \Psi_N$.
 Then there exists a constant $C> 0$ (depending only on $\mathcal{E}[\psi, \varphi]$)  and 
 $f(t) = \int_0^t ds \, \left( \norm{\psi_s}_{H^3(\mathbb{R}^3)}^2 + \norm{\varphi_s}_{L_2^2(\mathbb{R}^3)}^2 \right)$ such that
\begin{align}
\label{eq: theorem reduced density matrices estimate 1}
N^{-1} \scp{W^* (\sqrt{N} \varphi_t)\Psi_{N,t}}{\mathcal{N}_a W^* (\sqrt{N} \varphi_t) \Psi_{N,t}}_{\mathcal{H}^{(N)}}
&\leq \left( a \left[ \Psi_N, \psi \right]
+  b \left[ \Psi_N, \varphi \right] + N^{-1} \right)
C  e^{C f(t)} ,
\\
\label{eq: theorem reduced density matrices estimate 2}
\tr \left| \sqrt{1 - \Delta} \big( \gamma_{\Psi_{N,t}}^{(1,0)}  - \ket{\psi_t} \bra{\psi_t} \big)  \sqrt{1 - \Delta} \right|
&\leq \sup_{j=1,2} \left( a \left[ \Psi_N, \psi \right]
+  b \left[ \Psi_N, \varphi \right] + N^{-1} \right)^{\frac{j}{2}} C  e^{C f(t)} .
\end{align}
\end{theorem}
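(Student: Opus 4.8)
The plan is to control both quantities on the left-hand sides of \eqref{eq: theorem reduced density matrices estimate 1} and \eqref{eq: theorem reduced density matrices estimate 2} through a single Grönwall estimate for the functional
\[
\beta(t) := a[\Psi_{N,t}, \psi_t] + b[\Psi_{N,t}, \varphi_t] + N^{-1},
\]
which at $t=0$ equals $a[\Psi_N,\psi] + b[\Psi_N,\varphi] + N^{-1}$. Since $\sqrt{1-\Delta}\,(\id - \ketbr{\psi_t})\,\gamma_{\Psi_{N,t}}^{(1,0)}\,(\id-\ketbr{\psi_t})\,\sqrt{1-\Delta}$ is a positive operator, $a$ is an honest trace, and writing $Q(t) = \id - \ketbr{\psi_t}$ the first-quantised identity $a[\Psi_{N,t},\psi_t] = \scp{\Psi_{N,t}}{Q_1(t)(1-\Delta_1)Q_1(t)\,\Psi_{N,t}}$ exhibits $a$ as the $H^1$-weighted number of excited (non-condensed) particles. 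The goal is to show
\[
\tfrac{d}{dt}\beta(t) \leq C\big(\norm{\psi_t}_{H^3(\RRR^3)}^2 + \norm{\varphi_t}_{L_2^2(\RRR^3)}^2\big)\,\beta(t),
\]
so that Grönwall's lemma together with Proposition \ref{proposition:solution theory for LP equation} yields $\beta(t) \leq \beta(0)\,C e^{C f(t)}$; the bound \eqref{eq: theorem reduced density matrices estimate 1} is then immediate from $b[\Psi_{N,t},\varphi_t] \leq \beta(t)$.

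To differentiate the field part I would pass to the fluctuation vector $\xi_t = W^*(\sqrt N \varphi_t)\Psi_{N,t}$, for which $b = N^{-1}\scp{\xi_t}{\mathcal{N}_a \xi_t}$ and $i\partial_t \xi_t = \widetilde H(t)\xi_t$ with $\widetilde H(t) = W^*(\sqrt N\varphi_t)H_N^{\rm F}W(\sqrt N\varphi_t) + i\big(\partial_t W^*(\sqrt N\varphi_t)\big)W(\sqrt N\varphi_t)$. By the shift property \eqref{eq: Weyl operators shift property} the transformation replaces $a_k$ by $a_k+\sqrt N\varphi_t(k)$, turning the one-body operator into $-\Delta_j + \Phi_{\varphi_t}(x_j)$ and isolating the genuinely exciting term $N^{-1/2}\sum_j\int dk\,\abs k^{-1}(e^{2\pi ikx_j}a_k+\hc)$, the remaining pieces being a scalar-coefficient linear term of order $\sqrt N$ and a $c$-number. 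In $\tfrac{d}{dt}b = N^{-1}i\scp{\xi_t}{[\widetilde H(t),\mathcal{N}_a]\xi_t}$ the one-body operator and the $c$-number commute with $\mathcal{N}_a$ and drop out, the linear term yields an $O(N^{-1/2})$ remainder absorbed into the floor $N^{-1}$ by Cauchy--Schwarz, and the essential contribution is the commutator of $\mathcal{N}_a$ with the fluctuation interaction.

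The main obstacle is that this surviving term carries the form factor $\abs k^{-1}$, which fails to be square integrable at infinity and hence cannot be controlled by $\mathcal{N}_a$ alone. I would split $\int dk = \int_{\abs k\le1}+\int_{\abs k>1}$: on $\{\abs k\le1\}$ one has $\abs k^{-1}\in L^2(\RRR^3)$ and a direct Cauchy--Schwarz bound against $\sqrt b$ suffices, whereas on $\{\abs k>1\}$ I would use the Lieb--Yamazaki identity $e^{2\pi ikx_j} = (2\pi i\abs k^2)^{-1}k\cdot\nabla_{x_j}e^{2\pi ikx_j}$ and integrate by parts, trading the singular weight for $k\abs k^{-3}\in L^2(\{\abs k>1\})$ at the cost of one gradient $\nabla_{x_j}$ acting on $\xi_t$ --- precisely the kinetic energy controlled by the weight $\sqrt{1-\Delta}$ in $a$. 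This operator estimate, in the spirit of \cite[Lemma 10]{FS2014}, bounds the field-part derivative by $C(\norm{\psi_t}_{H^3}^2+\norm{\varphi_t}_{L_2^2}^2)\beta(t)$. Controlling $\tfrac{d}{dt}a$ is analogous but heavier and is where the mean-field structure enters: differentiating $\scp{\Psi_{N,t}}{Q_1(t)(1-\Delta_1)Q_1(t)\Psi_{N,t}}$ produces, besides the commutators $[\Delta_1,\Phi_{\varphi_t}(x_1)]$ and $[\Delta_1,\text{interaction}]$, the explicit derivative $i\partial_t\psi_t = h(t)\psi_t$, and it is the insertion of the Landau--Pekar equations \eqref{eq:Landau Pekar equations} that cancels the would-be leading contributions of the mean field $\Phi_{\varphi_t}$; bounding $\Phi_{\varphi_t}$, $\nabla\Phi_{\varphi_t}$ and $\Delta\Phi_{\varphi_t}$ by $\norm{\varphi_t}_{L_2^2}$ and the remaining particle derivatives by $\norm{\psi_t}_{H^3}$ then delivers the Grönwall coefficient. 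I expect this weighted kinetic estimate, confronted with the singular interaction, to be the crux of the whole argument.

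Finally, to obtain \eqref{eq: theorem reduced density matrices estimate 2} I would pass from the projected functional to the full difference. With $P=\ketbr{\psi_t}$, $Q=\id-P$ and $\tr\gamma_{\Psi_{N,t}}^{(1,0)}=1$, the diagonal block satisfies $P\gamma_{\Psi_{N,t}}^{(1,0)}P - P = -\tr\!\big(Q\gamma_{\Psi_{N,t}}^{(1,0)}Q\big)\,P$, while the off-diagonal blocks $P\gamma_{\Psi_{N,t}}^{(1,0)}Q + \hc$ are estimated in Hilbert--Schmidt norm by Cauchy--Schwarz, the factors $\sqrt{1-\Delta}\,P$ being bounded through $\norm{\psi_t}_{H^1}\le\norm{\psi_t}_{H^3}$. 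Since the weighted version of $\tr\!\big(Q\gamma_{\Psi_{N,t}}^{(1,0)}Q\big)$ is dominated by $a[\Psi_{N,t},\psi_t]\le\beta(t)$, this produces the two powers $\beta(t)^{1/2}$ (off-diagonal) and $\beta(t)$ (diagonal), hence the factor $\sup_{j=1,2}(\,\cdot\,)^{j/2}$, and the proof is complete.
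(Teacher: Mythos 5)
Your overall architecture (a Gr\"onwall estimate for a functional controlling both $a$ and $b$, Lieb--Yamazaki splitting of the form factor, and the final $P$/$Q$ block decomposition with the two powers $j/2$) matches the paper, and the last step as well as the treatment of the purely field part $b$ are fine. However, there is a genuine gap at the point you yourself identify as the crux: you propose to differentiate $\beta(t)=a[\Psi_{N,t},\psi_t]+b[\Psi_{N,t},\varphi_t]+N^{-1}$ directly, where $a$ carries the kinetic weight $\sqrt{1-\Delta}$. The time derivative of $\scp{\Psi_{N,t}}{Q_1(t)(1-\Delta_1)Q_1(t)\Psi_{N,t}}$ contains $i\scp{\Psi_{N,t}}{Q_1[N^{-1/2}\widehat{\Phi}(x_1),-\Delta_1]Q_1\Psi_{N,t}}$, and since $[-\Delta_1,e^{2\pi i k x_1}]=e^{2\pi i k x_1}\left(4\pi^2 k^2-4\pi i\,k\cdot\nabla_1\right)$ this commutator carries the form factor $\abs{k}^{-1}\cdot k^2=\abs{k}$, which is \emph{more} singular than the original interaction. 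A single Lieb--Yamazaki integration by parts reduces it only to a weight of order $\abs{k}^{-1}$ on $\{\abs{k}>1\}$, still not square integrable in three dimensions; iterating produces second derivatives acting on the state, which are not available since $\Psi_N$ is only assumed to lie in the form domain $\mathcal{D}\big((\sum_j-\Delta_j+\mathcal{N}_a)^{1/2}\big)$. So the Gr\"onwall inequality for $\beta(t)$ as you set it up does not close.

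The paper avoids this entirely: after passing to the excitation space it does \emph{not} Gr\"onwall $\mathcal{N}+T_b+1$ but the almost conserved quantity $A(t)=H(t)+\widetilde{C}(\mathcal{N}+1)$, which by \eqref{eq:commutator estiamte for truncated many-body Hamiltonian 1}--\eqref{eq:commutator estiamte for truncated many-body Hamiltonian 2} is equivalent to $\mathcal{N}+T_b+1$ as a quadratic form on the truncated space. Its derivative only produces $\dot{A}(t)$ and $\widetilde{C}\,i[H(t),\mathcal{N}]$ --- the dangerous commutator $[H(t),T_b]$ never appears because $H(t)$ commutes with itself --- and both remaining terms are controlled by $A(t)$ via Lemma \ref{lemma:estimates H(t)}. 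This substitution of the kinetic energy by the Hamiltonian plus a multiple of the number operator is the missing idea; without it (or an equivalent device) your estimate for $\tfrac{d}{dt}a$ cannot be completed under the stated hypotheses.
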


\begin{remark}
Note that $\lim_{N \rightarrow \infty} \left(  a \left[ \Psi_N, \psi \right] + b \left[ \Psi_N, \varphi \right] \right) = 0$ if initially the particles exhibit Bose-Einstein condensation, the kinetic energy of the particles outside the condensate is small compared to one and the phonons are in a coherent state. This implies that the left hand sides of \eqref{eq: theorem reduced density matrices estimate 1} and \eqref{eq: theorem reduced density matrices estimate 2} converge to zero as the number of particles is getting large; showing the stability of the condensate and the coherent structure during the time evolution. For Pekar product states of the form $\Psi_N = \psi^{\otimes N} \otimes W (\sqrt{N} \varphi) \Omega$ we get
\begin{align}
N^{-1} \scp{W^* (\sqrt{N} \varphi_t)\Psi_{N,t}}{\mathcal{N}_a W^* (\sqrt{N} \varphi_t) \Psi_{N,t}}_{\mathcal{H}^{(N)}} &\leq C N^{-1}  e^{C f(t)} ,
\nonumber \\
\tr \left| \sqrt{1 - \Delta} \big( \gamma_{\Psi_{N,t}}^{(1,0)}  - \ket{\psi_t} \bra{\psi_t} \big)  \sqrt{1 - \Delta} \right| &\leq C N^{-1/2}  e^{C f(t)} .
\end{align}
\end{remark}

\begin{remark}
A similar result has previously been proven in \cite{LMS2021}. In contrast to \cite{LMS2021} we do not have to ensure that the variance of $N^{-1} H_N^{\rm F}$ w.r.t. $\Psi_N$ (instead only that the kinetic energy of the particles outside the condensate) is small compared to one. We can consequently consider many-body initial states in the form domain of $H_N^{\rm F}$ and for this reason do not have to distinguish between initial states in the operator domain of $H_N^{\rm{F}}$ and those in the domain of the free Hamiltonian without interaction (as it was done in \cite[Theorem 2.1 and Theorem 2.2]{LMS2021}). In comparison to \cite{LMS2021} we obtain a better rate of convergence but a more rapid growth of the error in time. The different behavior in $t$ stems from the fact that we control the kinetic energy of the particles and the interaction by an almost conserved quantity instead of the variance of the energy per particle; which is a constant of motion due to energy conservation. Finally, let us remark that our result is slightly stronger than \cite[Theorem 2.1 and Theorem 2.2]{LMS2021} in the sense that the convergence of the one-particle reduced density matrix to the projector onto the condensate wave function holds in Sobolev and not only in trace norm. The control of the kinetic energy of the particles outside the condensate is an important ingredient in the proof of Theorem \ref{theorem:Bogoliubov approximation}.
\end{remark}

\subsection{Excitation Fock-space and Bogoliubov Hamiltonian}

\label{subsection:Excitation Fock-space and Bogoliubov Hamiltonian}

Next, we provide an effective description which approximates solutions of the many-body Schr\"odinger equation \eqref{eq:Schroedinger equation} in the Hilbert space norm. This is obtained by modifying the Pekar product state \eqref{eq:time evolved Pekar product state} with the help of a Bogoliubov dynamics. The Bogoliubov dynamics describes correlations among the particles and the phonons but also between the particles and the phonons. For this reason it is convenient to describe the state of the particles on a Fock space by itself and to define the Bogoliubov dynamics on the tensor product of this space and the Fock space of the phonon field. The appearance of the Bogoliubov dynamics is motivated by means of the strategy from \cite{LNS2015, LNSS2015}. We particularly follow the route from \cite{FLMP2021} which considers the Bogoliubov dynamics and higher order corrections of the Nelson model with ultraviolet cutoff. This means that we will factor out the condensate and the coherent state from the many-body state and study the quantum fluctuations around the mean-field dynamics.
To this end we define the excitation Fock space of the particles
\begin{align}\label{eq: definition excitation Fock space particles}
\mathcal{F}_{b,\psi_t}  = \bigoplus_{k=0}^\infty \mathcal F_{b,\psi_t}^{(k)},  \quad \text{where}\quad  \mathcal{F}_{b}^{(k)} =  \left(  L^2_{\perp \psi_t}(\mathbb{R}^3) \right)^{\otimes_s k}
\end{align}
--with $ L^2_{\perp \psi_t}(\mathbb{R}^3)$ being the orthogonal complement of the one-dimensional space spanned by $\psi_t$ in $L^2(\mathbb{R}^3)$--
as well as the excitation Fock space of the phonons\footnote{Note that $\mathcal{F}_b = \mathcal{F}_a = \mathcal{F}$ because the Weyl operator maps $\mathcal{F}$ into itself. The notations $\mathcal{F}_b$ and $\mathcal{F}_a$ are introduced to distinguish between the Fock spaces of the particles and phonon field as well as to stress that the new vacuum $\Omega_a$ of $\mathcal{F}_a$ is given by $W \left( \sqrt{N} \varphi_t \right) \Omega$ where $\Omega$ is the vacuum of $\mathcal{F}$.}
\begin{equation}
\label{eq: definition excitation Fock space field bosons} 
   \mathcal{F}_{a}  =  W^* \big(\sqrt{N} \varphi_t \big)  \mathcal{F} .
\end{equation}
The truncated excitation Fock space $\mathcal G^{\leq N}_{\perp \psi_t}$ and the excitation Fock space $\mathcal G_{\perp \psi_t}$ are then given by
\begin{align}
\label{eq:double Fock space}
\mathcal G^{\leq N}_{\perp \psi_t} =  \Big( \bigoplus_{k=0}^N \mathcal F_{b,\psi_t}^{(k)}  \Big) \otimes \mathcal F_a 
=  \bigoplus_{k=0}^N \Big( \mathcal F_{b,\psi_t}^{(k)}  \otimes \mathcal F_a \Big) 
\; \subset  \;
\mathcal G_{\perp \psi_t} = \mathcal{F}_{b,\psi_t}  \otimes   \mathcal{F}_{a}
\; \subset  \;
\mathcal{G} = \mathcal{F}_b \otimes \mathcal{F}_a 
\end{align}
with $\mathcal{F}_b = \bigoplus_{n=0}^{\infty} \left(L^2 \left( \mathbb{R}^{3} \right) \right)^{\otimes_s n}$.
For $\mathcal G^{\leq N}_{\perp \psi_t}$, $\mathcal G_{\perp \psi_t}$ and $\mathcal{G}$ we equally use $b_x$, $b_x^*$, $\mathcal{N}_b$ and $a_k$, $a_k^*$, $\mathcal{N}_a$ to denote the annihilation operator, the creation operator and the number operator of the particles and the phonons respectively.
For given $\psi_t, \varphi_t \in L^2(\mathbb{R}^3)$ such that $\norm{\psi_t}_{L^2(\mathbb{R}^3)} = 1$ the unitary mapping $U_N(t) :\mathcal{H}^{(N)} \rightarrow \mathcal G^{\leq N}_{\perp \psi_t}$, $\Psi_{N} \mapsto \left( \ch^{(k)}(t) \right)_{k=0}^N$ 
with 
\begin{align}
\label{eq:action of the unitary}
\ch^{(k)}(t) = \binom{N}{k}^{1/2} \prod_{i=1}^k q_i(t)
\scp{\psi_t^{\otimes (N-k)}}{W^* \left( \sqrt{N} \varphi_t \right) \Psi_N}_{L^2(\mathbb{R}^{3(N-k)})} 
\; \in \mathcal{F}_{b, \psi_t}^{(k)} \otimes \mathcal{F}_a
\end{align}
factors out the condensate wave function $\psi_t$ and the coherent state with mean particle number $N \norm{\varphi_t}^2_{L^2(\mathbb{R}^2)}$. Here, $p_i(t) =  \ket{\psi_t} \bra{\psi_t}_{i}$ projects the coordinate of the i-th particle onto $\psi_t$, $q_i(t) = 1 - p_i(t)$ and the partial inner product is taken w.r.t. the coordinates $x_{k+1}, \ldots, x_N$ of the particles. The inverse of $U_N(t)$ is given by
\begin{align} 
\label{eq:decomposition many-body state}
\left( \ch^{(k)}(t) \right)_{k=0}^N \mapsto
\Psi_{N} = W\big(\sqrt{N} \varphi_t \big) \sum_{k=0}^N \psi_t^{\otimes (N-k)} \otimes_s \chi_{\le N}^{(k)}(t)  .
\end{align}
Note that the definition of $U_N(t)$ is a straightforward generalization of the unitary map originally introduced in \cite{LNSS2015}. A more detailed introduction of $U_N(t)$ and its properties is given in \cite[Appendix A]{FLMP2021} . If $\Psi_{N,t}$ satisfies  \eqref{eq:Schroedinger equation} and $(\psi_t, \varphi_t)$ is a solution of the Landau--Pekar equations \eqref{eq:Landau Pekar equations} the time evolution of $\ch(t)$ is given by the Schr\"odinger equation
\begin{equation}
\label{eq:Schroedinger equations in double excitation space}
i\partial_t \chi_{\le N}(t) = H^{\leq N}(t) \chi_{\le N }(t) 
\end{equation}
with Hamiltonian $H^{\leq N}(t) = U_N(t) H_{N,\alpha}^{\rm F} U_N(t)^* + i \dot{U}_{N}(t) U_N(t)^*$. For later purposes it is convenient to define $H^{\leq N}(t) = H(t) |_{\mathcal G^{\leq N}_{\perp \psi_t}}$ as a Hamiltonian $H(t)$ on $\mathcal{G}$ whose action is restricted to elements of $\mathcal G^{\leq N}_{\perp \psi_t}$. Note that the definition of $H(t)$ is not unique since the orthogonal projection from $\mathcal{G}$ to $\mathcal G^{\leq N}_{\perp \psi_t}$ has a nontrivial kernel. We set $\alpha = 1$ and use the definition (see \cite[Appendix A]{FLMP2021} for a detailed derivation in case of the Nelson model with ultraviolet cutoff)
\begin{align}
\label{eq:H on the double Fock space}
\begin{split}
H(t) &= \int dx\, b^{*}_x h(t) b_x + \mathcal{N}_a \\
&\quad + \int dx \int dk\,  K(t,k,x) \big( a^{*}_k + a_{-k}\big) b^*_x \big[ 1- N^{-1}\Number_b \big]_+^{1/2} + \text{h.c.}  \\
&\quad + N^{-1/2} \int dx\, b^{*}_x \Big( q(t) \widehat{\Phi} q(t) - \scp{\psi_t}{\widehat{\Phi} \psi_t}_{L^2(\mathbb{R}^3)} \Big) b_x,
\end{split}
\end{align}
where
\begin{align}
\widehat{\Phi}(x) &=  \int dk \, \abs{k}^{-1}
\left( e^{2 \pi ikx}  a_k + e^{- 2 \pi i k x } a^*_k  \right)
\end{align}
is interaction of the Fr\"ohlich Hamiltonian, $q(t) = 1 - \ket{\psi_t} \bra{\psi_t}$ is a projection on $L^2(\mathbb{R}^3)$ with integral kernel
\begin{align}\label{Kernel_q}
q(t,x,y) = \delta(x-y) - \psi_t(x) \overline{\psi_t(y)} 
\end{align}
and 
\begin{align}
K(t,k,x) &= \int dy \, q(t,x, y) \abs{k}^{-1} e^{- 2 \pi i k y} \psi_t(y) .
\end{align}
We use $[x]_+$ to denote the positive part of $x$ and $\text{h.c.}$ to indicate the Hermitian conjugate of the preceding term. Moreover, note that $\int dx \, b_x^* A b_x = \int dx \, \int dy \, b_x^* A(x;y) b_y$ is the usual shorthand notation for operators on $L^2(\mathbb{R}^3)$ with integral kernel $A(x;y)$.
Disregarding all terms of $H(t)$ with more than two annihilation or creation operators leads to the Bogoliubov Hamiltonian
\begin{align}
\label{eq:Bogoliubov Hamiltonian definition}
\begin{split}
H^{\rm{B}} (t) &= \int dx\, b^{*}_x h(t)  b_x + \mathcal{N}_a 
+ \left( \int dx \int dk\,  K(t,k,x) \big( a_k^{*} + a_{-k} \big) b^*_x + \text{h.c.}  \right)  .
\end{split}
\end{align}
The Bogoliubov equation is given by
\begin{align}
\label{eq:Bogoliubov dynamics}
i\partial_t \chb(t) = H^{\rm{B}} (t) \chb(t) 
\quad \text{with} \quad  \chb(0) \in \mathcal{G} .
\end{align}
Note that the Bogoliubov Hamiltonian does, in contrast to $H(t)$, not map $\left\{ \chi \in \mathcal{G} :  \id_{\mathcal{N}_b > N} \chi = 0 \right\}$ into itself. 
 For this reason it is impossible to define the Bogoliubov equation on the truncated Fock space.
The well-posedness of \eqref{eq:Bogoliubov dynamics} and the fact that $\chi \in \mathcal G_{\perp \psi_t}$ implies $\chb(t) \in \mathcal G_{\perp \psi_t}$ for all $t \in \mathbb{R}$ are addressed in Lemma \ref{lemma:well-posedness Bogoliubov dynamics}.

In the following, we will indicate elements of $\mathcal{F}_{b,\psi_t}^{(k)} \otimes \mathcal{F}_a$ and $\mathcal{F}_{b}^{(k)} \otimes \mathcal{F}_a$ by the superscript ${(k)}$.
For example, we will use $\chi^{(k)}$ to denote the sector with exactly $k$ particle excitations of a state $\chi$ in $\mathcal G^{\leq N}_{\perp \psi_t}$, $\mathcal G_{\perp \psi_t}$ and $\mathcal{G}$.

\subsection{Norm approximation of the many-body state}

\begin{theorem}
\label{theorem:Bogoliubov approximation}
Let $(\psi, \varphi) \in H^3(\mathbb{R}^3) \times L_2^2(\mathbb{R}^3)$ such that $\norm{\psi}_{L^2(\mathbb{R}^3)}= 1$ and let $(\psi_t, \varphi_t)$ be the solution of \eqref{eq:Landau Pekar equations} with initial datum $(\psi, \varphi)$. Let $T_b = - \int dx \, \int dy \, b_x^* \Delta(x;y) b_y$, $\widetilde{C} >0$ and $\chi \in \mathcal G_{\perp \psi_0}$ satisfying $\norm{\chi}_{\mathcal{G}} = 1$ as well as
\begin{align}
\label{eq:initial condition moment assumption}
\big\| ( \mathcal{N}_a^{3} + \mathcal{N}_b^{3} + T_b  )^{1/2} \chi \big\|_{\mathcal{G}} \leq \widetilde{C}.
\end{align}
Let $\Psi_{N,t}$ be the solution of the Schr\"odinger equation \eqref{eq:Schroedinger equation} with initial datum 
\begin{align}
\label{eq: norm approximation initial condition Schroedinger equation}
\Psi_N = W\big(\sqrt{N} \varphi \big) \sum_{k=0}^N \psi^{\otimes (N-k)} \otimes_s \chi^{(k)}
\in \mathcal{H}^{(N)} .
\end{align}
Then, there exists a constant $C> 0$ (depending only on $\widetilde{C}$ and $\mathcal{E}[\psi, \varphi]$) such that 
\begin{align}
\label{eq:norm approximation main result}
\norm{\Psi_{N,t} - \Psi_{N,t}^{\rm{B}}}_{\mathcal{H}^{(N)}} \leq C e^{C f(t)} N^{-1/8}  .
\end{align}
Here, $f(t) = \int_0^t ds \, \left( \norm{\psi_s}_{H^3(\mathbb{R}^3)}^2 + \norm{\varphi_s}_{L_2^2(\mathbb{R}^3)}^2 \right)$ and
\begin{align}
\label{eq:norm approximation main result approximating many-body state}
\Psi_{N,t}^{\rm{B}} = W\big(\sqrt{N} \varphi_t \big) \sum_{k=0}^N \psi_t^{\otimes (N-k)} \otimes_s \chb^{(k)}(t)  ,
\end{align}
where $\chb(t)$ is the solution of \eqref{eq:Bogoliubov dynamics} with initial datum $\chi$.
\end{theorem}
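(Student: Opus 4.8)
The plan is to transport the comparison into the excitation space $\mathcal G^{\leq N}_{\perp \psi_t}$ and to run a Grönwall argument against a number-truncated Bogoliubov dynamics. Since $U_N(t)$ is unitary, the approximating state satisfies $\Psi_{N,t}^{\rm B} = U_N(t)^* \big( \chb^{(k)}(t) \big)_{k=0}^{N}$, while $U_N(t)\Psi_{N,t} = \ch(t)$ solves \eqref{eq:Schroedinger equations in double excitation space}; hence the left-hand side of \eqref{eq:norm approximation main result} equals $\big\| \ch(t) - \big( \chb^{(k)}(t) \big)_{k=0}^{N} \big\|_{\mathcal{G}}$. I would then introduce a Bogoliubov dynamics $\chbm(t)$ whose generator is the number-truncated operator $\id_{\mathcal{N}_b \leq M}\, H^{\rm B}(t)\, \id_{\mathcal{N}_b \leq M}$ (with $M \leq N$), so that $\chbm(t)$ stays in the sectors $k \leq M$ and admits good moment bounds, and split the error by the triangle inequality into $\| \ch(t) - \chbm(t) \|_{\mathcal{G}}$ and $\| \chbm(t) - \chb(t) \|_{\mathcal{G}}$ (the sectors $k > N$ being harmlessly discarded, since they cost only $\| \id_{\mathcal{N}_b > N}\, \chb(t)\|$, controlled by the third moment). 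The initial mismatch, namely $\ch(0) = \big(\chi^{(k)}\big)_{k=0}^N$ versus $\chbm(0) = \id_{\mathcal{N}_b\le M}\chi$, is likewise controlled by \eqref{eq:initial condition moment assumption}.

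For the first piece I would differentiate in time. Both states solve Schrödinger equations with self-adjoint generators $H^{\leq N}(t)$ and $H^{\rm B}_M(t)$, so the diagonal contribution cancels and
\begin{align*}
\frac{d}{dt} \norm{ \ch(t) - \chbm(t) }_{\mathcal{G}}^2 = 2 \Im \SCP{ \ch(t) - \chbm(t) }{ \big( H^{\leq N}(t) - H^{\rm B}_M(t) \big) \chbm(t) } .
\end{align*}
The operator $H^{\leq N}(t) - H^{\rm B}_M(t)$ collects exactly the terms absent from the Bogoliubov Hamiltonian: the cubic term $N^{-1/2} \int dx\, b_x^* \big( q(t) \widehat{\Phi} q(t) - \scp{\psi_t}{\widehat{\Phi} \psi_t} \big) b_x$, the discrepancy between $[1 - N^{-1}\mathcal{N}_b]_+^{1/2}$ and $1$ in the linear-in-$a$ coupling, and the boundary terms produced by the cut-offs at $M$ and $N$. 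After Cauchy--Schwarz it suffices to bound $\big\| \big( H^{\leq N}(t) - H^{\rm B}_M(t) \big) \chbm(t) \big\|_{\mathcal{G}}$; each summand carries either an explicit power of $N^{-1}$ or is supported on the high sectors $\mathcal{N}_b \gtrsim M$, and is therefore controlled by $N$- and $M$-dependent multiples of moments of $\mathcal{N}_a$, $\mathcal{N}_b$ and the kinetic energy $T_b$ evaluated on $\chbm(t)$. Integrating the resulting Grönwall inequality bounds the first piece; the second piece $\| \chbm(t) - \chb(t) \|_{\mathcal{G}}$ is treated by a parallel estimate whose driving term is again supported on $\mathcal{N}_b \gtrsim M$ and hence controlled by the same moments of the untruncated flow.

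The two supporting ingredients, to be established separately, are a priori bounds: propagation of $\SCP{\chbm(t)}{ \big( \mathcal{N}_a^{3} + \mathcal{N}_b^{3} + T_b \big) \chbm(t)}$, and the analogue for $\chb(t)$, with at most exponential-in-$f(t)$ growth starting from \eqref{eq:initial condition moment assumption}. These follow by differentiating the moments along the (truncated) Bogoliubov flow and estimating the commutators of $H^{\rm B}(t)$ with $\mathcal{N}_a^{3}$, $\mathcal{N}_b^{3}$ and $T_b$. Choosing $M$ as a suitable power of $N$ balances the $N^{-1/2}$-gain of the cubic contribution against the $M$-dependent truncation loss, and optimizing yields the rate $N^{-1/8}$.

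The hardest part will be the operator estimates for the singular interaction. Because the form factor $\abs{k}^{-1}$ is not square integrable at large $\abs{k}$, the coupling terms in $H(t)$ and $H^{\rm B}(t)$ --- both the linear-in-$a$ term with kernel $K(t,k,x)$ and the cubic term containing $\widehat{\Phi}$ --- cannot be bounded by the number operators alone. I would control them through the commutator method of Lieb and Yamazaki \cite{LY1958}, trading one factor of $\abs{k}^{-1}$ for a momentum operator that is absorbed by $T_b$ via the operator bound from \cite[Lemma 10]{FS2014}; this is precisely what makes the kinetic-energy propagation indispensable and couples it to the number moments. Propagating $T_b$ along the Bogoliubov flow while simultaneously taming the singular coupling is the central technical difficulty.
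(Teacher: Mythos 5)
Your overall architecture coincides with the paper's: conjugate by the excitation map, insert a number--truncated Bogoliubov dynamics as an intermediary, run a Gr\"onwall/Duhamel argument on each piece of the triangle inequality, control the singular form factor by the Lieb--Yamazaki commutator method combined with the operator bound from \cite[Lemma 10]{FS2014}, and balance $M$ against $N$. The genuine gap lies in your two ``supporting ingredients''. You assume that $\scp{\chbm(t)}{\big(\mathcal{N}_a^3+\mathcal{N}_b^3+T_b\big)\chbm(t)}$ \emph{and its analogue for the untruncated flow} $\chb(t)$ propagate with at most exponential-in-$f(t)$ growth. For this interaction that is exactly what is not available, and it is the reason the truncation is introduced at all. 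Differentiating $\norm{(\mathcal{N}+1)^{k/2}\chb(t)}^2$ produces the pair-creation term $\int dx\int dk\, K\, a_k^* b_x^* + \hc$ sandwiched between powers of $\mathcal{N}$; because $\abs{k}^{-1}$ is not square integrable, its high-mode part is only controlled by $\Lambda^{-1/2}\norm{(\mathcal{N}+T_b)^{1/2}\,\cdot\,}\,\norm{(\mathcal{N}+1)^{k-\frac12}\,\cdot\,}$, and closing the Gr\"onwall then requires either mixed moments of the form $\scp{\cdot}{(\mathcal{N}+1)^{k-1}T_b\,\cdot}$ (whose propagation forces you to commute $T_b$ through the singular coupling) or the a priori bound $\mathcal{N}\leq M$ to write $\norm{(\mathcal{N}+1)^{k-\frac12}\,\cdot\,}\leq M^{\frac{k-1}{2}}\norm{(\mathcal{N}+1)^{\frac{k}{2}}\,\cdot\,}$. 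Accordingly the paper's Lemma on the truncated dynamics only yields $\norm{(\mathcal{N}+1)\chbm(t)}\leq Ce^{Cf(t)}M^{1/4}$ and $\norm{(\mathcal{N}+1)^{3/2}\chbm(t)}\leq Ce^{Cf(t)}M^{5/8}$ --- bounds with explicit losses in $M$, not uniform ones --- while for the untruncated flow only the first moment $\scp{\chb(t)}{(\mathcal{N}+T_b+1)\chb(t)}$ is propagated.

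This gap is load-bearing in your treatment of $\norm{\chb(t)-\chbm(t)}$: you propose to extract smallness from the driving term being supported on high sectors of the \emph{untruncated} state and to pay with its higher moments, which are precisely the unavailable quantities. The paper instead arranges the Duhamel term so that the spectral projection $\id_{M-2\leq\mathcal{N}\leq M}$ falls on the \emph{truncated} state, where $\id_{M-2\leq\mathcal{N}\leq M}\leq (M-2)^{-1}(\mathcal{N}+1)$ costs only the $M^{5/8}$-loss bound and produces a net gain $M^{-3/8}$, while the untruncated state is only ever hit with $(\mathcal{N}_b+T_b)^{1/2}$. Two further points. First, the truncation must be in the total number $\mathcal{N}=\mathcal{N}_a+\mathcal{N}_b$ rather than in $\mathcal{N}_b$ alone: the coupling contains $a^*b$ and $ab^*$ in addition to $a^*b^*$, so $\mathcal{N}_a$ is not slaved to $\mathcal{N}_b$, and the conversion $\norm{(\mathcal{N}+1)^{k-\frac12}\,\cdot\,}\leq M^{\frac{k-1}{2}}\norm{(\mathcal{N}+1)^{\frac{k}{2}}\,\cdot\,}$ needs the full $\mathcal{N}$ cut off. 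Second, the asserted rate $N^{-1/8}$ is not actually derived from your scheme: with your (unjustified) uniform third-moment bounds the optimization would give something better, whereas the exponent $1/8$ arises specifically from balancing $M^{-3/8}+N^{-1/2}M^{3/8}$ at $M=N^{2/3}$, the powers $3/8$ and $5/8$ being artifacts of the $M$-losses you assumed away.
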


\begin{remark}
If we choose $\chi$ to be the vacuum of $\mathcal{G}$ we obtain a many-body initial state $\Psi_N = \psi^{\otimes N} \otimes W (\sqrt{N} \varphi) \Omega$ of Pekar product type.
\end{remark}

\begin{remark} Since $\chi = (\chi^{(k)})_{k\ge 0}$ is normalized to one, $\Psi_N$ is not necessarily normalized.  Assumption \eqref{eq:initial condition moment assumption} and $\norm{\psi}_{L^2(\mathbb{R}^3)} = 1$, however, imply that
\begin{align}
\norm{\Psi_{N}}_{\mathcal{H}^{(N)}}^2 = \norm{(\chi^{(k)})_{k=0}^N}_{\mathcal{G}_{\leq N}}^2
= \norm{\chi}_{\mathcal{G}}^2
- \norm{\id_{\mathcal{N}_b > N} \chi}^2_{\mathcal{G}} \rightarrow 1
\quad \text{as} \; N \rightarrow \infty
\end{align}
because $\norm{\id_{\mathcal{N}_b > N} \chi}_{\mathcal{G}} \leq N^{-3/2} \norm{\mathcal{N}_b^{3/2} \chi}_{\mathcal{G}}$.
\end{remark}

\section{Preliminaries}
\label{sec: Notation and basic estimates}

\subsection{Notation}
We introduce the usual bosonic creation and annihilation operators ($f \in L^2(\mathbb{R}^3)$)
\begin{align}
\begin{split}
a(f) &= \int d^3k  \, \overline{f(k)} a_k, \quad
a^*(f) = \int d^3k \, f(k) a^*_k , 
\\
b(f) &= \int d^3x  \, \overline{f(x)} b_x, \quad
b^*(f) = \int d^3x \, f(x) b^*_x .
\end{split}
\end{align}
They satisfy the well-known inequalities
\begin{align}
\begin{split}
\label{eq: bound for annihialtion and creation operators}
\norm{a(f) \chi}_{\mathcal{G}} & \leq \norm{f}_{L^2(\mathbb R^3)} \norm{\mathcal{N}_a^{1/2} \chi}_{\mathcal{G}} , \quad 
\norm{a^*(f) \chi}_{\mathcal{G}} \leq \norm{f}_{L^2(\mathbb R^3)} \norm{\left(\mathcal{N}_a + 1 \right)^{1/2} \chi}_{\mathcal{G}} ,
\\
\norm{b(f) \chi}_{\mathcal{G}} & \leq \norm{f}_{L^2(\mathbb R^3)} \norm{\mathcal{N}_b^{1/2} \chi}_{\mathcal{G}} , \quad 
\norm{b^*(f) \chi}_{\mathcal{G}} \leq \norm{f}_{L^2(\mathbb R^3)} \norm{\left(\mathcal{N}_b + 1 \right)^{1/2} \chi}_{\mathcal{G}} ,
\end{split}
\end{align}
for any $\chi \in \mathcal{G}$. 
We, moreover, define the total number of excitations operator 
\begin{align}
\Number = \Number_a  + \Number_b 
\end{align}
and recall the second quantization of the particle's kinetic energy
\begin{align}
T_b &= - \int dx \, b_x^* \Delta b_x .
\end{align}
In addition, it is convenient to introduce
\begin{align}
\label{eq:definition G-x}
G_{x}(k) &= e^{- 2 i \pi k x} \abs{k}^{-1} ,
\end{align}
allowing to write the interaction of the Fr\"ohlich Hamiltonian as
$\widehat{\Phi}(x) = a \left( G_x \right) + a^* \left( G_x \right)$. It holds that
\begin{align}
\label{eq:bound for G}
\sup_{x \in \mathbb{R}^3} \norm{\id_{\abs{\cdot} \leq \Lambda} G_x}_{L^2(\mathbb{R}^3)}
&= \norm{\id_{\abs{\cdot} \leq \Lambda} \abs{\cdot}^{-1}}_{L^2(\mathbb{R}^3)}
=\sqrt{4 \pi \Lambda} 
\end{align}
and 
\begin{align}
\label{eq:estimate expectation value of G}
\abs{\scp{\psi}{G_{\cdot}(k) \tilde{\psi}}_{L^2(\mathbb{R}^3)}}
&\leq   \frac{1 + \abs{k}}{\abs{k} (1 + k^2)}
\left( \norm{\psi}_{H^1(\mathbb{R}^3)}  \| \tilde{\psi} \|_{L^2(\mathbb{R}^3)}
+ \norm{\psi}_{L^2(\mathbb{R}^3)} \| \tilde{\psi} \|_{H^1(\mathbb{R}^3)} \right) ,
\end{align}
where the latter is obtained by means of \eqref{eq:commutator with G-x} and integration by parts.

The norm and scalar product of $\mathcal{G}$ will be denoted by $\norm{\cdot}$ and $\scp{\cdot}{\cdot}$. The symbol $\tr$ is used to denote the trace over $L^2(\mathbb{R}^3)$ and the trace norm of a trace class operator $\omega: L^2(\mathbb{R}^3) \rightarrow L^2(\mathbb{R}^3)$ is written as $\tr \abs{\omega}$. We use the notation $\dot{\psi}_t$ to denote the derivative of a function $\psi_t$ with respect to time. Moreover, recall that $H^m(\mathbb{R}^3)$ with $m \in \mathbb{N}$ denotes the Sobolev space of order $m$ and $L_m^2(\mathbb{R}^3)$ is a weighted $L^2$-space with norm $\| \varphi\| _{L_m^2(\mathbb{R}^3)} = \| ( 1 + \abs{\,\cdot\,}^2 )^{m/2} \varphi \|_{L^2(\mathbb{R}^3)}$.

\subsection{Interaction and Hamiltonian estimates}

In this section we provide preliminary estimates which are needed to prove the main results. We start with the interaction terms of the Hamiltonian $H(t)$. The part with two annihilation and creation operators can be controlled by  the following bounds.
\begin{lemma}
\label{lemma:preliminary estimates}
Let $\Lambda \geq 1$, $\Psi, \chi \in \mathcal{G}$, $\psi_t \in H^1(\mathbb{R}^3)$ such that $\norm{\psi_t}_{L^2(\mathbb{R}^3)} = 1$. Then, 
\begin{align}
\label{eq:preliminary estimates quadratic term 1}
&\abs{\scp{\chi}{\int dx \, \int_{\abs{k} \geq \Lambda} dk \, \psi_t(x) G_x(k) \left( a_k^* + a_{-k} \right) b_x^* \Psi}}
\leq \frac{C}{\sqrt{\Lambda}} \norm{\psi_t}_{H^1(\mathbb{R}^3)} 
\norm{\left( \mathcal{N}_b + T_b \right)^{1/2} \chi} \norm{\left( \mathcal{N}_a + 1 \right)^{1/2} \Psi} ,
\\
\label{eq:preliminary estimates quadratic term 2}
&\abs{\scp{\chi}{\int dx \, \int_{\abs{k} \leq \Lambda} dk \, \psi_t(x) G_x(k) \left( a_k^* + a_{-k} \right) b_x^* \Psi} }
\leq C \norm{\mathcal{N}_b^{1/2} \chi} 
\Big[
\sqrt{\Lambda} \norm{\Psi} + \norm{\psi_t}_{H^1} \norm{\mathcal{N}_a^{1/2} \Psi}
\Big] ,
\\
\label{eq:preliminary estimates quadratic term 3}
&\abs{\scp{\chi}{ \int dx \, \int dk \, K(t,k,x) \left(a^*_k + a_{-k} \right) b^*_x  \Psi}}
\leq C \norm{\psi_t}_{H^1(\mathbb{R}^3)}
\norm{\left( \mathcal{N}_b + T_b \right)^{1/2} \chi}
\norm{\left( \mathcal{N}_a + 1 \right)^{1/2} \Psi} .
\end{align}
\end{lemma}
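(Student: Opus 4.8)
The three inequalities have a common backbone, and differ only in how they deal with the ultraviolet singularity of the form factor. In each case the particle creation operator $b_x^*$ is moved onto $\chi$ via $\scp{\chi}{b_x^* X}=\scp{b_x\chi}{X}$, after which the identities $\int dx\,\norm{b_x\chi}^2=\norm{\mathcal{N}_b^{1/2}\chi}^2$ and $\int dx\,\norm{\nabla_x b_x\chi}^2=\norm{T_b^{1/2}\chi}^2$ convert the $x$–integral into the excitation number and the excitation kinetic energy; the phonon factor $a_k^*+a_{-k}$ is a smeared field operator controlled by the elementary bounds \eqref{eq: bound for annihialtion and creation operators}. The only genuine difficulty is that $G_x(k)=e^{-2\pi ikx}\abs{k}^{-1}$ is not square–integrable in $k$ at infinity, so a naive estimate by $\norm{G_x}_{L^2}$ diverges; the splitting at $\abs{k}=\Lambda$ is precisely there to treat the large– and small–momentum regimes by different means.

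For the large–momentum bound \eqref{eq:preliminary estimates quadratic term 1} the plan is to use the Lieb--Yamazaki commutator trick in the pointwise form
\[
G_x(k)=\frac{i}{2\pi}\,\frac{k}{\abs{k}^{3}}\cdot\nabla_x e^{-2\pi ikx},
\]
and to integrate by parts in $x$, moving $\nabla_x$ onto $\psi_t(x)\,b_x^*$. This produces one term with $(\nabla\psi_t)(x)\,b_x^*$ and one with $\psi_t(x)\,\nabla_x b_x^*$, each now carrying the phonon smearing function $\tfrac{k}{\abs{k}^{3}}e^{-2\pi ikx}\,\id_{\abs{k}\geq\Lambda}$, whose $L^2(dk)$–norm equals $(4\pi/\Lambda)^{1/2}$. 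A Cauchy--Schwarz in $x$, estimating the $b$–operators against $\mathcal{N}_b^{1/2}$ (first term) and $T_b^{1/2}$ (second term) and the field against $(\mathcal{N}_a+1)^{1/2}$, then yields the factor $\Lambda^{-1/2}\norm{\psi_t}_{H^1}\norm{(\mathcal{N}_b+T_b)^{1/2}\chi}\norm{(\mathcal{N}_a+1)^{1/2}\Psi}$, with $\norm{\nabla\psi_t}_{L^2}\leq\norm{\psi_t}_{H^1}$ supplying the Sobolev norm. The extra power $\abs{k}^{-1}$ gained from the commutator identity is exactly what makes the weight square–integrable at infinity and produces the decay in $\Lambda$.

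For the small–momentum bound \eqref{eq:preliminary estimates quadratic term 2} no regularisation is needed: in three dimensions $\abs{k}^{-1}$ is square–integrable at the origin, so $\norm{\id_{\abs{k}\leq\Lambda}G_x}_{L^2}=(4\pi\Lambda)^{1/2}$ by \eqref{eq:bound for G}. Here I would absorb $b_x^*$ into $\chi$ (which, after Cauchy--Schwarz in $x$ and $\norm{\psi_t}_{L^2}=1$, gives $\norm{\mathcal{N}_b^{1/2}\chi}$) and bound the annihilation and creation halves of $a_k^*+a_{-k}$ separately by \eqref{eq: bound for annihialtion and creation operators}; the two summands on the right correspond to these two halves, the creation part producing the $\sqrt{\Lambda}\,\norm{\Psi}$–type contribution and the annihilation part the one controlled by $\norm{\mathcal{N}_a^{1/2}\Psi}$.

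Finally, \eqref{eq:preliminary estimates quadratic term 3} I would reduce to the first two. Because $q(t)=1-\ketbr{\psi_t}$, the kernel factorises as
\[
K(t,k,x)=\psi_t(x)\Big(G_x(k)-\scp{\psi_t}{G_{\cdot}(k)\psi_t}\Big).
\]
The first piece $\psi_t(x)G_x(k)$ is exactly the integrand of \eqref{eq:preliminary estimates quadratic term 1}--\eqref{eq:preliminary estimates quadratic term 2}; splitting at $\Lambda=1$ and adding the two bounds, together with $\norm{\psi_t}_{H^1}\geq\norm{\psi_t}_{L^2}=1$ and $\mathcal{N}_b^{1/2}\leq(\mathcal{N}_b+T_b)^{1/2}$, already gives the claimed right–hand side for this piece. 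The second piece is the rank–one term $b^*(\psi_t)\int dk\,\scp{\psi_t}{G_{\cdot}(k)\psi_t}(a_k^*+a_{-k})$, and the crucial point is that the scalar $k\mapsto\scp{\psi_t}{G_{\cdot}(k)\psi_t}$ now lies in $L^2(dk)$: by \eqref{eq:estimate expectation value of G} its modulus is bounded by $\tfrac{2(1+\abs{k})}{\abs{k}(1+k^2)}\norm{\psi_t}_{H^1}$, whose square is integrable at both the origin and infinity, so its $L^2$–norm is $\lesssim\norm{\psi_t}_{H^1}$. Absorbing $b^*(\psi_t)$ into $\chi$ and applying \eqref{eq: bound for annihialtion and creation operators} to the now genuinely $L^2$ smeared phonon operator closes the estimate with no cutoff. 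The main obstacle throughout is the non–square–integrable form factor: it is the commutator identity and the consequent appearance of $T_b$ on the excitation side that make \eqref{eq:preliminary estimates quadratic term 1} work, and it is the square–integrability of the projected scalar $\scp{\psi_t}{G_{\cdot}(k)\psi_t}$ that lets \eqref{eq:preliminary estimates quadratic term 3} hold uniformly in the cutoff.
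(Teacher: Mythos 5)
Your treatments of \eqref{eq:preliminary estimates quadratic term 1} and \eqref{eq:preliminary estimates quadratic term 3} are correct and essentially coincide with the paper's proof: for the first you use the Lieb--Yamazaki trick in the pure-gradient form $G_x(k)=\tfrac{i}{2\pi}\tfrac{k}{|k|^3}\cdot\nabla_x e^{-2\pi i kx}$ rather than the resolvent form \eqref{eq:commutator with G-x} used in the paper, but on the region $|k|\ge\Lambda\ge 1$ the two are interchangeable and give the same $\Lambda^{-1/2}$ decay; for the third, the factorization of $K$ and the use of \eqref{eq:estimate expectation value of G} for the rank-one piece are exactly the paper's argument.

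The gap is in \eqref{eq:preliminary estimates quadratic term 2}. Bounding the annihilation half by the elementary estimate \eqref{eq: bound for annihialtion and creation operators} gives $\| a(\id_{|\cdot|\le\Lambda}G_x)\Psi\|\le\|\id_{|\cdot|\le\Lambda}G_x\|_{L^2}\|\mathcal{N}_a^{1/2}\Psi\|=\sqrt{4\pi\Lambda}\,\|\mathcal{N}_a^{1/2}\Psi\|$, i.e.\ a factor $\sqrt\Lambda$ in front of $\|\mathcal{N}_a^{1/2}\Psi\|$ as well. The claimed inequality, however, has $\|\psi_t\|_{H^1}\|\mathcal{N}_a^{1/2}\Psi\|$ with \emph{no} $\Lambda$-dependence; note that your argument never produces an $H^1$-norm of $\psi_t$ in this estimate (you only use $\|\psi_t\|_{L^2}=1$), which signals that the key mechanism is missing. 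The paper obtains the $\Lambda$-uniform bound from the operator inequality $a^*(\id_{|\cdot|\le\Lambda}G_x)\,a(\id_{|\cdot|\le\Lambda}G_x)\le C(1-\Delta_x)\mathcal{N}_a$, a consequence of \cite[Lemma 10]{FS2014}, which yields $\big\|\int_{|k|\le\Lambda}dk\,G_x(k)a_{-k}\psi_t(x)\Psi\big\|\le C\|(1-\Delta_x)^{1/2}\psi_t(x)\mathcal{N}_a^{1/2}\Psi\|$ and hence the $H^1$-norm after integrating in $x$; the creation half is then reduced to the annihilation half via the CCR identity $\|a^*(f)\Psi\|^2=\|a(f)\Psi\|^2+\|f\|_{L^2}^2\|\Psi\|^2$, and it is only this commutator remainder that produces the $\sqrt\Lambda\,\|\Psi\|$ contribution. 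Without this input your route only yields $C\sqrt\Lambda\,\|\mathcal{N}_b^{1/2}\chi\|\,\|(\mathcal{N}_a+1)^{1/2}\Psi\|$, which is strictly weaker and insufficient for the applications where the two summands of \eqref{eq:preliminary estimates quadratic term 2} must be kept separate with their distinct $\Lambda$-powers (e.g.\ the bound on \eqref{eq:Gronwall estimate for moments of the number operator term 2 a} in the proof of Lemma \ref{lemma:well-posedness truncated Bogoliubov dynamics}). Since \eqref{eq:preliminary estimates quadratic term 3} only needs \eqref{eq:preliminary estimates quadratic term 2} at the fixed value $\Lambda=1$, that part of your argument survives, but \eqref{eq:preliminary estimates quadratic term 2} itself is not proven as stated.
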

The cubic interaction term of $H(t)$ is treated by means of
\begin{lemma}
\label{lemma: cubic interaction}
Let $\Lambda \geq 1$, $\Psi, \chi \in \mathcal{G}$, $\psi_t \in H^1(\mathbb{R}^3)$ such that $\norm{\psi_t}_{L^2(\mathbb{R}^3)} = 1$. Then, 
\begin{align}
\label{eq:preliminary estimate cubic term 1}
&\abs{\scp{\chi}{\int dx\, b^{*}_x \Big( q(t) \widehat{\Phi} q(t) - \scp{\psi_t}{\widehat{\Phi} \, \psi_t}_{L^2(\mathbb{R}^3)} \Big) b_x \Psi}}
\nonumber \\
&\quad \leq 
C \norm{\psi_t}_{H^1(\mathbb{R}^3)}
\left( 
\Lambda^{\frac{1}{2}} \norm{\left( \mathcal{N}_b + T_b \right)^{\frac{1}{2}} \chi}
\norm{\left( \mathcal{N}_a + 1 \right)^{\frac{1}{2}} \mathcal{N}_b^{\frac{1}{2}} \Psi}
+ \Lambda^{- \frac{1}{2}} \norm{\left( \mathcal{N}_a + 1 \right)^{\frac{1}{2}} \mathcal{N}_b^{\frac{1}{2}} \chi} \norm{\left( \mathcal{N}_b + T_b \right)^{\frac{1}{2}} \Psi}
\right) .
\end{align}
For $\varepsilon >0$ this implies
\begin{align}
\label{eq:preliminary estimate cubic term 2}
\pm  \int dx\, b^{*}_x \Big( q(t) \widehat{\Phi} q(t) - \scp{\psi_t}{\widehat{\Phi} \psi_t}_{L^2(\mathbb{R}^3)} \Big) b_x
&\leq \varepsilon \left( \mathcal{N}_b + T_b \right) 
+  C \varepsilon^{-1} \norm{\psi_t}_{H^1(\mathbb{R}^3)}^2 \left( \mathcal{N}_a + 1 \right) \mathcal{N}_b .
\end{align}
\end{lemma}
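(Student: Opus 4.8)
The plan is to bound the sesquilinear form $\scp{\chi}{\int dx\, b_x^*\big(q(t)\widehat{\Phi}q(t)-\scp{\psi_t}{\widehat{\Phi}\psi_t}\big)b_x\Psi}$ directly. Recalling $\widehat{\Phi}(x)=a(G_x)+a^*(G_x)$ with $G_x$ from \eqref{eq:definition G-x}, I would first expand the projections through $q(t,x,y)=\delta(x-y)-\psi_t(x)\overline{\psi_t(y)}$ as in \eqref{Kernel_q}. This splits the operator into a \emph{diagonal} piece $\int dx\, b_x^*\widehat{\Phi}(x)b_x$ plus correction terms in which at least one $\psi_t$ is contracted against $\widehat{\Phi}$, carrying factors $b(\psi_t)$, $b^*(\psi_t)$ or the phonon field operator $\scp{\psi_t}{\widehat{\Phi}\psi_t}$. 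The subtracted term removes the purely diagonal scalar contribution, leaving $\scp{\psi_t}{\widehat{\Phi}\psi_t}\big(b^*(\psi_t)b(\psi_t)-\mathcal{N}_b\big)$, which is harmless: by \eqref{eq:estimate expectation value of G} one has $\norm{\scp{\psi_t}{\widehat{\Phi}\psi_t}\Phi'}\leq C\norm{\psi_t}_{H^1}\norm{(\mathcal{N}_a+1)^{1/2}\Phi'}$ and $b^*(\psi_t)b(\psi_t),\mathcal{N}_b\leq\mathcal{N}_b$. Throughout, I would pair the two $b$-operators one with $\chi$ and one with $\Psi$ (writing $\int dx\,\scp{b_x\chi}{\widehat{\Phi}(x)b_x\Psi}$) so that each state receives exactly one factor $\mathcal{N}_b^{1/2}$, and I would split the form factor $G_x=\id_{\abs{\cdot}\leq\Lambda}G_x+\id_{\abs{\cdot}>\Lambda}G_x$.

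For the low-momentum part I would use $\norm{\id_{\abs{\cdot}\leq\Lambda}G_x}_{L^2}=\sqrt{4\pi\Lambda}$ from \eqref{eq:bound for G} together with the creation/annihilation bounds \eqref{eq: bound for annihialtion and creation operators}. The only subtlety is bookkeeping: the creation part $a^*(G_x)$ must be kept acting on $\Psi$ (so that it is bounded by $(\mathcal{N}_a+1)^{1/2}$) rather than moved onto $\chi$, which keeps $\mathcal{N}_a$ on the same side as in the first term of \eqref{eq:preliminary estimate cubic term 1}. This yields $\Lambda^{1/2}\norm{(\mathcal{N}_b+T_b)^{1/2}\chi}\norm{(\mathcal{N}_a+1)^{1/2}\mathcal{N}_b^{1/2}\Psi}$, where $T_b$ is at this stage only a harmless over-estimate of $\mathcal{N}_b$.

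The high-momentum part is the main obstacle, since $\id_{\abs{\cdot}>\Lambda}\abs{\cdot}^{-1}$ is not square integrable. Here I would apply the Lieb--Yamazaki commutator method \cite{LY1958}: using $e^{\pm2\pi ikx}\abs{k}^{-1}=\tfrac{\mp i}{2\pi}\tfrac{k}{\abs{k}^3}\cdot\nabla_x e^{\pm2\pi ikx}$ and integrating by parts in $x$ trades the singular $\abs{k}^{-1}$ for $\abs{k}^{-2}$, so that $\norm{\id_{\abs{\cdot}>\Lambda}\abs{\cdot}^{-2}}_{L^2}^2\leq C\Lambda^{-1}$ is finite, at the price of a derivative $\nabla_x$ landing either on an excitation operator $b_x$, reconstituting $T_b^{1/2}$, or on $\psi_t$ inside $q(t)$, producing $\norm{\psi_t}_{H^1}$. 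When the derivative falls on the $b_x$ paired with $\Psi$ and the phonon operator is moved onto $\chi$, one gets exactly the second term $\Lambda^{-1/2}\norm{(\mathcal{N}_a+1)^{1/2}\mathcal{N}_b^{1/2}\chi}\norm{(\mathcal{N}_b+T_b)^{1/2}\Psi}$; when instead it falls on the $b_x$ paired with $\chi$, the resulting $\Lambda^{-1/2}\norm{T_b^{1/2}\chi}\norm{\mathcal{N}_a^{1/2}\mathcal{N}_b^{1/2}\Psi}$ is absorbed into the first term using $\Lambda^{-1/2}\leq\Lambda^{1/2}$ for $\Lambda\geq1$. The correction terms are handled by the same splitting, with \eqref{eq:estimate expectation value of G} controlling any full $\psi_t$-contraction; collecting the prefactor via $\norm{\psi_t}_{L^2}=1\leq\norm{\psi_t}_{H^1}$ gives \eqref{eq:preliminary estimate cubic term 1}.

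Finally, for \eqref{eq:preliminary estimate cubic term 2} I would note that the operator is self-adjoint, so it suffices to estimate $\abs{\scp{\Psi}{\cdots\Psi}}$. Setting $\chi=\Psi$ in \eqref{eq:preliminary estimate cubic term 1}, both terms collapse to the same product $\norm{(\mathcal{N}_b+T_b)^{1/2}\Psi}\norm{(\mathcal{N}_a+1)^{1/2}\mathcal{N}_b^{1/2}\Psi}$ with combined prefactor $\Lambda^{1/2}+\Lambda^{-1/2}$; choosing $\Lambda=1$ and applying Young's inequality $2ab\leq\varepsilon a^2+\varepsilon^{-1}b^2$ separates this into $\varepsilon\norm{(\mathcal{N}_b+T_b)^{1/2}\Psi}^2+C\varepsilon^{-1}\norm{\psi_t}_{H^1}^2\norm{(\mathcal{N}_a+1)^{1/2}\mathcal{N}_b^{1/2}\Psi}^2$. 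Since $(\mathcal{N}_a+1)$ and $\mathcal{N}_b$ commute, this equals $\varepsilon\scp{\Psi}{(\mathcal{N}_b+T_b)\Psi}+C\varepsilon^{-1}\norm{\psi_t}_{H^1}^2\scp{\Psi}{(\mathcal{N}_a+1)\mathcal{N}_b\Psi}$, which is the claimed quadratic-form inequality.
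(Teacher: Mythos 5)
Your argument is correct and follows essentially the same route as the paper's proof: expand $q(t)=1-\ketbr{\psi_t}$, control the fully contracted terms via \eqref{eq:estimate expectation value of G}, split the form factor at $\abs{k}=\Lambda$ using \eqref{eq:bound for G} for low momenta and the Lieb--Yamazaki integration by parts for high momenta, and deduce \eqref{eq:preliminary estimate cubic term 2} by setting $\chi=\Psi$, $\Lambda=1$ and applying Young's inequality. The only cosmetic differences are that you use the bare identity $\abs{k}^{-1}=c\,k\abs{k}^{-3}\cdot\nabla_x$ instead of the paper's variant \eqref{eq:commutator with G-x}, and that you group the $p\widehat{\Phi}p$ contribution with the subtracted scalar term rather than treating them separately; neither affects the estimate.
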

The proofs of Lemma \ref{lemma:preliminary estimates} and Lemma \ref{lemma: cubic interaction} are given in Appendix \ref{appendix:estimates for the interaction}. Inequalities \eqref{eq:preliminary estimates quadratic term 1} and \eqref{eq:preliminary estimate cubic term 1} are obtained by means of the commutator method of Lieb and Yamazaki \cite{LY1958}. Estimate \eqref{eq:preliminary estimates quadratic term 2} is derived with the help of \cite[Lemma 10]{{FS2014}}. The advantage in comparison to estimates by means of the commutator method of Lieb and Yamazaki is that the kinetic energy of the particles, $T_b$, does not appear in \eqref{eq:preliminary estimates quadratic term 2}. The two remaining inequalities follow almost immediately from the previous estimates.

By means of Lemma \ref{lemma:preliminary estimates} we obtain the following estimates for $H^{\rm B}(t)$ which will later be used to prove the well-posedness  of the Bogoliubov equation.

\begin{lemma}
\label{lemma:estimates H-0(t)}
Let $\varepsilon >0$ be arbitrary. Then, there exists a constant $C_\varepsilon > 0$, depending on $\varepsilon$, such that  
\begin{align}
\label{eq:estimates H-0(t) 1}
\pm i \left[ \mathcal{N} , H^{\rm{B}}(t) \right] 
&\leq \varepsilon \,  T_b + C_{\varepsilon} \norm{\psi_t}_{H^1(\mathbb{R}^3)}^2 \left( \mathcal{N} + 1  \right) ,
\\
\label{eq:estimates H-0(t) 2}
\pm \left( H^{\rm{B}}(t) - T_b \right)
&\leq \varepsilon T_b + C_{\varepsilon} 
\left( \norm{\psi_t}_{H^1(\mathbb{R}^3)}^2 + \norm{\varphi_t}_{L^2(\mathbb{R}^3)}^2 \right)
\left( \mathcal{N} + 1  \right) ,
\\
\label{eq:estimates H-0(t) 3}
\pm \frac{d}{dt} H^{\rm{B}}(t)
&\leq \varepsilon T_b + C_{\varepsilon} \left( \norm{\psi_t}_{H^3(\mathbb{R}^3)}^2 + 
\norm{\psi_t}_{H^1(\mathbb{R}^3)}^2 \norm{\varphi_t}_{L_{2}^2(\mathbb{R}^3)}^2 \right) 
\left( \mathcal{N} + 1  \right) .
\end{align}
\end{lemma}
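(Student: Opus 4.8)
The plan is to expand $H^{\rm{B}}(t)$ into its diagonal quadratic part $\int dx\, b_x^* h(t) b_x + \mathcal{N}_a$ and its off-diagonal coupling $\int dx \int dk\, K(t,k,x)(a_k^* + a_{-k}) b_x^* + \text{h.c.}$, and to estimate the quadratic form of each piece on an arbitrary $\Psi \in \mathcal{G}$, reducing everything to Lemma~\ref{lemma:preliminary estimates} together with two one-body facts. First, by \eqref{eq:estimate expectation value of G} (with $\psi = \tilde\psi = \psi_t$) and Young's inequality one has $\pm \Phi_{\varphi_t} \leq \varepsilon(-\Delta) + C_\varepsilon(1 + \norm{\varphi_t}_{L^2(\mathbb{R}^3)}^2)$ as operators on $L^2(\mathbb{R}^3)$; lifting this to the particle Fock space by monotonicity of $\Psi \mapsto \scp{\Psi}{\int dx\, b_x^* (\cdot) b_x \Psi}$ gives $\pm \int dx\, b_x^* \Phi_{\varphi_t} b_x \leq \varepsilon T_b + C_\varepsilon(1 + \norm{\varphi_t}_{L^2(\mathbb{R}^3)}^2)\mathcal{N}_b$. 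Second, $\abs{\mu(t)} = \tfrac12\abs{\scp{\psi_t}{\Phi_{\varphi_t}\psi_t}} \leq C(\norm{\psi_t}_{H^1(\mathbb{R}^3)}^2 + \norm{\varphi_t}_{L^2(\mathbb{R}^3)}^2)$ by the same estimate. Throughout I use $\norm{\psi_t}_{H^1(\mathbb{R}^3)} \geq \norm{\psi_t}_{L^2(\mathbb{R}^3)} = 1$ and that $\norm{\psi_t}_{H^1(\mathbb{R}^3)}$, $\norm{\varphi_t}_{L^2(\mathbb{R}^3)}$ are bounded by a constant depending on $\mathcal{E}[\psi,\varphi]$ via the conservation laws of Proposition~\ref{proposition:solution theory for LP equation}.

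For \eqref{eq:estimates H-0(t) 2} I write, using $h(t) = -\Delta + \Phi_{\varphi_t} - \mu(t)$, that $H^{\rm{B}}(t) - T_b = \int dx\, b_x^* (\Phi_{\varphi_t} - \mu(t)) b_x + \mathcal{N}_a + (\text{coupling} + \text{h.c.})$. The potential term is handled by the form bound above, the scalar term by $\pm \mu(t)\mathcal{N}_b \leq \abs{\mu(t)}\mathcal{N}_b$, and $\pm\mathcal{N}_a \leq \mathcal{N} + 1$. For the coupling I insert $\chi = \Psi$ into \eqref{eq:preliminary estimates quadratic term 3} and apply Young's inequality $ab \leq \varepsilon a^2 + C_\varepsilon b^2$ to split $C\norm{\psi_t}_{H^1(\mathbb{R}^3)}\norm{(\mathcal{N}_b + T_b)^{1/2}\Psi}\,\norm{(\mathcal{N}_a + 1)^{1/2}\Psi}$ into $\varepsilon\scp{\Psi}{(\mathcal{N}_b + T_b)\Psi}$ and $C_\varepsilon\norm{\psi_t}_{H^1(\mathbb{R}^3)}^2\scp{\Psi}{(\mathcal{N}_a + 1)\Psi}$. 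Collecting the $\mathcal{N}_a$ and $\mathcal{N}_b$ contributions into $\mathcal{N} + 1$ and absorbing the leftover constants yields \eqref{eq:estimates H-0(t) 2}.

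For \eqref{eq:estimates H-0(t) 1} I note that $\int dx\, b_x^* h(t) b_x$ and $\mathcal{N}_a$ commute with $\mathcal{N} = \mathcal{N}_a + \mathcal{N}_b$, so only the coupling contributes to the commutator. Writing the coupling as $(A + B) + (A + B)^*$ with the $\mathcal{N}$-raising piece $A = \int dx \int dk\, K(t,k,x) a_k^* b_x^*$ (which raises $\mathcal{N}$ by two) and the $\mathcal{N}$-preserving piece $B = \int dx \int dk\, K(t,k,x) a_{-k} b_x^*$, one computes $i[\mathcal{N}, H^{\rm{B}}(t)] = 2i(A - A^*)$, whose form satisfies $\abs{\scp{\Psi}{i[\mathcal{N}, H^{\rm{B}}(t)]\Psi}} \leq 4\abs{\scp{\Psi}{A\Psi}}$. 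The last quantity is bounded exactly as in \eqref{eq:preliminary estimates quadratic term 3}, keeping only the $a_k^*$ contribution, and a final application of Young's inequality (together with $\norm{\psi_t}_{H^1(\mathbb{R}^3)} \geq 1$ to absorb the residual $\varepsilon\mathcal{N}_b$) gives \eqref{eq:estimates H-0(t) 1}.

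Estimate \eqref{eq:estimates H-0(t) 3} is the main obstacle. Differentiating yields $\frac{d}{dt}H^{\rm{B}}(t) = \int dx\, b_x^* \dot h(t) b_x + (\int dx \int dk\, \dot K(t,k,x)(a_k^* + a_{-k}) b_x^* + \text{h.c.})$, and I would insert the Landau--Pekar equations \eqref{eq:Landau Pekar equations} to replace the time derivatives, using $\dot\psi_t = -ih(t)\psi_t$ and $\dot\varphi_t(k) = -i\varphi_t(k) - i\abs{k}^{-1}\int dx\, e^{-2\pi ikx}\abs{\psi_t(x)}^2$. Since $\dot h(t) = \Phi_{\dot\varphi_t} - \dot\mu(t)$, the first term is treated as in \eqref{eq:estimates H-0(t) 2}, with $\norm{\dot\varphi_t}_{L^2(\mathbb{R}^3)}^2 \leq C(\norm{\varphi_t}_{L^2(\mathbb{R}^3)}^2 + \norm{\psi_t}_{H^1(\mathbb{R}^3)}^4)$ by a Hardy--Littlewood--Sobolev estimate on $\int \abs{k}^{-2}\abs{\widehat{\abs{\psi_t}^2}(k)}^2\,dk$. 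The delicate part is the coupling with the differentiated kernel $\dot K$, for which one must prove the analogue of \eqref{eq:preliminary estimates quadratic term 3} with $K$ replaced by $\dot K$ via the Lieb--Yamazaki commutator method. Because $\dot K$ carries $\dot\psi_t = -i(-\Delta + \Phi_{\varphi_t} - \mu(t))\psi_t$ through both the explicit factor $\psi_t$ and the projection $q(t)$, the norm $\norm{\psi_t}_{H^1(\mathbb{R}^3)}$ appearing in \eqref{eq:preliminary estimates quadratic term 3} is replaced by a norm of $\dot\psi_t$: the Laplacian contributes $\norm{\psi_t}_{H^3(\mathbb{R}^3)}$, while $\Phi_{\varphi_t}\psi_t$ must be controlled in $H^1(\mathbb{R}^3)$, which --- because a gradient hitting $\Phi_{\varphi_t}$ turns the form factor $\abs{k}^{-1}$ into an order-one symbol --- requires the weighted norm $\norm{\varphi_t}_{L_2^2(\mathbb{R}^3)}$ and produces the term $\norm{\psi_t}_{H^1(\mathbb{R}^3)}^2\norm{\varphi_t}_{L_2^2(\mathbb{R}^3)}^2$. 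Tracking these through and applying Young's inequality once more gives \eqref{eq:estimates H-0(t) 3}. The genuine difficulty is exactly this regularity bookkeeping: differentiating in time forces one power of $h(t)$ onto $\psi_t$, and re-running the singular interaction estimate of Lemma~\ref{lemma:preliminary estimates} under this loss of two derivatives (and the attendant weight on $\varphi_t$) is what dictates the higher norms on the right-hand side.
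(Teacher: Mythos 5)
Your proposal is correct and follows essentially the same route as the paper: estimate the coupling term via Lemma~\ref{lemma:preliminary estimates} (with $\chi=\Psi$ and Young's inequality), handle the potential $\Phi_{\varphi_t}$ and the scalar $\mu(t)$ through the form bound coming from \eqref{eq:estimate expectation value of G}, and for \eqref{eq:estimates H-0(t) 3} insert the Landau--Pekar equations into $\dot K$ and rerun the Lieb--Yamazaki estimate, which is exactly how the paper produces the bound $\| \dot{\psi_t} \|_{H^1(\mathbb{R}^3)} \leq \norm{\psi_t}_{H^3(\mathbb{R}^3)} + C \norm{\psi_t}_{H^1(\mathbb{R}^3)} \norm{\varphi_t}_{L_2^2(\mathbb{R}^3)}$ and hence the stated norms. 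The only (harmless) deviations are cosmetic: you lift a one-body form bound for $\Phi_{\varphi_t}$ instead of applying the commutator splitting directly in second quantization, and you control $\Phi_{\dot\varphi_t}$ via an $L^2$ bound on $\dot\varphi_t$ (Hardy--Littlewood--Sobolev) where the paper uses the identity $\frac{d}{dt}\Phi_{\varphi_t}(x)=2\Im\scp{G_x}{\varphi_t}_{L^2(\mathbb{R}^3)}$ and an $L^\infty$ bound.
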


\begin{proof}
By the shifting property of the annihilation and creation operators we have 
\begin{align}
 i \left[ \mathcal{N} , H^{\rm{B}}(t) \right]
=  2 i \int dx \int dk \, K(t,k,x) a_k^* b_x^* + \text{h.c.} .
\end{align}
By the same estimates as in the proof of Lemma \ref{lemma:preliminary estimates} we get \eqref{eq:estimates H-0(t) 1}. Note that 
\begin{align}
\abs{\mu(t)} = \frac{1}{2} \abs{\scp{\psi_t}{\Phi_{\varphi_t} \psi_t}_{L^2(\mathbb{R}^3)}} \leq C \left( \norm{\psi_t}_{H^1(\mathbb{R}^3)}^2 + \norm{\varphi_t}_{L^2(\mathbb{R}^3)}^2 \right) 
\end{align}
because of \eqref{eq:estimate expectation value of G}.
Using \eqref{eq:commutator with G-x}  we get
\begin{align}
\int dx \, b_x^* \Phi_{\varphi_t}(x) b_x
&= 2 \int dx \, b_x^* \Re \left\{ \int dk \, \frac{1}{1 + k^2} G_x(k) \overline{\varphi_t(k)}  \right\} b_x
\nonumber \\
&\quad -
\int dx \, b_x^* \Re \left\{ \int dk \, \frac{i k}{\pi (1 + k^2)} G_x(k) \overline{\varphi_t(k)}  \right\} i \nabla_x b_x 
+ \text{h.c.} \, .
\end{align}
By the Cauchy--Schwarz inequality and Young's inequality for products we obtain
\begin{align}
\pm \left( \int dx \, b_x^* h(t) b_x - T_b \right) \leq  \varepsilon T_b + C_{\varepsilon} \left( \norm{\psi_t}_{H^1(\mathbb{R}^3)}^2 + \norm{\varphi_t}_{L^2(\mathbb{R}^3)}^2 \right) \mathcal{N}_b .
\end{align}
Inequality \eqref{eq:estimates H-0(t) 2} then follows from Lemma \ref{lemma:preliminary estimates}.
Note that $\norm{\frac{d}{dt} \Phi(\cdot,t)}_{L^{\infty}(\mathbb{R}^3)} \leq C \norm{\varphi_t}_{L_1^2(\mathbb{R}^3)}$ because $\frac{d}{dt} \Phi_{\varphi_t}(x) = 2 \Im \scp{G_x}{\varphi_t}_{L^2(\mathbb{R}^3)}$. Together with $\norm{h(t) \psi_t}_{L^2(\mathbb{R}^3)} \leq \norm{\psi_t}_{H^2(\mathbb{R}^3)} + C \norm{\varphi_t}_{L_1^2(\mathbb{R}^3)}$ we get
$\norm{\frac{d}{dt} h(t)}_{L^\infty(\mathbb{R}^3)} \leq C \left( \norm{\psi_t}_{H^2(\mathbb{R}^3)}^2 + \norm{\varphi_t}_{L_1^2(\mathbb{R}^3)}^2 \right) $ and
\begin{align}
\pm \frac{d}{dt} \int dx \, b_x^* h(t) b_x \leq C \left(  \norm{\psi_t}_{H^2(\mathbb{R}^3)}^2 + \norm{\varphi_t}_{L_1^2(\mathbb{R}^3)}^2 \right) \mathcal{N}_b .
\end{align}
Using
\begin{align}
\frac{d}{dt} K(t,k,x)
&= \dot{\psi_t}(x) \left( G_x(k) - \scp{\psi_t}{G_x(k) \psi_t}_{L^2(\mathbb{R}^3)} \right) 
\nonumber \\
&\quad - \psi_t \left( \scp{\psi_t}{G_{\cdot}(k) \dot{\psi_t}}_{L^2(\mathbb{R}^3)}
+ \scp{\dot{\psi_t}}{G_{\cdot}(k) 
\psi_t}_{L^2(\mathbb{R}^3)} \right)
\end{align}
and similar estimates as in the proof of Lemma \ref{lemma:preliminary estimates} we obtain 
\begin{align}
&\pm \frac{d}{dt} \left( \int dx \int dk \, K(t,k,x) \left( a_k^* + a_{-k} \right) b_x^* + \text{h.c.} \right)
\nonumber \\
&\quad \leq \varepsilon T_b + C_{\varepsilon} 
\left( \| \dot{\psi_t} \|_{H^1(\mathbb{R}^3)}^2 + \| \dot{\psi_t} \|_{L^2(\mathbb{R}^3)} \norm{\psi_t}_{H^1(\mathbb{R}^3)} \right) \left( \mathcal{N} + 1 \right) .
\end{align}
Since
\begin{align}
\label{eq:norm bound for the derivative of psi-t}
\| \dot{\psi_t} \|_{L^2(\mathbb{R}^3)}
&\leq  \norm{\psi_t}_{H^2(\mathbb{R}^3)} + C \norm{\varphi_t}_{L_1^2(\mathbb{R}^3)} 
\quad \text{and} \quad
\| \dot{\psi_t} \|_{H^1(\mathbb{R}^3)}
\leq \norm{\psi_t}_{H^3(\mathbb{R}^3)} + C \norm{\psi_t}_{H^1(\mathbb{R}^3)} \norm{\varphi_t}_{L_2^2(\mathbb{R}^3)}
\end{align}
this proves \eqref{eq:estimates H-0(t) 3}.
\end{proof}

If we, in addition, use Lemma \ref{lemma: cubic interaction}
we obtain similar inequalities for the Hamiltonian $H(t)$.
\begin{lemma}
\label{lemma:estimates H(t)}
Let $\varepsilon >0$ be arbitrary. Then, there exists a constant $C_\varepsilon > 1$, depending on $\varepsilon$, such that
\begin{align}
\label{eq:estimates H(t) 1}
\pm i \left[ \mathcal{N} , H(t) \right] 
&\leq \varepsilon \, T_b + C_{\varepsilon} \norm{\psi_t}_{H^1(\mathbb{R}^3)}^2 \left( \mathcal{N} + 1 + N^{-1} \left( \mathcal{N}_a + 1 \right) \mathcal{N}_b \right) ,
\\
\label{eq:estimates H(t) 2}
\pm \left( H(t) - T_b \right)
&\leq \varepsilon T_b + C_{\varepsilon} 
\left( \norm{\psi_t}_{H^1(\mathbb{R}^3)}^2 + \norm{\varphi_t}_{L^2(\mathbb{R}^3)}^2 \right)
\left( \mathcal{N} + 1 + N^{-1} \left( \mathcal{N}_a + 1 \right) \mathcal{N}_b \right) ,
\\
\label{eq:estimates H(t) 3}
\pm \frac{d}{dt} H(t)
&\leq \varepsilon T_b + C_{\varepsilon} \left( \norm{\psi_t}_{H^3(\mathbb{R}^3)}^2 + \norm{\psi_t}_{H^1(\mathbb{R}^3)}^2  \norm{\varphi_t}_{L_{2}^2(\mathbb{R}^3)}^2 \right) 
\left( \mathcal{N} + 1 + N^{-1} \left( \mathcal{N}_a + 1 \right) \mathcal{N}_b \right) .
\end{align}
\end{lemma}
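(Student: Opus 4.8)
The plan is to estimate $H(t)$ by comparing it term by term with the Bogoliubov Hamiltonian $H^{\rm B}(t)$, for which the three analogous bounds were just established in Lemma~\ref{lemma:estimates H-0(t)}. Comparing \eqref{eq:H on the double Fock space} with \eqref{eq:Bogoliubov Hamiltonian definition}, the two operators share the number-preserving part $\int dx\, b_x^* h(t) b_x + \mathcal{N}_a$ and differ only in (i) the insertion of the contraction $[1-N^{-1}\mathcal{N}_b]_+^{1/2}$ in the quadratic $K$-term and (ii) the additional cubic term $N^{-1/2}\int dx\, b_x^*\big(q(t)\widehat{\Phi}q(t)-\scp{\psi_t}{\widehat{\Phi}\psi_t}_{L^2(\mathbb{R}^3)}\big)b_x$. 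Since $[1-N^{-1}\mathcal{N}_b]_+^{1/2}$ is a function of $\mathcal{N}_b$ with operator norm $\leq 1$ commuting with both $\mathcal{N}_a$ and $\mathcal{N}_b$, inserting it to the right of $b_x^*$ changes none of the estimates from Lemma~\ref{lemma:estimates H-0(t)}: in each application of \eqref{eq:preliminary estimates quadratic term 3} one replaces the test vector $\Psi$ by $[1-N^{-1}\mathcal{N}_b]_+^{1/2}\Psi$ and uses $\|(\mathcal{N}_a+1)^{1/2}[1-N^{-1}\mathcal{N}_b]_+^{1/2}\Psi\|\leq\|(\mathcal{N}_a+1)^{1/2}\Psi\|$. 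Hence the number-preserving and quadratic parts of $H(t)$ contribute exactly the terms $\varepsilon T_b + C_\varepsilon(\cdots)(\mathcal{N}+1)$ on the right-hand sides of \eqref{eq:estimates H(t) 1}--\eqref{eq:estimates H(t) 3}, with the same coefficients as in Lemma~\ref{lemma:estimates H-0(t)}; the same reasoning survives applying $i[\mathcal{N},\,\cdot\,]$ or $\tfrac{d}{dt}$, as $[1-N^{-1}\mathcal{N}_b]_+^{1/2}$ commutes with $\mathcal{N}$ and is time-independent.

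It therefore remains only to control the cubic term, which is the sole source of the $N^{-1}(\mathcal{N}_a+1)\mathcal{N}_b$ contribution. For \eqref{eq:estimates H(t) 2} I apply \eqref{eq:preliminary estimate cubic term 2} to $N^{-1/2}$ times the cubic operator, choosing the free parameter in that estimate to be $N^{1/2}\varepsilon$. The prefactors then combine as
\begin{align*}
N^{-1/2}\Big[ N^{1/2}\varepsilon\,(\mathcal{N}_b+T_b) + C\,N^{-1/2}\varepsilon^{-1}\norm{\psi_t}_{H^1(\mathbb{R}^3)}^2(\mathcal{N}_a+1)\mathcal{N}_b\Big] = \varepsilon(\mathcal{N}_b+T_b) + C\varepsilon^{-1}N^{-1}\norm{\psi_t}_{H^1(\mathbb{R}^3)}^2(\mathcal{N}_a+1)\mathcal{N}_b ,
\end{align*}
and bounding $\mathcal{N}_b\leq\mathcal{N}\leq\norm{\psi_t}_{H^1(\mathbb{R}^3)}^2\mathcal{N}$ gives precisely the cubic contribution to \eqref{eq:estimates H(t) 2}. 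For the commutator \eqref{eq:estimates H(t) 1} the number-preserving pieces $\int dx\, b_x^* h(t) b_x$ and $\mathcal{N}_a$ drop out, and $i[\mathcal{N},\,\cdot\,]$ simply replaces $\widehat{\Phi}=a(G_x)+a^*(G_x)$ by $a^*(G_x)-a(G_x)$ in the cubic term, an operator of exactly the structure of \eqref{eq:preliminary estimate cubic term 1}; the identical argument with the same choice of parameter yields the cubic part of \eqref{eq:estimates H(t) 1}. For \eqref{eq:estimates H(t) 3} one differentiates the coefficient using $\dot q(t)=-\ket{\dot{\psi}_t}\bra{\psi_t}-\ket{\psi_t}\bra{\dot{\psi}_t}$, which reproduces the cubic structure with one factor $\psi_t$ replaced by $\dot{\psi}_t$, so the same estimate applies with $\norm{\psi_t}_{H^1(\mathbb{R}^3)}^2$ replaced by $\norm{\psi_t}_{H^1(\mathbb{R}^3)}\|\dot{\psi}_t\|_{H^1(\mathbb{R}^3)}$, bounded via \eqref{eq:norm bound for the derivative of psi-t} and Young's inequality by $C\big(\norm{\psi_t}_{H^3(\mathbb{R}^3)}^2+\norm{\psi_t}_{H^1(\mathbb{R}^3)}^2\norm{\varphi_t}_{L_2^2(\mathbb{R}^3)}^2\big)$.

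The main obstacle is the bookkeeping of powers of $N$ in the cubic term: naively the factor $N^{-1/2}$ in \eqref{eq:H on the double Fock space} would yield only an $N^{-1/2}(\mathcal{N}_a+1)\mathcal{N}_b$ error, which is too weak for the applications of this lemma. The resolution is exactly the asymmetric split above, in which the free parameter of Lemma~\ref{lemma: cubic interaction} is scaled by $N^{1/2}$: this converts one of the two factors $N^{-1/2}$ into the gain $N^{-1}$ on the number term, while leaving the kinetic part $\varepsilon(\mathcal{N}_b+T_b)$ of order one so that it is absorbed into $\varepsilon T_b + C_\varepsilon\mathcal{N}$. The remaining points --- that $[1-N^{-1}\mathcal{N}_b]_+^{1/2}$ is carried through every estimate as a harmless contraction, and that differentiating $q(t)$ only trades a $\psi_t$ for a $\dot{\psi}_t$ controlled by \eqref{eq:norm bound for the derivative of psi-t} --- are routine once the estimates of Lemma~\ref{lemma:estimates H-0(t)} and Lemma~\ref{lemma: cubic interaction} are available, so the proof is essentially a re-run of the proof of Lemma~\ref{lemma:estimates H-0(t)} supplemented by these cubic bounds.
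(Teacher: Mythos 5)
Your proposal is correct and follows essentially the same route as the paper: the number-preserving and quadratic parts are handled exactly as in Lemma \ref{lemma:estimates H-0(t)} using that $\big[1-N^{-1}\mathcal{N}_b\big]_+^{1/2}$ is a harmless contraction, and the cubic term is controlled via Lemma \ref{lemma: cubic interaction} with the asymmetric choice of the free parameter (scaled by $N^{1/2}$) that converts the prefactor $N^{-1/2}$ into the $N^{-1}(\mathcal{N}_a+1)\mathcal{N}_b$ error while keeping the kinetic contribution of size $\varepsilon T_b$. The treatment of $\tfrac{d}{dt}$ of the cubic term via $\dot q(t)$ and \eqref{eq:norm bound for the derivative of psi-t} likewise matches the paper's argument.
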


\begin{proof}
Using $\left[ 1 - N^{-1} \mathcal{N}_b \right]_{+}^{1/2} \leq 1$ the first three terms of $H(t)$ on the right hand side of \eqref{eq:H on the double Fock space} can be estimated in exactly the same way as in Lemma \ref{lemma:estimates H-0(t)}. We consequently only have to consider the term with three annihilation and creation operators. Since
\begin{align}
&\left[ \mathcal{N} , \int dx\, b^{*}_x \Big( q(t) \widehat{\Phi} q(t) - \scp{\psi_t}{\widehat{\Phi} \, \psi_t}_{L^2(\mathbb{R}^3)} \Big) b_x \right]
\nonumber \\
&\quad =  \int dx \, b_x^* \left( q(t) \left( a^*(G_{\cdot}) - a(G_{\cdot})  \right) q(t) - \scp{\psi_t}{ \left( a^*(G_{\cdot}) - a(G_{\cdot})  \right) \psi_t}  \right) b_x
\end{align}
we obtain \eqref{eq:estimates H(t) 1} by similar estimates as in  the proof of Lemma \ref{lemma: cubic interaction} and \eqref{eq:estimates H-0(t) 1}. Inequality \eqref{eq:estimates H-0(t) 2} in combination with Lemma \ref{lemma: cubic interaction} leads to \eqref{eq:estimates H(t) 2}.
Using $\dot{q}(t;x;y)= - \dot{\psi_t}(x) \overline{\psi_t(y)} - \psi_t(x) \overline{\dot{\psi_t}(y)}$ we compute
\begin{align}
&\frac{d}{dt} \int dx \, b_x^* \left( q(t) \widehat{\Phi} q(t) - \scp{\psi_t}{\widehat{\Phi} \psi_t}_{L^2(\mathbb{R}^3)} \right) b_x
\nonumber \\
\label{eq:cubic term time derivative 1}
&\quad = - \left(
b^* \left( \psi_t \right) \int dy \, \overline{\dot{\psi_t}(y)} \, \widehat{\Phi}(y) \left( G_y \right) b_y
+ b^* (\dot{\psi_t}) \int dy \, \overline{\psi_t(y)}  \widehat{\Phi}(y) b_y 
+ \text{h.c.} \right)
\\
\label{eq:cubic term time derivative 2}
&\qquad + 
b^* ( \dot{\psi_t} ) \sum_{\sharp \in \{ \cdot , * \}} a^{\sharp}\left( \scp{\psi_t}{G_{\cdot} \psi_t}_{L^2(\mathbb{R}^3)} \right) b (\psi_t) + \text{h.c.}
\\
\begin{split}
\label{eq:cubic term time derivative 3}
&\qquad +
\sum_{\sharp \in \{ \cdot , * \}} \left( b^* (\psi_t)  \bigg(
a^{\sharp} \big( \scp{\dot{\psi_t}}{G_{\cdot} \psi_t}_{L^2(\mathbb{R}^3)} \big)
+ a^{\sharp} \big( \scp{\psi_t}{G_{\cdot} \dot{\psi_t}}_{L^2(\mathbb{R}^3)} \big)
\right) b \left( \psi_t \right)
\\
&\qquad \qquad \qquad \qquad 
-  \frac{d}{dt} a^{\sharp}\left( \scp{\psi_t}{G_{\cdot} \psi_t}_{L^2(\mathbb{R}^3)} \right) \mathcal{N}_b  \bigg).
\end{split}
\end{align}
By means of \eqref{eq:commutator with G-x} and integration by parts, i.e. the commutator method of Lieb and Yamazaki \cite{LY1958}, we obtain
\begin{align}
\pm \eqref{eq:cubic term time derivative 1}
&\leq \varepsilon \left( \mathcal{N}_b + T_b \right)
+ C \varepsilon^{-1} 
\left( \| \dot{\psi_t} \|_{H^1(\mathbb{R}^3)}^2 + \| \dot{\psi_t} \|_{L^2(\mathbb{R}^3)}^2 \norm{\psi_t}_{H^1(\mathbb{R}^3)}^2 \right) \left( \mathcal{N}_a + 1 \right) \mathcal{N}_b .
\end{align}
Using \eqref{eq:estimate expectation value of G} we get
\begin{align}
\pm  \eqref{eq:cubic term time derivative 2}
&\leq C  \norm{\psi_t}_{H^1(\mathbb{R}^3)} 
\| \dot{\psi_t} \|_{L^2(\mathbb{R}^3)} 
\left( \mathcal{N}_a + 1 \right)^{1/2} \mathcal{N}_b
\end{align}
and 
\begin{align}
\pm  \eqref{eq:cubic term time derivative 3}
&\leq C \left( \norm{\psi_t}_{H^1(\mathbb{R}^3)}  \| \dot{\psi_t} \|_{L^2(\mathbb{R}^3)}
+  \| \dot{\psi_t} \|_{H^1(\mathbb{R}^3)} \right) \left( \mathcal{N}_a + 1 \right)^{1/2} \mathcal{N}_b .
\end{align}
Together with \eqref{eq:norm bound for the derivative of psi-t} this leads to
\begin{align}
&\pm N^{-1/2} \frac{d}{dt}  \int dx \, b_x^* \left( q(t) \widehat{\Phi} q(t) - \scp{\psi_t}{\widehat{\Phi} \psi_t}_{L^2(\mathbb{R}^3)} \right) b_x
\nonumber \\
&\quad \leq \varepsilon T_b +
 C_{\varepsilon} \left( \norm{\psi_t}_{H^3(\mathbb{R}^3)}^2 +  \norm{\psi_t}_{H^1(\mathbb{R}^3)}^2 \norm{\varphi_t}_{L_2^2(\mathbb{R}^3)}^2 \right) 
\left( \mathcal{N}_b  + N^{-1} \left( \mathcal{N}_a + 1 \right) \mathcal{N}_b \right).
\end{align}
Thus if we combine this estimate with \eqref{eq:estimates H-0(t) 3} we obtain \eqref{eq:estimates H(t) 3}.
\end{proof}

\section{Proofs}
\label{sec:proofs}
In this section we prove the main results of the article. We start with Theorem \ref{theorem:reduced density matrices}. Afterwards, we discuss the well-posedness of the Bogoliubov dynamics, introduce for technical reasons a Bogoliubov evolution which is truncated in the total number of excitations and finally derive Theorem \ref{theorem:Bogoliubov approximation}.
It is convenient to consider solutions $\ch(t)$ of the Schr\"odinger equation \eqref{eq:Schroedinger equations in double excitation space} rather on $\mathcal{G}$ than on the time dependent truncated excitation space $\mathcal G^{\leq N}_{\perp \psi_t}$. We therefore define $\chi(t) \in \mathcal{G}$ by \footnote{Note that we refrain from indicating the dependence of $\chi(t)$ on $N$ to simplify the notation.}
\begin{align}
\label{eq:definition many-body state on full double Fock space}
\chi^{(k)}(t) =
\begin{cases}
\ch^{(k)}(t) \quad &\text{if} \; k \in \{1, 2, \ldots, N \} ,
\\
0 \quad &\text{else} ,
\end{cases}
\end{align}
which satisfies the Schr\"odinger equation
\begin{align}
\label{eq:Schroedinger equation full Fock space}
i \partial_t \chi(t) = H(t) \chi(t) 
\quad \text{with} \quad 
\chi^{(k)}(0) =
\begin{cases}
\left( U_N(0) \Psi_{N,0} \right)^{(k)} \quad &\text{if} \; k \in \{1, 2, \ldots, N \} ,
\\
0 \quad &\text{else} .
\end{cases}
\end{align}

\subsection{Convergence of reduced density matrices}

\begin{proof}[Proof of Theorem \ref{theorem:reduced density matrices}]
Note that 
\begin{align}
\label{eq:estimate Pekar energy}
\norm{\psi_t}_{H^1(\mathbb{R}^3)}^2 + \norm{\varphi_t}_{L^2(\mathbb{R}^3)}^2 \leq 2 \left(  \mathcal{E}[\psi_t, \varphi_t] + C \right) = 2 \left( \mathcal{E}[\psi, \varphi] + C \right) 
&\leq C \big( \norm{\psi}_{H^1(\mathbb{R}^3)}^2 + \norm{\varphi}_{L^2(\mathbb{R}^3)}^2 \big)
\end{align}
holds because of \eqref{eq:estimate expectation value of G} 
and the conservation of energy, see Proposition \eqref{proposition:solution theory for LP equation}. According to Lemma \ref{lemma:estimates H(t)} there exists $\widetilde{C} > 0$ which only depends on $\mathcal{E}[\psi,\varphi]$ such that the operator 
\begin{align}
\label{eq:definition of A(t)}
A(t) =  H(t) 
+ \widetilde{C} \left( \mathcal{N} + 1 \right)
\end{align}
satisfies
\begin{align}
\label{eq:commutator estiamte for truncated many-body Hamiltonian 1}
\id_{\mathcal{N}_b \leq N} \left( \mathcal{N} + T_b + 1 \right)  \id_{\mathcal{N}_b \leq N}
&\leq 2 \id_{\mathcal{N}_b \leq N} A(t) \id_{\mathcal{N}_b \leq N} ,
\\
\label{eq:commutator estiamte for truncated many-body Hamiltonian 2}
\id_{\mathcal{N}_b \leq N} A(t) \id_{\mathcal{N}_b \leq N}
&\leq  3 \widetilde{C}
\id_{\mathcal{N}_b \leq N}
\left( \mathcal{N} + T_b + 1 \right) 
\id_{\mathcal{N}_b \leq N} ,
\\
\label{eq:commutator estiamte for truncated many-body Hamiltonian 3}
\pm  \id_{\mathcal{N}_b \leq N} i \left[ H(t) , A(t) \right] \id_{\mathcal{N}_b \leq N} 
&\leq   \widetilde{C}
 \id_{\mathcal{N}_b \leq N} A(t) \id_{\mathcal{N}_b \leq N} , 
\\
\label{eq:commutator estiamte for truncated many-body Hamiltonian 4}
\pm  \id_{\mathcal{N}_b \leq N} \frac{d}{dt} A(t) \id_{\mathcal{N}_b \leq N}
&\leq  
\left( \norm{\psi_t}_{H^3(\mathbb{R}^3)}^2 + \norm{\varphi_t}_{L_2^2(\mathbb{R}^3)}^2 \right) \id_{\mathcal{N}_b \leq N} A(t) \id_{\mathcal{N}_b \leq N} .
\end{align}
Let $\Psi_{N,t}$ be the solution of the Schr\"odinger equation of Theorem \ref{theorem:reduced density matrices} and $\chi(t)$ be defined as in \eqref{eq:definition many-body state on full double Fock space}. By means of \eqref{eq:Schroedinger equation full Fock space}, \eqref{eq:commutator estiamte for truncated many-body Hamiltonian 3}, \eqref{eq:commutator estiamte for truncated many-body Hamiltonian 4} and  $\chi(t) = \id_{\mathcal{N}_b \leq N} \chi(t)$ we estimate
\begin{align}
\label{eq:derivation reduced densities time derivative 1}
\abs{\frac{d}{dt} \scp{\chi(t)}{A(t) \chi(t)} }
&\leq  \abs{\scp{\chi(t)}{\dot{A}(t) \chi(t)}}
+ \abs{ \scp{\chi(t)}{\left[ H(t) , A(t) \right] \chi(t)} }
\\
&\leq \widetilde{C} \left( \norm{\psi_t}_{H^3(\mathbb{R}^3)}^2 + \norm{\varphi_t}_{L_2^2(\mathbb{R}^3)}^2  \right) \scp{\chi(t)}{A(t) \chi(t)} .
\end{align}
Using Gronwall's lemma we get
\begin{align}
\scp{\chi(t)}{A(t) \chi(t)} 
&\leq e^{\widetilde{C} \int_0^t ds \,
\left( \norm{\psi_s}_{H^3(\mathbb{R}^3)}^2 + \norm{\varphi_s}_{L_2^2(\mathbb{R}^3)}^2 \right) } \scp{\chi(0)}{A(0) \chi(0)} .
\end{align}
Inequalities \eqref{eq:commutator estiamte for truncated many-body Hamiltonian 1} and \eqref{eq:commutator estiamte for truncated many-body Hamiltonian 2} then lead to
\begin{align}
\label{eq:estimate number operator many-body dynamics}
\scp{\chi(t)}{\left( \mathcal{N} + T_b + 1 \right) \chi(t)}
&\leq 6 \widetilde{C}  e^{\widetilde{C} \int_0^t ds \,
\left( \norm{\psi_s}_{H^3(\mathbb{R}^3)}^2 + \norm{\varphi_s}_{L_2^2(\mathbb{R}^3)}^2 \right) } 
\scp{\chi(0)}{\left( \mathcal{N} + T_b + 1 \right) \chi(0)} .
\end{align}
Note that 
\begin{align}
\label{eq:expectation of number operators relation between truncated and non truncated many-body state}
\scp{\chi(t)}{\left( \mathcal{N} + T_b + 1 \right) \chi(t)}_{\mathcal{G}} 
& = \scp{\ch(t)}{\left( \mathcal{N}_a + \mathcal{N}_b + T_b + 1 \right) \ch(t)}_{\mathcal G^{\leq N}_{\perp \psi_t}}
\end{align}
and that the unitary mapping \eqref{eq:action of the unitary} can be written\footnote{We refer to \cite[Lemma A.1]{FLMP2021}) for a thorough introduction to $U_N(t)$ and its properties.} as $U_{N}(t) = \widetilde{U}_N(t) \otimes W^* \left( \sqrt{N} \varphi_t \right)$ where $\widetilde{U}_N(t)$ is the excitation map from \cite[Chapter 2.5]{LNSS2015}. Since $\left[ \widetilde{U}_N(t) \otimes \id_{\mathcal{F}} , \id_{\left(L^2 \left( \mathbb{R}^{3} \right) \right)^{\otimes_s N}} \otimes W^*(\sqrt{N} \varphi_t) \right] = 0$ we have
\begin{align}
\label{eq:action of excitation map on the number of phonon operator}
\scp{\ch(t)}{\mathcal{N}_a \ch(t)}_{\mathcal G^{\leq N}_{\perp \psi_t}}
&= \scp{\Psi_{N,t}}{W ( \sqrt{N} \varphi_t ) \mathcal{N}_a W^* ( \sqrt{N} \varphi_t ) \Psi_{N,t}}_{\mathcal{H}^{(N)}} 
\end{align}
and 
\begin{align}
\label{eq:action of excitation map on b-star and b}
U_N(t) b^* (f) b(g) U_N(t)^*
&=  b^*(f) b(g) \quad \text{for all} \;
f,g \in L^2_{\perp \psi_t}(\mathbb{R}^3) 
\end{align}
by means of \cite[Proposition 4.2]{LNSS2015}.  For the reduced density $\gamma^{(1,0)}_{\ch(t)}$ with integral kernel
\begin{align}
\gamma^{(1,0)}_{\ch(t)}(x;y) = N^{-1} \scp{\ch(t)}{b_x^* b_y \ch(t)}
\end{align}
relation \eqref{eq:action of excitation map on b-star and b} and $b (\psi_t) \ch(t) = 0$ lead to
\begin{align}
\gamma^{(1,0)}_{\ch(t)} = q(t) \gamma^{(1,0)}_{\Psi_{N,t}} q(t)
\end{align}
and
\begin{align}
\label{eq:action of excitation map on the number and kinetic energy of the particles}
N^{-1} \scp{\ch(t)}{\left( \mathcal{N}_b + T_b \right) \ch(t)} = \tr \left( (1 - \Delta)  \gamma^{(1,0)}_{\ch(t)}  \right) 
= \tr \left( (1 - \Delta) q(t) \gamma^{(1,0)}_{\Psi_{N,t}} q(t) \right) .
\end{align}
Using
\begin{align}
&\tr \left|  \sqrt{1 - \Delta} \left( \gamma^{(1,0)}_{\Psi_{N,t}} - \ket{\psi_t} \bra{\psi_t} \right)  \sqrt{1 - \Delta} \right|
\nonumber \\
&\quad = \sup_{A \in \mathcal{L}^{\infty}(L^2(\mathbb{R}^3), L^2(\mathbb{R}^3)), \norm{A}_{\mathcal{L}^{\infty}} = 1} \abs{ \tr \left(  \sqrt{1 - \Delta}  A \sqrt{1 - \Delta} \left( \gamma^{(1,0)}_{\Psi_{N,t}} - \ket{\psi_t} \bra{\psi_t} \right)  \right) } ,
\end{align}
the identity $\id_{L^2(\mathbb{R}^3)} = q(t) + \ket{\psi_t} \bra{\psi_t}$ and the Cauchy--Schwarz inequality we obtain, in analogy to \cite[Lemma VII.1]{LP2018} (see also \cite[Proof of Theorem 2.8]{MPP2019}),
\begin{align}
\label{eq:inequality Sobolev trace norm}
\tr \left|  \sqrt{1 - \Delta} \left( \gamma^{(1,0)}_{\Psi_{N,t}} - \ket{\psi_t} \bra{\psi_t} \right)  \sqrt{1 - \Delta} \right|
&\leq C \norm{\psi_t}_{H^1(\mathbb{R}^3)} \sup_{j=1,2}
\left(\tr \left( (1 - \Delta) q(t) \gamma^{(1,0)}_{\Psi_{N,t}} q(t) \right) \right)^{j/2}.
\end{align}
Combining \eqref{eq:estimate number operator many-body dynamics} with \eqref{eq:expectation of number operators relation between truncated and non truncated many-body state},  \eqref{eq:action of excitation map on the number of phonon operator}, \eqref{eq:action of excitation map on the number and kinetic energy of the particles} and \eqref{eq:inequality Sobolev trace norm} 
proves \eqref{eq: theorem reduced density matrices estimate 1} and 
\eqref{eq: theorem reduced density matrices estimate 2}.
\end{proof}

\begin{remark}
The derivation of \eqref{eq:estimate number operator many-body dynamics} from above is in our opinion the most insightful but we would like to remark that it is rather formal because the second term on the right hand side of \eqref{eq:derivation reduced densities time derivative 1} is not well defined for all $\chi(t) \in \mathcal{D} \big( \big( \sum_{j=1}^N - \Delta_j + \mathcal{N}_a \big)^{1/2} \big)$.
A rigorous derivation is obtained if one proceeds in analogy to the proof of \cite[Theorem 8]{LNS2015} and considers a regularized version of \eqref{eq:Schroedinger equation full Fock space}. Likewise one can replace $H(t)$ in \eqref{eq:Schroedinger equation full Fock space} by $\id_{\mathcal{N}_b \leq N} H(t) \id_{\mathcal{N}_b \leq N}$ and directly apply \cite[Theorem 8]{LNS2015} with $A = \id_{\mathcal{N}_b \leq N} \left( \mathcal{N}+ T_b \right) \id_{\mathcal{N}_b \leq N} + 1$ and $B = \id_{\mathcal{N}_b \leq N}  \mathcal{N} \id_{\mathcal{N}_b \leq N} + 1$.
\end{remark}

\subsection{Well-posedness of the Bogoliubov dynamics}

Next, we are going to show that \eqref{eq:norm approximation main result approximating many-body state} approximates the time evolved many-body state in norm. We start with commenting on the well-posedness of the Bogoliubov dynamics. Afterwards, we will introduce a truncated Bogoliubov dynamics which will be used in the proof of Theorem \ref{theorem:Bogoliubov approximation}. 

\begin{lemma}
\label{lemma:well-posedness Bogoliubov dynamics}
Let $(\psi, \varphi) \in H^3(\mathbb{R}^3) \times L_2^2(\mathbb{R}^3)$ such that $\norm{\psi}_{L^2(\mathbb{R}^3)} = 1$ and $(\psi_t, \varphi_t)$ be the unique solution of \eqref{eq:Landau Pekar equations} with initial datum $(\psi, \varphi)$. For every $\chi \in \mathcal{G} \cap Q \left( \mathcal{N} + T_b \right)$ there exists a unique solution to the Bogoliubov equation \eqref{eq:Bogoliubov dynamics}
with $\chb(0) = \chi$ such that $\chb \in C^0 \left( [0,\infty) \cap \mathcal{G} \right) \cap L_{\rm loc}^{\infty} \left( [0, \infty) ; Q \left( \mathcal{N} + T_b \right) \right)$. Moreover, there exists a constant $C >0$ depending only on $\mathcal{E}[\psi,\varphi]$ such that
\begin{align}
\label{eq:Bog dynamics estimate number operators}
\scp{\chb(t)}{\left( \mathcal{N} + T_b + 1 \right) \chb(t)}
&\leq C  e^{C \int_0^t ds \,
\left( \norm{\psi_s}_{H^3(\mathbb{R}^3)}^2 + \norm{\varphi_s}_{L_2^2(\mathbb{R}^3)}^2 \right) } 
\scp{\chi}{\left( \mathcal{N} + T_b + 1 \right) \chi}
\end{align}
and the condition $\chi \in \mathcal G_{\perp \psi_0} $ implies $\chb(t) \in \mathcal G_{\perp \psi_t}$ for all $t \in \mathbb{R}$.
\end{lemma}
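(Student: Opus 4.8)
The plan is to build the Bogoliubov propagator and its a priori bound by invoking the abstract existence result \cite[Theorem 8]{LNS2015} with a comparison operator tailored to $H^{\rm{B}}(t)$, in close analogy to the proof of Theorem \ref{theorem:reduced density matrices}, and then to deduce the invariance of $\mathcal G_{\perp \psi_t}$ from a cancellation that encodes the Landau--Pekar equation. First I would introduce the comparison operator $A(t) = H^{\rm{B}}(t) + \widetilde C (\mathcal{N}+1)$. By energy conservation (Proposition \ref{proposition:solution theory for LP equation}) together with \eqref{eq:estimate expectation value of G}, the quantity $\norm{\psi_t}_{H^1(\mathbb{R}^3)}^2 + \norm{\varphi_t}_{L^2(\mathbb{R}^3)}^2$ is bounded uniformly in $t$ by a constant depending only on $\mathcal{E}[\psi,\varphi]$, exactly as in \eqref{eq:estimate Pekar energy}. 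Choosing $\varepsilon = 1/2$ in \eqref{eq:estimates H-0(t) 2} and $\widetilde C$ large enough (depending only on $\mathcal{E}[\psi,\varphi]$) then yields the two-sided form bound $\tfrac12 (\mathcal{N}+T_b+1) \leq A(t) \leq C (\mathcal{N}+T_b+1)$, so that $A(t)$ is comparable to $\mathcal{N}+T_b+1$.

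The decisive simplification is that $H^{\rm{B}}(t)$ commutes with itself, whence $i[H^{\rm{B}}(t),A(t)] = \widetilde C\, i[H^{\rm{B}}(t),\mathcal{N}]$ and $\tfrac{d}{dt}A(t) = \tfrac{d}{dt}H^{\rm{B}}(t)$, the term $\widetilde C(\mathcal{N}+1)$ being time-independent. Estimates \eqref{eq:estimates H-0(t) 1} and \eqref{eq:estimates H-0(t) 3} then give $\pm\, i[H^{\rm{B}}(t),A(t)] \leq C A(t)$ and $\pm\, \tfrac{d}{dt}A(t) \leq C(\norm{\psi_t}_{H^3(\mathbb{R}^3)}^2 + \norm{\varphi_t}_{L_2^2(\mathbb{R}^3)}^2)\, A(t)$, where once more energy conservation is used to absorb the $\norm{\psi_t}_{H^1(\mathbb{R}^3)}^2$ factors into the constant. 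These are precisely the hypotheses of \cite[Theorem 8]{LNS2015}, which I would apply with dominating operator $\mathcal{N}+1$ to obtain a unique $\chb \in C^0([0,\infty);\mathcal{G}) \cap L^\infty_{\rm loc}([0,\infty); Q(\mathcal{N}+T_b))$ together with $\scp{\chb(t)}{A(t)\chb(t)} \leq e^{C\int_0^t (\norm{\psi_s}_{H^3(\mathbb{R}^3)}^2 + \norm{\varphi_s}_{L_2^2(\mathbb{R}^3)}^2)\, ds}\scp{\chi}{A(0)\chi}$. Bounding both sides through the comparability of $A(t)$ with $\mathcal{N}+T_b+1$ then yields \eqref{eq:Bog dynamics estimate number operators}. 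As in the remark following Theorem \ref{theorem:reduced density matrices}, the formal manipulations are made rigorous by working with a regularized generator and passing to the limit.

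For the invariance I would use that $\chi \in \mathcal G_{\perp \psi_t}$ is equivalent to $b(\psi_t)\chi = 0$, and study $\eta(t) := b(\psi_t)\chb(t)$. A direct computation using $[b(\psi_t),b_x^*]=\overline{\psi_t(x)}$ gives $[b(\psi_t),H^{\rm{B}}(t)] = b(h(t)\psi_t)$; here the interaction term drops out entirely because the projection $q(t)$ in the definition of $K$ together with $\norm{\psi_t}_{L^2(\mathbb{R}^3)}=1$ force $\int dx\, \overline{\psi_t(x)} K(t,k,x) = \scp{\psi_t}{G_{\cdot}(k)\psi_t}_{L^2(\mathbb{R}^3)} - \norm{\psi_t}_{L^2(\mathbb{R}^3)}^2 \scp{\psi_t}{G_{\cdot}(k)\psi_t}_{L^2(\mathbb{R}^3)} = 0$. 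Differentiating $\eta$ and inserting the Bogoliubov equation gives
\begin{align}
\tfrac{d}{dt}\eta(t) = \big( b(\dot{\psi_t}) - i[b(\psi_t),H^{\rm{B}}(t)] \big)\chb(t) - i H^{\rm{B}}(t)\eta(t).
\end{align}
Since $b(\cdot)$ is antilinear and $\dot{\psi_t} = -i h(t)\psi_t$ by \eqref{eq:Landau Pekar equations}, one has $b(\dot{\psi_t}) = i\, b(h(t)\psi_t) = i[b(\psi_t),H^{\rm{B}}(t)]$, so the bracket vanishes and $\eta$ solves the same Bogoliubov equation with $\eta(0) = b(\psi_0)\chi = 0$. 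Uniqueness from the first part then forces $\eta \equiv 0$, i.e. $\chb(t) \in \mathcal G_{\perp \psi_t}$ for all $t$.

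The main obstacle I anticipate is the singular, non-$L^2$ form factor, which rules out a naive self-adjointness argument and renders the commutator identities only formal on the natural core. This is exactly what the bounds of Lemma \ref{lemma:estimates H-0(t)} are designed to control, and the regularization scheme underlying \cite[Theorem 8]{LNS2015} is what converts the formal estimates and the formal ODE for $\eta$ into rigorous statements.
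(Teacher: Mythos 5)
Your proposal is correct, and for the a priori bound and well-posedness it is essentially the paper's argument: the paper also reduces everything to \cite[Theorem 8]{LNS2015} using the bounds of Lemma \ref{lemma:estimates H-0(t)} together with \eqref{eq:estimate Pekar energy}, the only cosmetic difference being that it feeds Theorem 8 the fixed operators $A=\mathcal{N}+T_b+1$, $B=\mathcal{N}+1$ directly, whereas you package the same inequalities through the shifted generator $A(t)=H^{\rm B}(t)+\widetilde C(\mathcal{N}+1)$, exactly as the paper itself does in the proof of Theorem \ref{theorem:reduced density matrices}. Where you genuinely diverge is the invariance of $\mathcal G_{\perp\psi_t}$. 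The paper introduces the sector-wise projection $\Gamma_t|_{\mathcal F_b^{(j)}\otimes\mathcal F_a}=q(t)^{\otimes j}\otimes\id_{\mathcal F_a}$ and shows $\frac{d}{dt}\norm{\Gamma_t\chb(t)}^2=0$ by cancelling $\dot\Gamma_t$ against $i[H^{\rm B}(t),\Gamma_t]$; you instead propagate $\eta(t)=b(\psi_t)\chb(t)$ and show it solves the same linear equation with vanishing initial datum. Both arguments rest on the identical two cancellations, namely $\int dx\,\overline{\psi_t(x)}K(t,k,x)=0$ (built into the definition of $K$ via $q(t)$) and $i\dot\psi_t=h(t)\psi_t$, and your commutator computation $[b(\psi_t),H^{\rm B}(t)]=b(h(t)\psi_t)$ is correct. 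One small caveat: concluding $\eta\equiv 0$ from the uniqueness statement of the first part is slightly delicate, since $\eta(t)=b(\psi_t)\chb(t)$ need not lie in the uniqueness class $L^\infty_{\rm loc}(Q(\mathcal{N}+T_b))$ without control on higher moments of $\chb(t)$; it is cleaner to observe directly that $\frac{d}{dt}\norm{\eta(t)}^2=2\Im\scp{\eta(t)}{H^{\rm B}(t)\eta(t)}=0$ by symmetry of $H^{\rm B}(t)$, which is a formal computation of the same character as the paper's $\Gamma_t$ calculation (the paper's use of the bounded projection $\Gamma_t$ sidesteps this domain question, which is the main advantage of its route). With that adjustment your argument is a legitimate and arguably more transparent alternative.
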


\begin{proof}
Let $A= \mathcal{N} + T_b + 1$ and $B = \mathcal{N} + 1$.
By Lemma \ref{lemma:estimates H-0(t)} and \eqref{eq:estimate Pekar energy} there exists a constant $C > 0$ depending only on $\mathcal{E}[\psi, \varphi]$ such that 
\begin{align}
\begin{split}
\label{eq:Bounds for the well-posedness of Bogoliubov}
C^{-1} A - C B &\leq H^{\rm B}(t) \leq C A ,
\\
\pm i \left[ H^{\rm B}(t), B \right] &\leq C A ,
\\
\rm \frac{d}{dt} H^{\rm B}(t)
&\leq C \left( \norm{\psi_t}_{H^3(\mathbb{R}^3)}^2 + \norm{\varphi_t}_{L_2^2(\mathbb{R}^3)}^2 \right) A .
\end{split}
\end{align}
The statement of Lemma \ref{lemma:well-posedness Bogoliubov dynamics} until \eqref{eq:Bog dynamics estimate number operators} then follows from \cite[Theorem 8]{LNS2015}. Note that  the time dependence of \eqref{eq:Bounds for the well-posedness of Bogoliubov} must be tracked in the proof of \cite[Theorem 8]{LNS2015} to obtain the explicit form of the exponent in \eqref{eq:Bog dynamics estimate number operators}.
Let us define $\Gamma_t: \mathcal{G} \rightarrow \mathcal G_{\perp \psi_t}$ by
$\Gamma_t |_{\mathcal{F}_b^{(j)} \otimes \mathcal{F}_a} = q(t)^{\otimes j} \otimes \id_{\mathcal{F}_a}$ and compute 
\begin{align}
\frac{d}{dt} \norm{\Gamma_t \chb(t)}^2
&= \scp{\chb(t)}{i \left[ H^{\rm B}(t), \Gamma_t \right] \chb(t)}
+ \scp{\chb(t)}{\dot{\Gamma}_t H^{\rm B}(t)} = 0 .
\end{align}
Here, we have used that the relations
\begin{align}
\dot{\Gamma}_t
&= - b^*(\psi_t) b (q(t) \dot{\psi}_t) \Gamma_t - \Gamma_t b^*(q(t) \dot{\psi}_t ) b(\psi_t)
\end{align}
and 
\begin{align}
i \left[ H^{\rm B}(t) , \Gamma_t \right]
&= b^* (\psi_t) b (q(t) \dot{\psi}_t)
\Gamma_t + \Gamma_t b( q(t) \dot{\psi}_t) b(\psi_t)
\end{align}
can obtained (in analogy to \cite[p. 1588]{BS2019}) by a direct calculation on the Fock space sector with $k$ particles. This leads to
$\norm{\left( 1 - \Gamma_t \right) \chb(t)}^2
= \norm{\left( 1 - \Gamma_0 \right) \chb(0)}^2$
and shows that $\chi \in \mathcal G_{\perp \psi_0} $ implies $\chb(t) \in \mathcal G_{\perp \psi_t}$ for all $t \in \mathbb{R}$.
\end{proof}

As a technical tool we introduce (for $M \in \mathbb{N}$ arbitrary but fixed) the truncated Bogoliubov dynamics
\begin{align}
\label{eq:truncated Bogoliubov dynamics}
i \partial_t \chbm(t) = \id_{\mathcal{N} \leq M} H^{\rm{B}}(t) \id_{\mathcal{N} \leq M} \chbm(t) ,
\quad \chbm(0) =  \id_{\mathcal{N} \leq M}  \chb(0) .
\end{align}
In the following, we use the shorthand notations $\id_{\leq M} = \id_{\mathcal{N} \leq M} $ and
$\id_{> M} = \id_{\mathcal{N} >M} $. The truncated dynamics satisfies the same existence result as the original Bogoliubov equation.
\begin{lemma}
\label{lemma:well-posedness truncated Bogoliubov dynamics}
Let $(\psi, \varphi) \in H^3(\mathbb{R}^3) \times L_2^2(\mathbb{R}^3)$ such that $\norm{\psi}_{L^2(\mathbb{R}^3)} = 1$ and $(\psi_t, \varphi_t)$ be the unique solution of \eqref{eq:Landau Pekar equations} with initial datum $(\psi, \varphi)$. For every $\chi \in \mathcal{G} \cap Q \left( \mathcal{N} + T_b \right)$ there exists a unique solution to the truncated Bogoliubov equation \eqref{eq:truncated Bogoliubov dynamics}
with $\chb(0) = \chi$ such that $\chbm \in C^0 \left( [0,\infty) \cap \mathcal{G} \right) \cap L_{\rm loc}^{\infty} \left( [0, \infty) ; Q \left( \mathcal{N} + T_b \right) \right)$. Furthermore, $\id_{\mathcal{N} >M} \chbm(t) = 0$ holds for all $t \in \mathbb{R}$ and $\chi \in \mathcal{G}_{\perp \psi_0}$ implies $\chbm(t) \in \mathcal{G}_{\perp \psi_t}$ for all $t \in \mathbb{R}$.

In addition, assume the existence of a constant $\widetilde{C} >0$ such that
$
\big\| ( \Number^{3/2} + T_b^{1/2} + 1 ) \chi \big\| \leq \widetilde{C}
$.
Then, there exists a constant $C>0$ (depending only on $\widetilde{C}$ and $\mathcal{E}[\psi,\varphi]$)  such that
\begin{align}
\label{eq:propgation of moments truncated Bogoliubov dynamics}
\sup \left\{ \norm{\left( \mathcal{N} + T_b + 1 \right)^{1/2} \chbm(t)},
M^{-1/4} \norm{\left( \mathcal{N} + 1 \right) \chbm(t)},
M^{-5/8} \norm{\left( \mathcal{N} + 1 \right)^{3/2} \chbm(t)}
\right\}
&\leq C e^{C f(t)}  
\end{align}
with $f(t) = \int_0^t ds \, \left( \norm{\psi_s}_{H^3(\mathbb{R}^3)}^2 + \norm{\varphi_s}_{L_2^2(\mathbb{R}^3)}^2 \right)$.
\end{lemma}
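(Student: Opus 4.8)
The plan is to treat the truncated dynamics exactly as the untruncated one wherever possible, and to isolate the genuinely new ingredient---the propagation of higher moments of $\mathcal{N}$---in a separate inductive step that exploits the truncation. For existence and uniqueness, note that the generator $\id_{\leq M} H^{\rm{B}}(t) \id_{\leq M}$ acts within the range of $\id_{\leq M}$, where $\mathcal{N}_a, \mathcal{N}_b \leq M$ are bounded and only $T_b$ (restricted to finitely many particle sectors) remains unbounded; since the interaction kernel $K(t,\cdot,x)$ is square integrable in $k$, the estimates of Lemma \ref{lemma:estimates H-0(t)} hold verbatim after sandwiching with $\id_{\leq M}$, and well-posedness follows from \cite[Theorem 8]{LNS2015} just as in Lemma \ref{lemma:well-posedness Bogoliubov dynamics}. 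The identity $\id_{> M}\chbm(t) = 0$ follows because $\id_{> M}\chbm(t)$ solves $i\partial_t \id_{> M}\chbm = \id_{> M}\id_{\leq M}H^{\rm{B}}\id_{\leq M}\chbm = 0$, so it stays at its vanishing initial value. Finally, since the excitation projection $\Gamma_t$ commutes with $\id_{\leq M}$ (it preserves the particle number), the computation $\frac{d}{dt}\norm{(1-\Gamma_t)\chbm(t)}^2 = 0$ from the proof of Lemma \ref{lemma:well-posedness Bogoliubov dynamics} goes through unchanged, giving the invariance of $\mathcal{G}_{\perp\psi_t}$.

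For the first quantity in \eqref{eq:propgation of moments truncated Bogoliubov dynamics} I would repeat the argument leading to \eqref{eq:Bog dynamics estimate number operators}: with $A = \mathcal{N} + T_b + 1$ and $B = \mathcal{N} + 1$ the operator inequalities \eqref{eq:Bounds for the well-posedness of Bogoliubov} survive the truncation because $A$ and $B$ commute with $\id_{\leq M}$ and $\chbm(t) = \id_{\leq M}\chbm(t)$. Applying \cite[Theorem 8]{LNS2015} to the truncated generator yields $\scp{\chbm(t)}{(\mathcal{N} + T_b + 1)\chbm(t)} \leq C e^{C f(t)}$; in particular $\norm{T_b^{1/2}\chbm(t)}^2 \leq C e^{C f(t)}$, which is the only information about the kinetic energy needed below.

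The core of the proof is the control of $\mathcal{M}_j(t) := \scp{\chbm(t)}{(\mathcal{N}+1)^j\chbm(t)}$ for $j = 2, 3$, carried out inductively starting from $\mathcal{M}_1 \leq C e^{Cf}$. Since only the number-nonconserving part of $H^{\rm{B}}(t)$, namely $\int dx \int dk\, K(t,k,x) a_k^* b_x^*$ and its adjoint, fails to commute with $(\mathcal{N}+1)^j$, the derivative reduces (the truncation being transparent since $(\mathcal{N}+1)^j$ commutes with $\id_{\leq M}$) to $\frac{d}{dt}\mathcal{M}_j = \scp{\chbm}{i[H^{\rm{B}}(t),(\mathcal{N}+1)^j]\chbm}$, which after commuting the weight through the creation/annihilation operators becomes a sum of terms of the form $\scp{\chbm}{(\mathcal{N}+1)^{(j-1)/2} \big(\int dx \int dk\, K a_k^* b_x^*\big) (\mathcal{N}+1)^{(j-1)/2}\chbm} + \text{c.c.}$ I would then split the $k$-integral at a cutoff $\Lambda \geq 1$: the region $|k| > \Lambda$ is estimated by the Lieb--Yamazaki bound \eqref{eq:preliminary estimates quadratic term 1}, producing a factor $\Lambda^{-1/2}$ together with the weighted kinetic energy $\norm{(\mathcal{N}_b + T_b)^{1/2}(\mathcal{N}+1)^{(j-1)/2}\chbm}$, while the region $|k| \leq \Lambda$ is estimated by \eqref{eq:preliminary estimates quadratic term 2}, producing a factor $\Lambda^{1/2}$ but no kinetic energy. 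The decisive point is that the weighted kinetic energy is controlled, through the truncation together with the first moment bound, by $\norm{T_b^{1/2}(\mathcal{N}+1)^{(j-1)/2}\chbm}^2 \leq (M+1)^{j-1}\norm{T_b^{1/2}\chbm}^2 \leq (M+1)^{j-1} C e^{Cf}$, so that the unbounded $T_b$ is traded for an explicit power of $M$. Bounding the remaining $\mathcal{N}$-weights by $\mathcal{M}_j$ and $\mathcal{M}_{j-1}$, optimizing over $\Lambda$ (which turns out to be $\Lambda \sim M^{1/2}$ for $j=2$ and $\Lambda \sim M^{3/4}$ for $j=3$), and closing the resulting differential inequality by Gronwall gives $\mathcal{M}_2 \leq C M^{1/2} e^{Cf}$ and, using $\mathcal{M}_2$ as input, $\mathcal{M}_3 \leq C M^{5/4} e^{Cf}$, which are precisely the second and third bounds in \eqref{eq:propgation of moments truncated Bogoliubov dynamics}.

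The main obstacle is exactly this coupling of the singular, unbounded kinetic energy $T_b$ to high powers of the number operator: the commutator of the Fr\"ohlich interaction with $(\mathcal{N}+1)^j$ cannot be bounded by powers of $\mathcal{N}$ alone, and the mixed moments $\norm{T_b^{1/2}(\mathcal{N}+1)^{(j-1)/2}\chbm}$ are not otherwise controlled. The UV cutoff split (which confines the $T_b$-dependence to a small-coefficient high-momentum piece) together with the number truncation $\mathcal{N} \leq M$ (which converts the residual kinetic weight into a fixed power of $M$) are what make the hierarchy close, and balancing the cutoff against $M$ is what produces the sharp exponents $1/4$ and $5/8$.
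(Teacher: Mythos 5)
Your proposal is correct and follows essentially the same route as the paper: well-posedness and the first moment via \cite[Theorem 8]{LNS2015} applied to the sandwiched generator, invariance of the truncation and of $\mathcal{G}_{\perp\psi_t}$ by the same commutator arguments, and the higher moments by an inductive Gronwall scheme in which the commutator $[H^{\rm B}(t),(\mathcal{N}+1)^k]$ is split at a momentum cutoff $\Lambda$, with \eqref{eq:preliminary estimates quadratic term 2} on $\abs{k}\leq\Lambda$ and the Lieb--Yamazaki bound \eqref{eq:preliminary estimates quadratic term 1} on $\abs{k}\geq\Lambda$, the truncation converting the excess $\mathcal{N}$-weight into the power $M^{k-1}$, and the choices $\Lambda\sim M^{1/2}$, $\Lambda\sim M^{3/4}$ reproducing the exponents $1/4$ and $5/8$ exactly as in the paper (the only difference being that you place the factor $M^{(k-1)/2}$ on the kinetic factor while the paper places it on the number factor, which is equivalent). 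One throwaway justification is wrong -- $K(t,\cdot,x)$ is \emph{not} square integrable in $k$ (that is precisely why Lieb--Yamazaki is needed) -- but nothing in your argument actually uses this claim, since you invoke the correct estimates of Lemma \ref{lemma:estimates H-0(t)} directly.
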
 
\begin{proof}
Note that $\id_{\leq M} H^{\rm{B}}(t) \id_{\leq M}$ satisfies the same estimates \eqref{eq:Bounds for the well-posedness of Bogoliubov} as $H^{\rm{B}}(t)$ if one replaces $A$ and $B$ by $\id_{\leq M} (\mathcal{N} + T_b) \id_{\leq M} + 1$ and $\id_{\leq M} \mathcal{N} \id_{\leq M} + 1$. By \cite[Theorem 8]{LNS2015} it follows that for every $\chi \in \mathcal{G} \cap Q \left( \mathcal{N} + T_b \right)$ there exists a unique solution to the truncated Bogoliubov equation \eqref{eq:truncated Bogoliubov dynamics}
with $\chb(0) = \chi$ such that $\chbm \in C^0 \left( [0,\infty) \cap \mathcal{G} \right) \cap L_{\rm loc}^{\infty} \left( [0, \infty) ; Q \left( \id_{\leq M} \left(\mathcal{N} + T_b \right) \right) \id_{\leq M} \right)$ and $\| \left( \mathcal{N} + T_b \right)^{1/2} \id_{\leq}  \chb(t) \|^2 \leq C e^{C f(t)}$.  Since $\frac{d}{dt} \norm{\id_{\mathcal{N} > M} \chbm(t)}^2 =0$ we get $\id_{\mathcal{N} >M} \chbm(t) = 0$  for all $t \in \mathbb{R}$. Together with the previous inequality this implies $\| \left( \mathcal{N} + T_b + 1 \right)^{1/2} \chbm(t) \| \leq C e^{C f(t)}$. Using $\left[ \id_{\leq M}, \Gamma_t \right]$ we conclude by similar means as in the proof of Lemma \ref{lemma:well-posedness Bogoliubov dynamics} that $\chbm(t) \in \mathcal{G}_{\perp \psi_t}$ if $\chi \in \mathcal{G}_{\perp \psi_0}$. In total, this proves the first part of the lemma.
Below we will prove
\begin{align}
\label{eq:Gronwall estimate for moments of the number operator}
\norm{\left( \mathcal{N} + 1 \right)^{\frac{k}{2}} \chbm(t)}^2
&\leq e^{C \int_0^t ds \, \norm{\psi_s}_{H^1(\mathbb{R}^3)}^2}
\Bigg[ \norm{\left( \mathcal{N} + 1 \right)^{\frac{k}{2}} \chbm(0)}^2
\nonumber \\
&\quad 
+ \int_0^t ds \, \left( \Lambda \norm{\left( \mathcal{N} + 1 \right)^{\frac{k-1}{2}} \chbm(s)}^2
+ \Lambda^{-1} M^{k-1}  \norm{\left( \mathcal{N} + T_b + 1 \right)^{\frac{1}{2}} \chbm(s)}^2
\right) \Bigg] 
\end{align}
for $k \in \mathbb{N}$ satisfying $k \geq 2$ and $\Lambda \geq 1$.
Choosing $\Lambda = M^{\frac{1}{2}}$ for $k= 2$ and $\Lambda = M^{\frac{3}{4}}$ for $k=3$ proves \eqref{eq:propgation of moments truncated Bogoliubov dynamics}.
It remains to show \eqref{eq:Gronwall estimate for moments of the number operator}. With this regard note that $\left[ \mathcal{N} , \id_{\mathcal{N} \leq M} \right] = 0$ and that $\mathcal{N}$ is a bounded operator on the subspace
$\left\{ \chi \in \mathcal{F} 
\otimes \mathcal{F} :  \id_{\mathcal{N} > M} \chi = 0 \right\}$. Using the shifting properties of $\mathcal{N} = \mathcal{N}_a + \mathcal{N}_b$ we calculate
\begin{align}
&\frac{d}{dt} \norm{\left( \mathcal{N} + 1 \right)^{\frac{k}{2}} \chbm(t)}^2
\nonumber \\
&\quad = i \scp{\chbm(t)}{\left[ H^{\rm{B}}(t) , \left(\mathcal{N} + 1 \right)^k \right] \chbm(t)}
\nonumber \\
\label{eq:Gronwall estimate for moments of the number operator term 1}
&\quad = 2 \Im 
\scp{\chbm(t)}{\left( \left( \mathcal{N} + 3 \right)^k - \left( \mathcal{N} + 1 \right)^k \right) \int dx \, \int dk \, \overline{\psi_t(x)} \, \overline{\scp{\psi_t}{G_{\cdot}(k) \psi_t}}_{L^2(\mathbb{R}^3)}  a_k b_x \chbm(t)}
\\
\label{eq:Gronwall estimate for moments of the number operator term 2}
&\qquad - 2 \Im
\scp{\chbm(t)}{\left( \left( \mathcal{N} + 3 \right)^k - \left( \mathcal{N} + 1 \right)^k \right) \int dx \, \int dk \, \overline{\psi_t(x)} \,  \overline{G_x(k)} a_k b_x \chbm(t)} .
\end{align}
Using again the shifting property of the number operator, \eqref{eq:estimate expectation value of G} and the Cauchy--Schwarz inequality we bound the first term by
\begin{align}
&\abs{\eqref{eq:Gronwall estimate for moments of the number operator term 1}}
\nonumber \\
&\quad = 2 \abs{\scp{\left( \left( \mathcal{N} + 3 \right)^k - \left( \mathcal{N} + 1 \right)^k \right)  \left( \mathcal{N} + 3 \right)^{1 - \frac{k}{2}}\chbm(t)}{ b(\psi_t) a \left( \scp{\psi_t}{G_{\cdot} \psi_t}_{L^2(\mathbb{R}^3)} \right) \left(\mathcal{N} + 1 \right)^{\frac{k}{2} - 1} \chbm(t)}}
\nonumber \\
&\quad \leq 2 \left( \int dk \, \abs{\scp{\psi_t}{G_{\cdot}(k) \psi_t}_{L^2(\mathbb{R}^3)}}^2 \right)^{\frac{1}{2}} \norm{\left( \left( \mathcal{N} + 3 \right)^k - \left(\mathcal{N}+ 1 \right)^k \right)  \left( \mathcal{N} + 3 \right)^{1 - \frac{k}{2}}\chbm(t)}
\norm{\mathcal{N}^{\frac{k}{2}} \chbm(t)}
\nonumber \\
&\leq C \norm{\psi_t}_{H^1(\mathbb{R}^3)} \norm{\left( \mathcal{N} + 1 \right)^{\frac{k}{2}} \chbm(t)}^2 .
\end{align}
To obtain the ultimate inequality we have, in addition, used that 
\begin{align}
\label{eq: estimate for shifted moments of the number of particle operator}
\norm{\left( \left( \mathcal{N} + 3 \right)^k - \left( \mathcal{N} + 1 \right)^k \right) \chbm(t)}
&\leq 2 k \norm{\left( \mathcal{N} + 3 \right)^{k-1} \chbm(t)}
\end{align}
holds by the spectral theorem because $\abs{y^k - x^k} \leq k y^{k-1} (y-x)$ for all $y \geq x \geq 0$.
Next, we write
\eqref{eq:Gronwall estimate for moments of the number operator term 2} as 
\begin{align}
\begin{split}
\label{eq:Gronwall estimate for moments of the number operator term 2 a}
\eqref{eq:Gronwall estimate for moments of the number operator term 2}
&= - 2 \Im
\big< \left( \left( \mathcal{N} + 3 \right)^k - \left( \mathcal{N} + 1 \right)^k \right)  \left( \mathcal{N} + 3 \right)^{\frac{1 - k}{2}} \chbm(t) , \int dx \,
\int_{\abs{k} \leq \Lambda} dk \, \overline{\psi_t(x)} \,  \overline{G_x(k)}
\\
&\qquad \qquad \quad \times
 a_k b_x \left(\mathcal{N} + 1 \right)^{\frac{k-1}{2}}  \chbm(t) \big>
\end{split}
\\
\label{eq:Gronwall estimate for moments of the number operator term 2 b}
&\quad 
- 2 \Im
\scp{\chbm(t)}{\left( \left( \mathcal{N} + 3 \right)^k - \left( \mathcal{N} + 1 \right)^k \right) \int dx \, \int_{\abs{k} \geq \Lambda} dk \, \overline{\psi_t(x)} \, \overline{G_x(k)} a_k b_x \chbm(t)}.
\end{align}
Using the second inequality of Lemma \ref{lemma:preliminary estimates} (note that the two summands on the left hand side of \eqref{eq:preliminary estimates quadratic term 2} are estimated separately) 
let us bound the first summand by
\begin{align}
\abs{\eqref{eq:Gronwall estimate for moments of the number operator term 2 a}}
&\leq C \norm{\left( \mathcal{N}+ 1 \right)^{\frac{k}{2}} \chbm(t)}
\bigg( \Lambda^{1/2} \norm{\left( \left( \mathcal{N} + 3 \right)^k - \left(\mathcal{N} + 1 \right)^k \right)  \left( \mathcal{N} + 3 \right)^{\frac{1 - k}{2}} \chbm(t)}
\nonumber \\
&\qquad \qquad \qquad  \qquad 
+ \norm{\psi_t}_{H^1(\mathbb{R}^3)}
\norm{\left( \left( \mathcal{N} + 3 \right)^k - \left(\mathcal{N} + 1 \right)^k \right)  \left( \mathcal{N} + 3 \right)^{1 - \frac{k}{2}} \chbm(t)}
\bigg)
\nonumber \\
&\leq C \norm{\left(\mathcal{N}+ 1 \right)^{\frac{k}{2}} \chbm(t)}
\bigg( \Lambda^{1/2} \norm{\left( \mathcal{N} + 1 \right)^{\frac{k-1}{2}} \chbm(t)}
+ \norm{\psi_t}_{H^1(\mathbb{R}^3)}
\norm{\left( \mathcal{N} + 1 \right)^{\frac{k}{2}} \chbm(t)}
\bigg) 
\nonumber \\
&\leq C \norm{\psi_t}_{H^1(\mathbb{R}^3)} 
\norm{\left( \mathcal{N} + 1 \right)^{\frac{k}{2}} \chbm(t)}^2
+ \Lambda  \norm{\left( \mathcal{N} + 1 \right)^{\frac{k-1}{2}} \chbm(t)}^2 .
\end{align}
By the first inequality of Lemma \ref{lemma:preliminary estimates} and \eqref{eq: estimate for shifted moments of the number of particle operator} we obtain
\begin{align}
\abs{\eqref{eq:Gronwall estimate for moments of the number operator term 2 b}}
&\leq C \norm{\psi_t}_{H^1(\mathbb{R}^3)} \Lambda^{- \frac{1}{2} } 
\norm{\left( \mathcal{N} + T_b \right)^{\frac{1}{2}} \chbm(t)}
\norm{\left( \mathcal{N}_a + 1 \right)^{\frac{1}{2}}\left( \left( \mathcal{N} + 3 \right)^k - \left( \mathcal{N} + 1 \right)^k \right) \chbm(t)}
\nonumber \\
&\leq C \norm{\psi_t}_{H^1(\mathbb{R}^3)} \Lambda^{- \frac{1}{2} }
\norm{\left( \mathcal{N} + T_b \right)^{\frac{1}{2}} \chbm(t)}
\norm{\left( \mathcal{N} + 1 \right)^{k - \frac{1}{2}} \chbm(t)}
\nonumber \\
&\leq C \norm{\psi_t}_{H^1(\mathbb{R}^3)} \Lambda^{- \frac{1}{2} } M^{\frac{k-1}{2}}
\norm{\left( \mathcal{N} + T_b \right)^{\frac{1}{2}} \chbm(t)}
\norm{\left( \mathcal{N} + 1 \right)^{\frac{k}{2}} \chbm(t)}
\nonumber \\
&\leq C \norm{\psi_t}_{H^1(\mathbb{R}^3)}^2
\norm{\left( \mathcal{N} + 1 \right)^{\frac{k}{2}} \chbm(t)}^2 
+ 
\Lambda^{- 1 } M^{k-1}
\norm{\left( \mathcal{N} + T_b \right)^{\frac{1}{2}} \chbm(t)}^2 .
\end{align}
In total, we get
\begin{align}
\frac{d}{dt} \norm{\left( \mathcal{N} + 1 \right)^{\frac{k}{2}} \chbm(t)}^2
&\leq C \norm{\psi_t}_{H^1(\mathbb{R}^3)}^2 
\norm{\left( \mathcal{N} + 1 \right)^{\frac{k}{2}} \chbm(t)}^2
+ \Lambda  \norm{\left( \mathcal{N} + 1 \right)^{\frac{k-1}{2}} \chbm(t)}^2 
\nonumber \\
&\quad + 
\Lambda^{- 1 } M^{k-1}
\norm{\left( \mathcal{N} + T_b \right)^{\frac{1}{2}} \chbm(t)}^2 .
\end{align}
Inequality \eqref{eq:Gronwall estimate for moments of the number operator} then follows by Gronwall's lemma.
\end{proof}

The following Lemma compares the Bogoliubov dynamics to the one with cutoff in the total number of particles.

\begin{lemma}
\label{lemma:Bog dynamics with and without cutoff norm estimate}
Let $M \in \mathbb{N}$ such that $M \geq 3$, $\widetilde{C} >0$ and $\chi \in  \mathcal{G}$ such that $\big\| ( \Number^{3/2} + T_b^{1/2} + 1 ) \chi \big\| \leq \widetilde{C}$. Let $\chb(t)$ and $\chbm(t)$ be the unique solutions of \eqref{eq:Bogoliubov dynamics} and \eqref{eq:truncated Bogoliubov dynamics} with $\chb(0) = \chi$. Then, there exists a constant $C>0$ (depending only on $\widetilde{C}$ and $\mathcal{E}[\psi,\varphi]$)  such that
\begin{align}
\label{eq:Bog dynamics with and without cutoff norm estimate}
\norm{\chb(t) - \chbm(t)}^2
&\leq  C e^{C f(t)}  M^{-3/8}.
\end{align} 
\end{lemma}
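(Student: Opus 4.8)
The plan is to introduce the difference $D(t) = \chb(t) - \chbm(t)$ and to close a Gronwall estimate for $\norm{D(t)}^2$ in which the only nontrivial contribution is the mismatch of the two generators, localised on the high excitation sectors. Since $\id_{>M}\chbm(t)=0$ for all $t$ by Lemma \ref{lemma:well-posedness truncated Bogoliubov dynamics}, the initial datum satisfies $D(0)=\id_{>M}\chi$, and the moment assumption on $\chi$ gives $\norm{D(0)}\leq M^{-3/2}\norm{\mathcal{N}^{3/2}\chi}\leq\widetilde{C}M^{-3/2}$, negligible against the claimed bound. Differentiating, using the self-adjointness of $H^{\rm{B}}(t)$ (so that the diagonal term $\Im\scp{D(t)}{H^{\rm{B}}(t)D(t)}$ vanishes) and $\chbm(t)=\id_{\leq M}\chbm(t)$, I would arrive at
\begin{align}
\frac{d}{dt}\norm{D(t)}^2 = 2\,\Im\,\scp{\id_{>M}D(t)}{H^{\rm{B}}(t)\chbm(t)}.
\end{align}

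The decisive structural point is that the Bogoliubov Hamiltonian \eqref{eq:Bogoliubov Hamiltonian definition} changes the total excitation number $\mathcal{N}=\mathcal{N}_a+\mathcal{N}_b$ only by $0$ or $\pm2$: the terms $\int dx\, b_x^* h(t)b_x$, $\mathcal{N}_a$, $\int dx\int dk\, K a_{-k}b_x^*$ and their conjugates preserve $\mathcal{N}$, while $\int dx\int dk\, K a_k^* b_x^*$ raises it by two and its conjugate lowers it by two. Since $\chbm(t)$ is supported on $\mathcal{N}\leq M$, only the raising term survives the projection $\id_{>M}$, and it reaches sectors above $M$ solely from $\mathcal{N}\in\{M-1,M\}$. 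Hence the source term equals $\scp{\id_{>M}D(t)}{\int dx\int dk\, K a_k^* b_x^*\,\id_{\mathcal{N}\geq M-1}\chbm(t)}$. Because $\id_{\mathcal{N}\geq M-1}\chbm(t)$ is supported on $\{M-1,M\}$ while $\id_{>M}D(t)$ lives on $\mathcal{N}\geq M+1$, I may add back the $\mathcal{N}$-preserving operator $a_{-k}b_x^*$ at no cost and apply \eqref{eq:preliminary estimates quadratic term 3} with $\chi=\id_{>M}D(t)$ and $\Psi=\id_{\mathcal{N}\geq M-1}\chbm(t)$, bounding the right-hand side by
\begin{align}
C\norm{\psi_t}_{H^1(\mathbb{R}^3)}\,\norm{(\mathcal{N}_b+T_b)^{1/2}\id_{>M}D(t)}\,\norm{(\mathcal{N}_a+1)^{1/2}\id_{\mathcal{N}\geq M-1}\chbm(t)}.
\end{align}

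For the first factor I would use $\id_{>M}D(t)=\id_{>M}\chb(t)$, the fact that $\mathcal{N}_b$ and $T_b$ commute with $\id_{>M}$, and the first-moment propagation \eqref{eq:Bog dynamics estimate number operators} to get $\norm{(\mathcal{N}_b+T_b)^{1/2}\id_{>M}D(t)}\leq Ce^{Cf(t)}$. For the second factor, on the support $\mathcal{N}\geq M-1$ one has $(\mathcal{N}+1)^{1/2}\leq M^{-1}(\mathcal{N}+1)^{3/2}$, so the third-moment bound in \eqref{eq:propgation of moments truncated Bogoliubov dynamics} yields $\norm{(\mathcal{N}_a+1)^{1/2}\id_{\mathcal{N}\geq M-1}\chbm(t)}\leq M^{-1}\norm{(\mathcal{N}+1)^{3/2}\chbm(t)}\leq Ce^{Cf(t)}M^{-3/8}$. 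Combining the two factors gives $\frac{d}{dt}\norm{D(t)}^2\leq C\norm{\psi_t}_{H^1(\mathbb{R}^3)}e^{Cf(t)}M^{-3/8}$, and integrating in time, using that $f'(s)=\norm{\psi_s}_{H^3(\mathbb{R}^3)}^2+\norm{\varphi_s}_{L_2^2(\mathbb{R}^3)}^2\geq\norm{\psi_s}_{L^2(\mathbb{R}^3)}^2=1$ (so that $\int_0^t e^{Cf(s)}\,ds\leq C^{-1}e^{Cf(t)}$) while $\norm{\psi_s}_{H^1(\mathbb{R}^3)}$ stays bounded by energy conservation \eqref{eq:estimate Pekar energy}, I would obtain $\norm{D(t)}^2\leq\norm{D(0)}^2+Ce^{Cf(t)}M^{-3/8}\leq Ce^{Cf(t)}M^{-3/8}$, which is \eqref{eq:Bog dynamics with and without cutoff norm estimate}.

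The main obstacle is the correct handling of the source term $\id_{>M}H^{\rm{B}}(t)\chbm(t)$: one must simultaneously exploit that the raising part of $H^{\rm{B}}(t)$ couples only neighbouring $\mathcal{N}$-sectors (so that the projection confines $\chbm(t)$ to its top sectors) and the propagation of the third moment of $\mathcal{N}$ from Lemma \ref{lemma:well-posedness truncated Bogoliubov dynamics}, which is precisely what converts this tail into the quantitative gain $M^{-3/8}$. A secondary technical point is that the singular form factor forces the appearance of the kinetic energy $T_b$ in \eqref{eq:preliminary estimates quadratic term 3}; this is harmless here because $T_b$ commutes with $\id_{>M}$ and is controlled along $\chb(t)$ by Lemma \ref{lemma:well-posedness Bogoliubov dynamics}.
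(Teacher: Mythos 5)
Your proof is correct and follows essentially the same route as the paper's: identify $\id_{>M}H^{\rm B}(t)\id_{\leq M}$ with the raising part $\int K a_k^*b_x^*$ acting on the top sectors of $\chbm(t)$, bound it via the quadratic-term estimate, and convert the localization to $\mathcal{N}\geq M-1$ into the gain $M^{-1}\cdot M^{5/8}=M^{-3/8}$ using the third-moment propagation of Lemma \ref{lemma:well-posedness truncated Bogoliubov dynamics}. The only (harmless) deviation is that you reinstate the $\mathcal{N}$-preserving term $a_{-k}b_x^*$ by orthogonality so as to invoke \eqref{eq:preliminary estimates quadratic term 3} verbatim, whereas the paper states the analogous bound for the $a_k^*b_x^*$ part alone.
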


\begin{proof}
We have
\begin{align}
\frac{d}{dt}\norm{\chb(t) - \chbm(t)}^2
&= 2 \Im \scp{\chb(t)}{\left( H^{\rm{B}}(t) - \id_{\leq M} H^{\rm{B}}(t) \id_{\leq M} \right) \chbm(t)}
\nonumber \\
&= 2 \Im \scp{\chb(t)}{ \id_{> M} \int dx \, \int dk \,  K(t,k,x) a_k^*  b_x^*   \id_{\leq M}  \chbm(t)}
\nonumber \\
&= 2 \Im \scp{\chb(t)}{ \id_{\mathcal{N} > M} \int dx \, \int dk \,  K(t,k,x) a_k^*  b_x^*   \id_{M - 2 \leq \mathcal{N} \leq M}  \chbm(t)}
\end{align}
because $\chi_{B,M}(t) = \id_{\leq M} \chi_{B,M}(t)$ and other contributions of $H^{\rm{B}}(t)$ map $\left\{ \chi \in \mathcal{G}:  \id_{\mathcal{N} > M} \chi = 0 \right\}$ into itself. Note that
\begin{align}
\abs{\scp{\chi}{\int dx  \int dk \, K(t,k,x) a_k^* b_x^* \Psi}} 
&\leq C \norm{\psi_t}_{H^1} \norm{\left( T_b + \mathcal{N}_b \right)^{1/2} \chi} \norm{\left( \mathcal{N}_a + 1 \right)^{1/2} \Psi}
\end{align}
can be shown in complete analogy to \eqref{eq:preliminary estimates quadratic term 3}. Together with $\left[\left( T_b + \mathcal{N}_b \right), \id_{\mathcal{N} > M} \right]$, Lemma \ref{lemma:well-posedness Bogoliubov dynamics} and Lemma \ref{lemma:well-posedness truncated Bogoliubov dynamics} we get
\begin{align}
\frac{d}{dt}\norm{\chb(t) - \chbm(t)}^2 
&\leq C  \norm{\psi_t}_{H^1}
\norm{\left( T_b + \mathcal{N}_b \right)^{1/2} \id_{\mathcal{N} > M} \chb(t)}
\norm{\left( \mathcal{N}_a + 1 \right)^{1/2} \id_{M - 2 \leq \mathcal{N} \leq M}  \chbm(t)}
\nonumber \\
&\leq C  \norm{\psi_t}_{H^1}
\norm{\left( T_b + \mathcal{N}_b \right)^{1/2} \chb(t)}
\norm{\left( \mathcal{N}_a + 1 \right)^{1/2} \id_{M - 2 \leq \mathcal{N} \leq M}  \chbm(t)}
\nonumber \\
&\leq C (M-2)^{-1}  \norm{\psi_t}_{H^1}
\norm{\left( T_b + \mathcal{N}_b \right)^{1/2} \chb(t)}
\norm{\left( \mathcal{N} + 1 \right)^{3/2}   \chbm(t)} 
\nonumber \\
&\leq C e^{C f(t)}  M^{- 3/8} .
\end{align} 
Here, $f(t) = \int_0^t ds \, \left( \norm{\psi_s}_{H^3(\mathbb{R}^3)}^2 + \norm{\varphi_s}_{L_2^2(\mathbb{R}^3)}^2 \right)$ and $C$ depends only on $\widetilde{C}$ and $\mathcal{E}[\psi,\varphi]$. Using 
\begin{align}
\norm{\chb(0) - \chbm(0)}^2 
&= \norm{\id_{\mathcal{N} > M} \chi}^2
\leq M^{-3} \norm{ \id_{\mathcal{N} > M} \mathcal{N}^{3/2} \chi} \leq \widetilde{C} M^{-3}
\end{align}
and Duhamel's formula shows the claim. 
\end{proof}

\subsection{Norm approximation}

\begin{proof}[Proof of Theorem \ref{theorem:Bogoliubov approximation}]

Since $\chi \in \mathcal{G}_{\perp \psi_0}$ we have $\chi_{\le N}(t), \big( \chb^{(k)}(t) \big)_{k=0}^N  \in \mathcal G^{\leq N}_{\perp \psi_t}$ for all $t \in \mathbb{R}$. 
Using \eqref{eq:decomposition many-body state} and \eqref{eq:norm approximation main result approximating many-body state} we estimate
\begin{align}
\label{eq:norm-approximation first step in the proof}
\norm{\Psi_{N,t} - \Psi_{N,t}^{\rm{B}}}_{\mathcal{H}^{(N)}}
&= 
\norm{\left( \ch^{(k)}(t) - \chb^{(k)}(t) \right)_{k=0}^N}_{\mathcal{G}_{\leq N}}
\nonumber \\
&\leq \norm{\chi(t) - \chb(t)}_{\mathcal{G}}
\nonumber \\
&\leq 
\norm{\chi(t) - \chbm(t)} 
+  \norm{\chb(t) - \chbm(t)}
\end{align}
with $\chi(t)$ and $\chbm(t)$ being defined as in \eqref{eq:definition many-body state on full double Fock space} and \eqref{eq:truncated Bogoliubov dynamics}. Because of \eqref{eq:Schroedinger equation full Fock space} and $\chbm(t) = \id_{\leq M} \chbm(t)$ we get
\begin{align}
\frac{d}{dt}\norm{\chi(t) - \chbm(t)}^2
&= 2 \Im \scp{\chi(t)}{\big( H(t) - \id_{\leq M} H^{\rm{B}}(t) \id_{\leq M} \big) \chbm(t)}
\nonumber \\
&= 2 \Im \scp{\chi(t)}{\big( H(t) - \id_{\leq M} H^{\rm{B}}(t) \big) \id_{\leq M}  \chbm(t)} .
\end{align}
Note that 
\begin{subequations}
\begin{align}
&H(t) - \id_{\leq M} H^{\rm{B}}(t) 
\nonumber \\
\label{eq:difference Hamiltonian and Bogoliubov Hamiltonian a}
&\quad = \id_{> M} \left( \int dx\, b^{*}_x h(t) b_x + \mathcal{N}_a \right)  
\\
\label{eq:difference Hamiltonian and Bogoliubov Hamiltonian b}
&\qquad +  \id_{> M} \left(
\int dx \int dk\,  K(t,k,x) \big( a^{*}_k + a_{-k}\big) b^*_x \big[ 1- N^{-1}\Number_b \big]_+^{1/2} + \text{h.c.}
\right)  
\\
\label{eq:difference Hamiltonian and Bogoliubov Hamiltonian c}
&\qquad +  \id_{\leq M} \left(
\int dx \int dk\,  K(t,k,x) \big( a^{*}_k + a_{-k}\big) b^*_x \left(  \big[ 1- N^{-1}\Number_b \big]_+^{1/2} - 1 \right) + \text{h.c.}
\right)  
\\
\label{eq:difference Hamiltonian and Bogoliubov Hamiltonian d}
&\qquad + N^{-1/2} \int dx\, b^{*}_x \Big( q(t) \widehat{\Phi} q(t) - \scp{\psi_t}{\widehat{\Phi} \psi_t}_{L^2(\mathbb{R}^3)} \Big) b_x   .
\end{align}
\end{subequations}
The contribution from \eqref{eq:difference Hamiltonian and Bogoliubov Hamiltonian a} vanishes because the operators in the brackets leave the total number of excitations invariant. We, moreover, have
\begin{align}
\eqref{eq:difference Hamiltonian and Bogoliubov Hamiltonian b}  \id_{\leq M}
&= 
\id_{> M} 
\int dx \int dk\,  K(t,k,x)  \left( a^{*}_k + a_{-k} \right)  b^*_x \big[ 1- N^{-1}\Number_b \big]_+^{1/2}   \id_{M-1 \leq \mathcal{N}  \leq M}  .
\end{align}
This leads to
\begin{subequations}
\begin{align}
& \frac{1}{2} \frac{d}{dt}\norm{\chi(t) - \chbm(t)}^2
\nonumber \\
\label{eq:time derivative norm estimate 1}
&\quad =  \Im \scp{\chi(t)}{\id_{> M} 
\int dx \int dk\,  K(t,k,x) \left( a^{*}_k + a_{-k} \right)  b^*_x \big[ 1- N^{-1}\Number_b \big]_+^{1/2}   \id_{M-1 \leq \mathcal{N}  \leq M}  \chbm(t)}
\\
\begin{split}
\label{eq:time derivative norm estimate 2}
&\qquad + 
 \Im \bigg\langle \chi(t) , \id_{\leq M} \bigg(
\int dx \int dk\,  K(t,k,x) \big( a^{*}_k + a_{-k}\big) b^*_x 
\\
&\qquad \qquad  \qquad \qquad \times 
\left(  \big[ 1- N^{-1}\Number_b \big]_+^{1/2} - 1 \right) + \text{h.c.}
\bigg)   \chbm(t) \bigg\rangle
\end{split}
\\
\label{eq:time derivative norm estimate 3}
&\qquad +
 N^{-1/2} \Im \scp{\chi(t)}{\int dx\, b^{*}_x \Big( q(t) \widehat{\Phi} q(t) - \scp{\psi_t}{\widehat{\Phi} \psi_t}_{L^2(\mathbb{R}^3)} \Big) b_x     \chbm(t)} .
\end{align}
\end{subequations}
In the following we estimate each term separately.

\paragraph{The term \eqref{eq:time derivative norm estimate 1}}
Using Lemma \ref{lemma:preliminary estimates} we bound the first term by
\begin{align}
\abs{\eqref{eq:time derivative norm estimate 1}}
&\leq C \norm{\psi_t}_{H^1} \norm{\left( \mathcal{N}_b + T_b \right)^{1/2} \id_{>M} \chi(t)}
\norm{\left( \mathcal{N}_a + 1 \right)^{1/2} \big[ 1- N^{-1}\Number_b \big]_+^{1/2}   \id_{M-1 \leq \mathcal{N}  \leq M}  \chbm(t)}
\nonumber \\
&\leq C \norm{\psi_t}_{H^1} \norm{\left( \mathcal{N}_b + T_b \right)^{1/2}\chi(t)}
\norm{\left( \mathcal{N}_a + 1 \right)^{1/2}    \id_{M-1 \leq \mathcal{N}  \leq M}  \chbm(t)}
\nonumber \\
&\leq C M^{-1} \norm{\psi_t}_{H^1} \norm{\left( \mathcal{N}_b + T_b \right)^{1/2}\chi(t)}
\norm{\left( \mathcal{N} + 1 \right)^{3/2}    \chbm(t)} .
\end{align}
By means of \eqref{eq:initial condition moment assumption}, \eqref{eq:estimate number operator many-body dynamics} and Lemma \ref{lemma:well-posedness truncated Bogoliubov dynamics} we get
\begin{align}
\abs{\eqref{eq:time derivative norm estimate 1}}
&\leq C e^{C f(t)} \norm{\psi_t}_{H^1(\mathbb{R}^3)} M^{-3/8} .
\end{align}

\paragraph{The term \eqref{eq:time derivative norm estimate 2}}

Note that
\begin{subequations}
\begin{align}
&\abs{\eqref{eq:time derivative norm estimate 2}}
\nonumber \\
\label{eq:time derivative norm estimate 2a}
&\quad \leq 
\bigg| \bigg\langle \chi(t) ,\id_{\leq M} \bigg(
 \int dk\, \scp{\psi_t}{G_{\cdot}(k) \psi_t} \big( a^{*}_k + a_{-k}\big) b^* (\psi_t) 
\left(  \big[ 1- N^{-1}\Number_b \big]_+^{1/2} - 1 \right) + \text{h.c.}
\bigg)   \chbm(t) \bigg\rangle  \bigg|
\\
\label{eq:time derivative norm estimate 2b}
&\qquad + 
\abs{\scp{\chi(t)}{\id_{\leq M} 
\int dx \int dk \,  \psi_t(x) G_x(k) \big( a^{*}_k + a_{-k}\big) b^*_x \left(  \big[ 1- N^{-1}\Number_b \big]_+^{1/2} - 1 \right)   \chbm(t)}}
\\
\label{eq:time derivative norm estimate 2c}
&\qquad + 
\abs{\scp{\int dx \int dk \,  \psi_t(x) G_x(k) \big( a^{*}_k + a_{-k}\big) b^*_x \left(  \big[ 1- N^{-1}\Number_b \big]_+^{1/2} - 1 \right) \id_{\leq M} \, \chi(t)}{  \chbm(t)} } 
\end{align}
\end{subequations}
follows directly from the definition of $K$.
Due to \eqref{eq: bound for annihialtion and creation operators} and the shifting property of the number operator we have
\begin{align}
\abs{\eqref{eq:time derivative norm estimate 2a}}
&\leq C \norm{\psi_t}_{H^1(\mathbb{R}^3)} \norm{\left(\mathcal{N} + 1 \right)^{1/2} \chi(t)}
\norm{\left( \mathcal{N} + 1 \right)^{1/2} \left(  \big[ 1- N^{-1} ( \Number_b - 1 ) \big]_+^{1/2} - 1 \right)  \id_{\mathcal{N}_b \geq 1} \chbm(t)}
\nonumber \\
&\quad +
C \norm{\psi_t}_{H^1(\mathbb{R}^3)} \norm{\left(\mathcal{N} + 1 \right)^{1/2} \chi(t)}
\norm{\left( \mathcal{N} + 1 \right)^{1/2} \left(  \big[ 1- N^{-1}\Number_b \big]_+^{1/2} - 1 \right) \chbm(t)} .
\end{align}
Using $\big(  \big[ 1- N^{-1} ( \Number_b - 1 ) \big]_+^{1/2} - 1 \big)  \id_{\mathcal{N}_b \geq 1} \leq C N^{-1} \mathcal{N}_b$ and
$\left[1 - N^{-1} \mathcal{N}_b \right]_{+}^{\frac{1}{2}} - 1 \leq C N^{-1} \mathcal{N}_b$, which is a consequence of $\left[ 1 - x \right]_{+}^{\frac{1}{2}} - 1 \leq x$ for all $x \geq 0$ and the spectral calculus, we obtain
\begin{align}
\abs{\eqref{eq:time derivative norm estimate 2a}}
&\leq C N^{-1} \norm{\psi_t}_{H^1(\mathbb{R}^3)} \norm{\left(\mathcal{N} + 1 \right)^{1/2} \chi(t)}
\norm{\left( \mathcal{N} + 1 \right)^{3/2}  \chbm(t)} .
\end{align}
By means of Lemma \ref{lemma:preliminary estimates} we estimate 
\begin{align}
\abs{\eqref{eq:time derivative norm estimate 2b}}
&\leq  C \norm{\psi_t}_{H^1(\mathbb{R}^3)} \norm{\left( \mathcal{N}_b + T_b \right)^{1/2} \chi(t)}
\norm{\left( \mathcal{N}_a + 1 \right)^{1/2} \left(  \big[ 1- N^{-1}\Number_b \big]_+^{1/2} - 1 \right) \chbm(t)}
\nonumber \\
&\leq C N^{-1} \norm{\psi_t}_{H^1(\mathbb{R}^3)} \norm{\left( \mathcal{N}_b + T_b \right)^{1/2} \chi(t)}
\norm{\left( \mathcal{N} + 1 \right)^{3/2}  \chbm(t)} .
\end{align}
Before estimating $\eqref{eq:time derivative norm estimate 2c}$ we shift the term involving the number operator to the right hand side of the scalar product and split the integral in $k$ by means of a cutoff parameter $\Lambda_1 \geq 1$. Applying \eqref{eq:preliminary estimates quadratic term 1} and \eqref{eq:preliminary estimates quadratic term 2} with $\Lambda = \Lambda_1$ then leads to
\begin{align}
&\abs{\eqref{eq:time derivative norm estimate 2c}}
\nonumber \\
&\quad \leq C \norm{\psi_t}_{H^1(\mathbb{R}^3)} 
\norm{\left( \mathcal{N}_a + 1 \right)^{1/2} \id_{\leq M} \chi(t)}
\bigg[  \Lambda_1^{1/2} 
\norm{\mathcal{N}_b ^{1/2} \left(  \big[ 1- N^{-1} (\Number_b - 1) \big]_+^{1/2} - 1 \right) \id_{\mathcal{N}_b \geq 1} \chbm(t)}
\nonumber \\
&\qquad +
 \Lambda_1^{-1/2} 
\norm{\left( \mathcal{N}_b + T_b \right)^{1/2} \left(  \big[ 1- N^{-1} (\Number_b - 1) \big]_+^{1/2} - 1 \right) \id_{\mathcal{N}_b \geq 1} \chbm(t)}
\bigg]
\nonumber \\
&\quad \leq C \norm{\psi_t}_{H^1(\mathbb{R}^3)} \norm{\left( \mathcal{N} + 1 \right)^{1/2}  \chi(t)}
\bigg[  N^{-1} \Lambda_1^{1/2} \norm{\left( \mathcal{N} + 1 \right)^{3/2} \chbm(t)}
+ \Lambda_1^{-1/2}  \norm{\left( \mathcal{N}_b + T_b \right)^{1/2}  \chbm(t)}
\bigg] .
\end{align}
Summing up, we get
\begin{align}
\abs{\eqref{eq:time derivative norm estimate 2} }
&\leq C \norm{\psi_t}_{H^1(\mathbb{R}^3)} N^{-1}  \Lambda_1^{1/2}  \norm{\left( \mathcal{N} + T_b + 1 \right)^{1/2} \chi(t)} \norm{\left( \mathcal{N} + 1 \right)^{3/2} \chbm(t)}
\nonumber \\
&\quad + C \norm{\psi_t}_{H^1(\mathbb{R}^3)} \Lambda_1^{- 1/2} \norm{\left( \mathcal{N} + 1 \right)^{1/2} \chi(t)}
\norm{\left( \mathcal{N}_b + T_b \right)^{1/2} \chbm(t)} .
\end{align}
Using \eqref{eq:initial condition moment assumption}, \eqref{eq:estimate number operator many-body dynamics}, as well as Lemma \ref{lemma:well-posedness truncated Bogoliubov dynamics} and setting $\Lambda_1 = N M^{- 5/8}$ leads to
\begin{align}
\abs{\eqref{eq:time derivative norm estimate 2} }
&\leq C e^{C f(t)}  \norm{\psi_t}_{H^1(\mathbb{R}^3)} \left(  N^{-1} M^{5/8}  \Lambda_1^{1/2}  
 +  \Lambda_1^{- 1/2} \right) 
= C e^{C f(t)} \norm{\psi_t}_{H^1(\mathbb{R}^3)} N^{-1/2} M^{5/16}  .
\end{align}

\paragraph{The term \eqref{eq:time derivative norm estimate 3}}

By means of  $\chbm(t) = \id_{\leq M}  \chbm(t)$, \eqref{eq:preliminary estimate cubic term 1} with $\Lambda = \Lambda_2 \geq 1$, \eqref{eq:initial condition moment assumption}, \eqref{eq:estimate number operator many-body dynamics} and Lemma \ref{lemma:well-posedness truncated Bogoliubov dynamics} we get
\begin{align}
\abs{\eqref{eq:time derivative norm estimate 3}}
&\leq  N^{-1/2} \abs{\scp{\id_{\leq M + 1} \, \chi(t)}{\int dx\, b^{*}_x \Big( q(t) \widehat{\Phi} q(t) - \scp{\psi_t}{\widehat{\Phi} \psi_t} \Big) b_x   \id_{\leq M}  \chbm(t)} }
\nonumber \\
&\leq C  \norm{\psi_t}_{H^1(\mathbb{R}^3)} N^{-1/2}
\bigg[
\Lambda_2^{1/2} 
\norm{\left( \mathcal{N}_b + T_b \right)^{1/2}\chi(t)} \norm{\left( \mathcal{N}_a + 1 \right)^{1/2} \mathcal{N}_b^{1/2} \chbm(t)}
\nonumber \\
&\qquad \qquad \qquad \qquad +
\Lambda_2^{-1/2} \norm{\left( \mathcal{N}_a + 1 \right)^{1/2} \mathcal{N}_b^{1/2} \id_{\leq M + 1} \chi(t)}
\norm{\left( \mathcal{N}_b + T_b \right)^{1/2} \chbm(t)}
\bigg]
\nonumber \\
&\leq C  \norm{\psi_t}_{H^1(\mathbb{R}^3)}  N^{-1/2}
\norm{\left( \mathcal{N}_b + T_b \right)^{1/2}\chi(t)} 
\bigg[ \Lambda_2^{1/2} 
\norm{\left( \mathcal{N}_a + 1 \right)^{1/2} \mathcal{N}_b^{1/2} \chbm(t)}
\nonumber \\
&\qquad \qquad \qquad \qquad \qquad \qquad  
\qquad \qquad  \quad + \Lambda_2^{-1/2}    M^{1/2} 
\norm{\left( \mathcal{N}_b + T_b \right)^{1/2} \chbm(t)}
\bigg] 
\nonumber \\
&\leq C  e^{C f(t)}  \norm{\psi_t}_{H^1(\mathbb{R}^3)} N^{-1/2}
\left(
 M^{1/4} \Lambda_2^{1/2}    
+
M^{1/2}  \Lambda_2^{-1/2}    
\right) .
\end{align}
Choosing $\Lambda_2 =  M^{1/4}$ leads to
\begin{align}
\abs{\eqref{eq:time derivative norm estimate 3}}
&\leq C  e^{C f(t)} \norm{\psi_t}_{H^1(\mathbb{R}^3)}  N^{-1/2} M^{3/8} .
\end{align}
If we collect the estimates and use \eqref{eq:estimate Pekar energy} to bound the $H^1$-norm of the condensate wave function we obtain
\begin{align}
\frac{d}{dt} \norm{\chi(t)  - \chbm(t)}^2
&\leq  C  e^{C f(t)} \left( M^{-3/8} +   N^{-1/2} M^{3/8}  \right) .
\end{align}
Since
\begin{align}
\norm{\chi(0)  - \chbm(0)}
&\leq   \norm{\chi(0) - \chi}  
+  \norm{\chi -  \chbm(0)}
\leq  \norm{\id_{\mathcal{N}_b > N} \chi}
+  \norm{\id_{\mathcal{N} > M} \chi}
\leq  C \left( N^{- 3/2} + M^{- 3/2} \right) 
\end{align}
we get
\begin{align}
\norm{\chi(t)  - \chbm(t)}^2
&\leq  C  e^{C f(t)} \left( M^{-3/8} +   N^{-1/2} M^{3/8} + N^{-3} \right) 
\end{align}
by Duhamel's formula. Plugging this estimate and \eqref{eq:Bog dynamics with and without cutoff norm estimate} into \eqref{eq:norm-approximation first step in the proof} and choosing $M= N^{2/3}$ concludes the proof.
\end{proof}

\appendix

\section{Proof of Lemma \ref{lemma:preliminary estimates} and Lemma \ref{lemma: cubic interaction}}
\label{appendix:estimates for the interaction}

\begin{proof}[Proof of Lemma \ref{lemma:preliminary estimates}]
Using the commutator method of Lieb and Yamazaki \cite{LY1958}, i.e. 
\begin{align}
\label{eq:commutator with G-x}
\big( 1 + \abs{k}^2 \big) G_x(k) =  
G_x(k) +  (2 \pi)^{-1}   \left[ k \cdot i \nabla_x , G_x(k) \right]
\end{align} 
and integration by parts,
we write the left hand side of the first inequality as
\begin{align}
&\scp{\chi}{\int dx \, \int_{\abs{k} \geq \Lambda} dk \, \psi_t(x) G_x(k) \left( a_k^* + a_{-k} \right) b_x^* \Psi}
\nonumber \\
&\quad =
\int dx \, \int_{\abs{k} \geq \Lambda} dk \, \frac{1}{1 + k^2} \scp{b_x \chi}{ G_x(k) \left( a_k^* + a_{-k} \right) \psi_t(x)  \Psi}
\nonumber \\
&\qquad + \frac{1}{2 \pi}
\int dx \, \int_{\abs{k} \geq \Lambda} dk \, \frac{k}{1 + k^2} \scp{i \nabla_x b_x \chi}{ G_x(k) \left( a_k^* + a_{-k} \right) \psi_t(x)  \Psi}
\nonumber \\
&\qquad - \frac{1}{2 \pi}
\int dx \, \int_{\abs{k} \geq \Lambda} dk \, \frac{k}{1 + k^2} \scp{ b_x \chi}{ G_x(k) \left( a_k^* + a_{-k} \right) i \nabla_x \psi_t(x)  \Psi} .
\end{align}
By means of the Cauchy--Schwarz inequality we obtain
\begin{align}
&\abs{\scp{\chi}{\int dx \, \int_{\abs{k} \geq \Lambda} dk \, \psi_t(x) G_x(k) \left( a_k^* + a_{-k} \right) b_x^* \Psi}}
\nonumber \\
&\quad \leq C \norm{\psi_t}_{H^1(\mathbb{R}^3)} \norm{\id_{\abs{\cdot} \geq \Lambda}  (1 + \abs{\cdot}^2)^{-1}}_{L^2(\mathbb{R}^3)} \norm{\left( \mathcal{N}_a + 1 \right)^{1/2} \Psi} \left( \norm{\mathcal{N}_b^{1/2} \chi} 
+ \norm{T_b^{1/2} \chi} \right) .
\end{align}
Together with $\norm{\id_{\abs{\cdot} \geq \Lambda}  (1 + \abs{\cdot}^2)^{-1}}_{L^2(\mathbb{R}^3)} \leq \sqrt{4 \pi/ \Lambda} $ this shows \eqref{eq:preliminary estimates quadratic term 1}.
Using
\begin{align}
\norm{\int_{\abs{k} \leq \Lambda} dk \, G_x(k) a_k^* \psi_t(x) \Psi}^2
&=
\norm{\int_{\abs{k} \leq \Lambda} dk \, G_x(k) a_{-k} \psi_t(x) \Psi}^2 
+ 4 \pi \Lambda \abs{\psi_t(x)}^2 \norm{\Psi}^2 
\end{align}
and the Cauchy--Schwarz inequality we estimate
\begin{align}
&\abs{\scp{\chi}{\int dx \, \int_{\abs{k} \leq \Lambda} dk \, \psi_t(x) G_x(k) \left( a_k^* + a_{-k} \right) b_x^* \Psi} }
\nonumber \\
&\quad \leq 
\int dx \, \norm{b_x \chi} 
\left( 
\norm{\int_{\abs{k} \leq \Lambda} dk \,  G_x(k)  a_{-k}  \psi_t(x) \Psi}
+
\norm{\int_{\abs{k} \leq \Lambda} dk \,  G_x(k)  a_{k}^*  \psi_t(x) \Psi}
\right)
\nonumber \\
&\quad \leq C \norm{\mathcal{N}_b^{1/2} \chi}
\left[
\left( \int \, dx \norm{\int_{\abs{k} \leq \Lambda} dk \,  G_x(k)  a_{-k}  \psi_t(x) \Psi}^2 \right)^{1/2}
+
\Lambda^{1/2} \norm{\psi_t}_{L^2(\mathbb{R}^3)} \norm{\Psi}
\right]  .
\end{align}
Using that \cite[Lemma 10]{FS2014} implies
$
a^* \left( \id_{\abs{\cdot} \leq \Lambda} G_x \right) 
a \left( \id_{\abs{\cdot} \leq \Lambda} G_x \right)
\leq C \left( 1 - \Delta_x \right) \mathcal{N}_a
$( see \cite[(6.8)]{LMS2021})  we estimate
\begin{align}
\norm{\int_{\abs{k} \leq \Lambda} dk \,  G_x(k)  a_{-k}  \psi_t(x) \Psi}
&\leq C \norm{\left( 1 - \Delta_x \right)^{1/2} \psi_t(x) \mathcal{N}_a^{1/2} \Psi} .
\end{align}
Altogether this shows \eqref{eq:preliminary estimates quadratic term 2}. Using $K(t,k,x) = \psi_t(x) \left( G_x(k) - \scp{\psi_t}{G_{\cdot}(k) \psi_t}_{L^2(\mathbb{R}^3)} \right)$ we bound the left hand side of \eqref{eq:preliminary estimates quadratic term 3} by
\begin{align}
&\abs{\scp{\chi}{ \int dx \, \int dk \, K(t,k,x) \left(a^*_k + a_{-k} \right) b^*_x  \Psi}}
\nonumber \\
&\quad \leq 
\abs{\scp{\chi}{ \int dx \, \int dk \, \psi_t(x) \scp{\psi_t}{G_{\cdot}(k) \psi_t}_{L^2(\mathbb{R}^3)} \left(a^*_k + a_{-k} \right) b^*_x \Psi}}
\nonumber \\
&\qquad +
\abs{\scp{\chi}{ \int dx \, \int dk \, \psi_t(x) G_x(k) \left(a^*_k + a_{-k} \right) b^*_x \Psi}} .
\end{align}
Due to \eqref{eq:estimate expectation value of G} we have that
\begin{align}
\label{eq:preliminary estimates auxiliary estimates 1}
\abs{\scp{\chi}{ \int dx \, \int dk \, \psi_t(x) \scp{\psi_t}{G_{\cdot}(k) \psi_t}_{L^2(\mathbb{R}^3)} \left(a^*_k + a_{-k} \right) b^*_x   \Psi}} 
&\leq C \norm{\psi_t}_{H^1} \norm{\mathcal{N}_b^{1/2} \chi} \norm{\left( \mathcal{N}_a + 1 \right)^{1/2} \Psi} .
\end{align}
Together with the previous estimates this shows \eqref{eq:preliminary estimates quadratic term 3}.
\end{proof}

\begin{proof}[Proof of Lemma \ref{lemma: cubic interaction}]
By means of 
\begin{align}
\int dx \, b_x^* \scp{\psi_t}{\hat{\Phi} \psi_t}_{L^2(\mathbb{R}^3)} b_x = \mathcal{N}_b \left( a \left( \scp{\psi_t}{G_{\cdot} \psi_t}_{L^2(\mathbb{R}^3)} \right) + a^* \left( \scp{\psi_t}{G_{\cdot} \psi_t}_{L^2(\mathbb{R}^3)} \right)  \right) 
\end{align}
and
\eqref{eq:estimate expectation value of G} we get
\begin{align}
\abs{\scp{\chi}{\int dx \, b_x^* \scp{\psi_t}{\widehat{\Phi} \psi_t}_{L^2(\mathbb{R}^3)} b_x \Psi} } \leq C \norm{\psi_t}_{H^1(\mathbb{R}^3)} \norm{\mathcal{N}_b^{\frac{1}{2}} \chi}
\norm{\left( \mathcal{N}_a + 1 \right)^{\frac{1}{2}} \mathcal{N}_b^{\frac{1}{2}} \Psi} .
\end{align}
Moreover, note that
\begin{align}
\label{eq:preliminary estimate cubic term intermediate 1}
\abs{\scp{\chi}{\int dx \, b_x^* q(t) \widehat{\Phi} q(t) b_x \Psi}}
&\leq \sum_{\sharp \in \{\cdot, *\} } \abs{\scp{\chi}{\int dx \, b_x^* q(t)  a^{\sharp}
\left( \id_{\abs{\cdot} \leq \Lambda} G_{\cdot} \right)
 q(t) b_x \Psi}}
\\
\label{eq:preliminary estimate cubic term intermediate 2}
&\quad + \sum_{\sharp \in \{\cdot, *\} } 
\abs{\scp{\chi}{\int dx \, b_x^* q(t) 
 a^{\sharp} \left( \id_{\abs{\cdot} \geq \Lambda} G_{\cdot} \right)
q(t) b_x \Psi}} .
\end{align}
Using
\begin{align}
&\int dx\, b^{*}_x  q(t) a^{\sharp}
\left( \id_{\abs{\cdot} \leq \Lambda} G_{\cdot} \right) q(t) b_x 
\nonumber \\
&\quad = b^* \left( \psi_t \right) a^{\sharp}
\left( \id_{\abs{\cdot} \leq \Lambda} \scp{\psi_t}{G_{\cdot} \psi_t}_{L^2(\mathbb{R}^3)} \right)  b \left( \psi_t \right) 
+ \int dx \, b_x^* a^{\sharp}
\left( \id_{\abs{\cdot} \leq \Lambda} G_x \right) b_x 
\nonumber \\
&\qquad  - b^* \left( \psi_t \right) \int dx \, \overline{\psi_t(x)} a^{\sharp}
\left( \id_{\abs{\cdot} \leq \Lambda} G_x \right) b_x
- \int dx \, b_x^* a^{\sharp}
\left( \id_{\abs{\cdot} \leq \Lambda} G_x \right) \psi_t(x) b \left( \psi_t \right) ,
\end{align}
as well as \eqref{eq: bound for annihialtion and creation operators} and \eqref{eq:bound for G}
we obtain
\begin{align}
\abs{\eqref{eq:preliminary estimate cubic term intermediate 1}}
&\leq C \sup_{x \in \mathbb{R}^3} \norm{\id_{\abs{\cdot} \leq \Lambda} G_x}_{L^2(\mathbb{R}^3)} \norm{\mathcal{N}_b^{\frac{1}{2}} \chi} \norm{\left( \mathcal{N}_a + 1 \right)^{\frac{1}{2}} \mathcal{N}_b^{\frac{1}{2}} \Psi}
\leq C \Lambda^{\frac{1}{2}}  \norm{\mathcal{N}_b^{\frac{1}{2}} \chi} \norm{\left( \mathcal{N}_a + 1 \right)^{\frac{1}{2}} \mathcal{N}_b^{\frac{1}{2}} \Psi} .
\end{align}
Similarly, 
\begin{align}
&\abs{\scp{\chi}{\int dx \, b_x^* q(t) 
 a^{\sharp} \left( \id_{\abs{\cdot} \geq \Lambda} G_{\cdot} \right)
q(t) b_x \Psi}} 
\nonumber \\
&\quad \leq \abs{\scp{b (\psi_t) \chi}{a^{\sharp} \left( \id_{\abs{\cdot} \geq \Lambda} \scp{\psi_t}{G_{\cdot} \psi_t}_{L^2(\mathbb{R}^3)} \right) b (\psi_t) \Psi}}
+
\abs{ \int dx \,  \scp{b_x \chi}{a^{\sharp} \left( \id_{\abs{\cdot} \geq \Lambda} G_x \right) b_x \Psi }}
\nonumber \\
&\qquad +
\abs{\int dx \,  \scp{\psi_t(x) b (\psi_t) \chi}{a^{\sharp} \left( \id_{\abs{\cdot} \geq \Lambda} G_x \right) b_x \Psi}}
+
\abs{\int dx \,  \scp{b_x \chi}{a^{\sharp} \left( \id_{\abs{\cdot} \geq \Lambda} G_x \right) \psi_t(x) b(\psi_t) \Psi}} .
\end{align}
By means of \eqref{eq:commutator with G-x} and integration by parts we get
\begin{align}
\abs{\eqref{eq:preliminary estimate cubic term intermediate 2}}
&\leq C \norm{\psi_t}_{H^1(\mathbb{R}^3)}
\norm{\big(1 + \abs{\cdot}^2 \big)^{-1} \id_{\abs{\cdot} \geq \Lambda} }_{L^2(\mathbb{R}^3)}
\Big( 
\norm{\left( \mathcal{N}_b + T_b \right)^{\frac{1}{2}} \chi}
\norm{\left( \mathcal{N}_a + 1 \right)^{\frac{1}{2}} \mathcal{N}_b^{\frac{1}{2}} \Psi}
\nonumber \\
&\qquad \qquad \qquad  \qquad \qquad \qquad \qquad \qquad \qquad  \quad 
+ \norm{\left( \mathcal{N}_a + 1 \right)^{\frac{1}{2}} \mathcal{N}_b^{\frac{1}{2}} \chi} \norm{\left( \mathcal{N}_b + T_b \right)^{\frac{1}{2}} \Psi}
\Big)
\nonumber \\
&\leq C \norm{\psi_t}_{H^1(\mathbb{R}^3)}
\Lambda^{- \frac{1}{2}}
\left( 
\norm{\left( \mathcal{N}_b + T_b \right)^{\frac{1}{2}} \chi}
\norm{\left( \mathcal{N}_a + 1 \right)^{\frac{1}{2}} \mathcal{N}_b^{\frac{1}{2}} \Psi}
+ \norm{\left( \mathcal{N}_a + 1 \right)^{\frac{1}{2}} \mathcal{N}_b^{\frac{1}{2}} \chi} \norm{\left( \mathcal{N}_b + T_b \right)^{\frac{1}{2}} \Psi}
\right) .
\end{align}
In total, this shows \eqref{eq:preliminary estimate cubic term 1}. Setting $\Lambda = 1$ and $\chi = \Psi$ in \eqref{eq:preliminary estimate cubic term 1} and applying Young's inequality for products leads to \eqref{eq:preliminary estimate cubic term 2}.
\end{proof}

\section{Proof of Proposition \ref{proposition:solution theory for LP equation}}

\label{section:solution theory for LP equation}

Proposition \ref{proposition:solution theory for LP equation} is, except of bound for $\norm{\psi_t}_{H^3(\mathbb{R}^3)}$, a direct consequence \cite[Lemma~2.1 and Proposition~C.2]{FG2017}. In order to obtain the missing estimate we modify the proof of \cite[Proposition 2.2]{FG2017}. There, the $H^4$-norm of $\psi_t$ was estimated by means a functional which is better controllable during the time evolution than $\norm{\psi_t}_{H^4(\mathbb{R}^3)}$.
We will rely on the following results.
\begin{proposition}[Part of Proposition C.2 in \cite{FG2017}]
\label{proposition: Proposition C.2 in FG2017}
If $\alpha = 1$, $(\psi, \varphi) \in H^2(\mathbb{R}^3) \times L_1^2(\mathbb{R}^3)$, then $(\psi_t, \varphi_t) \in H^2(\mathbb{R}^3) \times L_1^2(\mathbb{R}^3)$ for all $t \in \mathbb{R}$ and there exists a constant $C> 0$ depending only on the initial data such that
\begin{align}
\norm{\psi_t}_{H^2(\mathbb{R}^3)} \leq C \left( 1 + \abs{t} \right) 
\quad \text{and} \quad
\norm{\varphi_t}_{L_1^2(\mathbb{R}^3)} \leq C \left( 1 +  \abs{t} \right) .
\end{align}
If, in addition, $\varphi \in L_2^2(\mathbb{R}^3)$ then $\varphi_t \in L_2^2(\mathbb{R}^3)$ for all $t \in \mathbb{R}$ and there exists a constant $C> 0$ depending only on the initial data such that 
\begin{align}
\label{eq:prelimary lemma solution theory L-2-2 bound for varphi}
\norm{\varphi_t}_{L_2^2(\mathbb{R}^3)}
&\leq C \left( 1 + \abs{t}^3 \right) .
\end{align}
\end{proposition}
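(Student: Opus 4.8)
The three estimates are a priori bounds on the Landau--Pekar flow \eqref{eq:Landau Pekar equations}, and the natural strategy (following \cite{FG2017}) is a bootstrap that alternates between the field equation and the particle equation, starting from the conservation laws. Conservation of $\norm{\psi_t}_{L^2(\mathbb{R}^3)}$ together with conservation of $\mathcal{E}[\psi_t,\varphi_t]$ and a Cauchy--Schwarz estimate of the interaction term of the type \eqref{eq:estimate expectation value of G} control $\norm{\psi_t}_{H^1(\mathbb{R}^3)}$ and $\norm{\varphi_t}_{L^2(\mathbb{R}^3)}$ uniformly in time; this is the base level of the induction.

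For the field I would solve the second line of \eqref{eq:Landau Pekar equations} by Duhamel's formula, $\varphi_t(k)=e^{-it}\varphi_0(k)-i\sqrt{\alpha}\abs{k}^{-1}\int_0^t e^{-i(t-s)}\widehat{\rho_s}(k)\,ds$ with $\rho_s=\abs{\psi_s}^2$ and $\widehat{\rho_s}$ its Fourier transform. Since the propagator phase is unimodular,
\begin{align}
\norm{\varphi_t}_{L_m^2(\mathbb{R}^3)}\leq \norm{\varphi_0}_{L_m^2(\mathbb{R}^3)}+\sqrt{\alpha}\int_0^t \norm{(1+\abs{\cdot}^2)^{m/2}\abs{\cdot}^{-1}\widehat{\rho_s}}_{L^2(\mathbb{R}^3)}\,ds .
\end{align}
The factor $\abs{k}^{-1}$ is square integrable near the origin in three dimensions and $\widehat{\rho_s}(0)=\norm{\psi_s}_{L^2(\mathbb{R}^3)}^2=1$ is harmless, while the high-frequency part is dominated by $\norm{\rho_s}_{H^m(\mathbb{R}^3)}$. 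For $m=1$ the integrand is bounded by the conserved $H^1$-norm of $\psi_s$, so one time integration gives $\norm{\varphi_t}_{L_1^2(\mathbb{R}^3)}\leq C(1+\abs{t})$. For $m=2$, the Sobolev embedding $H^2(\mathbb{R}^3)\hookrightarrow L^\infty(\mathbb{R}^3)$ and the product rule yield $\norm{\rho_s}_{H^2(\mathbb{R}^3)}\leq C\norm{\psi_s}_{H^2(\mathbb{R}^3)}^2$, so the $L_2^2$-bound is reduced to the $H^2$-bound for $\psi$.

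For the particle I would bound $\norm{\psi_t}_{H^2(\mathbb{R}^3)}$ using the Schr\"odinger flow $i\partial_t\psi_t=h(t)\psi_t$, the relevant inputs being $\norm{\Phi_{\varphi_t}}_{L^\infty(\mathbb{R}^3)}\leq C\norm{\varphi_t}_{L_1^2(\mathbb{R}^3)}$ and $\| \tfrac{d}{dt}\Phi_{\varphi_t}\|_{L^\infty(\mathbb{R}^3)}\leq C\norm{\varphi_t}_{L_1^2(\mathbb{R}^3)}$, both following from \eqref{eq:estimate expectation value of G} and the representation $\tfrac{d}{dt}\Phi_{\varphi_t}(x)=2\Im\scp{G_x}{\varphi_t}_{L^2(\mathbb{R}^3)}$ already used in the proof of Lemma \ref{lemma:estimates H-0(t)}. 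Combining the field estimate at weight $m$ (which needs $\norm{\psi_s}_{H^m}$ under the integral) with the particle estimate at level $H^m$ (which needs $\norm{\varphi_s}_{L_{m-1}^2}$) closes the system by induction on $m$, and the power counting reproduces the stated rates: the linear $H^2$-bound feeds into $\norm{\varphi_t}_{L_2^2(\mathbb{R}^3)}\leq C\int_0^t(1+s)^2\,ds\leq C(1+\abs{t}^3)$.

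The main obstacle is the \emph{linear} growth of $\norm{\psi_t}_{H^2(\mathbb{R}^3)}$. A naive Gr\"onwall argument on $\norm{\Delta\psi_t}_{L^2(\mathbb{R}^3)}^2$ produces the commutator $[\Delta,\Phi_{\varphi_t}]$ and hence $\Delta\Phi_{\varphi_t}$, which requires the weight $\norm{\varphi_t}_{L_2^2(\mathbb{R}^3)}$ and thus couples back to the quantity one is trying to bound, giving growth that is far too fast; even propagating $\norm{h(t)\psi_t}_{L^2(\mathbb{R}^3)}$ only yields quadratic growth in $t$. The resolution, as in \cite[Proposition~2.2]{FG2017}, is to propagate a carefully chosen functional comparable to $\norm{\psi_t}_{H^2(\mathbb{R}^3)}^2$ but whose time derivative involves only $\tfrac{d}{dt}\Phi_{\varphi_t}$ in $L^\infty$ together with already-controlled lower-order norms; this is precisely the device the present appendix adapts one level higher to obtain the $H^3$-bound claimed in Proposition \ref{proposition:solution theory for LP equation}.
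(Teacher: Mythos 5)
You should first be aware that the paper does not prove this proposition: it is imported verbatim as part of Proposition~C.2 of \cite{FG2017} and used as a black box in Appendix~\ref{section:solution theory for LP equation}, where only the additional $H^3$-bound of Proposition~\ref{proposition:solution theory for LP equation} is derived. There is therefore no in-paper proof to measure your argument against; the comparison below is with the cited argument and with the method the appendix itself uses one regularity level higher. Your treatment of the field norms is correct and complete in outline: Duhamel for the second equation of \eqref{eq:Landau Pekar equations}, the low/high frequency splitting of $\abs{k}^{-1}(1+\abs{k}^2)^{m/2}\widehat{\rho_s}$, the conserved $H^1$-norm for $m=1$, and the $H^2$-bound for $m=2$ reproduce (indeed slightly improve on, since only $\norm{\rho_s}_{H^1(\mathbb{R}^3)}\lesssim\norm{\psi_s}_{H^1(\mathbb{R}^3)}\norm{\psi_s}_{H^2(\mathbb{R}^3)}$ is needed there) the claimed rates $1+\abs{t}$ and $1+\abs{t}^3$.

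The genuine gap is the \emph{linear} growth of $\norm{\psi_t}_{H^2(\mathbb{R}^3)}$, which you correctly single out as the obstacle but do not close. The device you invoke --- propagating a functional of the type $\norm{(-\Delta+\Phi_{\varphi_t}+M)\psi_t}_{L^2(\mathbb{R}^3)}^2$ whose time derivative sees only $\norm{\Phi_{\dot\varphi_t}}_{L^\infty(\mathbb{R}^3)}$ and lower-order norms --- is exactly the one used in Appendix~\ref{section:solution theory for LP equation} for $\mathcal{E}^{(3)}$, and there it costs one extra power of $t$ relative to the input: with $\norm{\Phi_{\dot\varphi_t}}_{L^\infty(\mathbb{R}^3)}\leq C\norm{\varphi_t}_{L_1^2(\mathbb{R}^3)}\leq C(1+\abs{t})$ (Lemma~\ref{lemma:preliminary estimates for solution theory}) a Gr\"onwall argument yields $\norm{\psi_t}_{H^2(\mathbb{R}^3)}\lesssim 1+\abs{t}^2$, i.e.\ precisely the quadratic rate you concede is too slow. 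So ``a functional whose derivative involves only $\tfrac{d}{dt}\Phi_{\varphi_t}$ in $L^\infty$'' is not by itself sufficient; some additional cancellation must be exploited. The natural candidates are (i) the fact, already recorded in the paper, that the source term $-i\sqrt{\alpha}\abs{k}^{-1}\widehat{\rho_t}$ drops out of $\Phi_{\dot\varphi_t}=2\Im\scp{G_x}{\varphi_t}_{L^2(\mathbb{R}^3)}$ because the associated Coulomb-type potential is real, and (ii) an integration by parts in $s$ against the free phase $e^{-i(t-s)}$ in the Duhamel representation of $\varphi_t$, which converts the naive $O(\abs{t})$ bound on $\Im\scp{G_x}{\varphi_t}_{L^2(\mathbb{R}^3)}$ into something uniformly bounded at the price of differentiating $\rho_s$ in time (and hence of using the continuity equation to trade the singular factor $\abs{k}^{-1}$ for a derivative of $\rho_s$). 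Neither step appears in your sketch, and without one of them the first estimate of the proposition is not obtained; the remaining two estimates, as you note, would survive even with a quadratic $H^2$-bound.
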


\begin{lemma}
\label{lemma:preliminary estimates for solution theory}
There exists a constant $C > 0$ such that 
\begin{align}
\norm{\partial^{\beta} \Phi_{\varphi_t}}_{L^{\infty}(\mathbb{R}^3)} &\leq  C \norm{\varphi_t}_{L_{\abs{\beta} +1}^2(\mathbb{R}^3)} ,
\quad 
\norm{\partial^{\beta} \Phi_{\dot{\varphi}_t}}_{L^{\infty}(\mathbb{R}^3)} \leq  C \norm{\varphi_t}_{L_{\abs{\beta} +1}^2(\mathbb{R}^3)}
\end{align}
for all $\beta \in \mathbb{N}_{0}^3$ and such that 
\begin{align}
\label{eq:preliminary estimates for solution theory bound for energy}
1 \leq - \Delta + \Phi_{\varphi_t} + C \left( \mathcal{E}[\psi, \varphi] + C \right) .
\end{align}
\end{lemma}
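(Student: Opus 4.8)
\emph{Plan.} I would read all three assertions off the representation
\begin{align}
\Phi_{\varphi_t}(x) = 2\Re \int dk \, \abs{k}^{-1} e^{2\pi i k x} \varphi_t(k) = 2 \Re \scp{G_x}{\varphi_t}_{L^2(\mathbb{R}^3)},
\end{align}
which follows at once from \eqref{eq:definition classical phonon field} and \eqref{eq:definition G-x}. Differentiating under the integral sign gives $\partial^\beta \Phi_{\varphi_t}(x) = 2\Re \int dk \, (2\pi i k)^\beta \abs{k}^{-1} e^{2\pi i k x} \varphi_t(k)$, so that, using $\abs{(2\pi i k)^\beta} \leq (2\pi)^{\abs{\beta}} \abs{k}^{\abs{\beta}}$ and the Cauchy--Schwarz inequality in $k$ against the weight $(1+\abs{k}^2)^{(\abs{\beta}+1)/2}$,
\begin{align}
\abs{\partial^\beta \Phi_{\varphi_t}(x)} \leq 2 (2\pi)^{\abs{\beta}} \left( \int dk \, \frac{\abs{k}^{2\abs{\beta}-2}}{(1+\abs{k}^2)^{\abs{\beta}+1}} \right)^{1/2} \norm{\varphi_t}_{L_{\abs{\beta}+1}^2(\mathbb{R}^3)}.
\end{align}
The remaining $k$-integral is finite in three dimensions (in radial coordinates the integrand behaves like $r^{2\abs{\beta}}$ near the origin and like $r^{-2}$ at infinity), and the bound is uniform in $x$, which gives the first inequality.

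Next I would reduce the bound for $\Phi_{\dot{\varphi}_t}$ to the preceding one. Inserting the Landau--Pekar equation \eqref{eq:Landau Pekar equations}, i.e. $\dot{\varphi}_t(k) = -i\varphi_t(k) - i \abs{k}^{-1} \int dy \, e^{-2\pi i k y} \abs{\psi_t(y)}^2$, into $\Phi_{\dot{\varphi}_t}(x) = 2\Re \scp{G_x}{\dot{\varphi}_t}_{L^2(\mathbb{R}^3)}$, the contribution of the $\psi_t$-dependent term is
\begin{align}
-2\Re \left( i \int dy \, \abs{\psi_t(y)}^2 \int dk \, \abs{k}^{-2} e^{2\pi i k (x-y)} \right).
\end{align}
The inner $k$-integral is the Fourier transform of the real, radial function $\abs{k}^{-2}$ and hence real-valued, so the whole expression is purely imaginary and vanishes under $2\Re$. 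What remains is $\Phi_{\dot{\varphi}_t}(x) = 2\Im \scp{G_x}{\varphi_t}_{L^2(\mathbb{R}^3)}$, which is estimated exactly as above with $\Im$ in place of $\Re$, yielding the second inequality. This cancellation is the crucial point, since it is what permits the bound to be expressed solely through $\norm{\varphi_t}_{L_{\abs{\beta}+1}^2(\mathbb{R}^3)}$ without any norm of $\psi_t$; I expect getting this reality argument right to be the main subtlety of the lemma.

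Finally, for \eqref{eq:preliminary estimates for solution theory bound for energy} I would establish the relative form bound $\pm \Phi_{\varphi_t} \leq \tfrac12 (-\Delta) + C(1 + \norm{\varphi_t}_{L^2(\mathbb{R}^3)}^2)$. Writing $\scp{f}{\Phi_{\varphi_t} f}_{L^2(\mathbb{R}^3)} = 2\Re \int dk \, \overline{\varphi_t(k)} \scp{f}{G_\cdot(k) f}_{L^2(\mathbb{R}^3)}$ and applying \eqref{eq:estimate expectation value of G} with $\psi = \tilde{\psi} = f$, followed by Cauchy--Schwarz in $k$ (the integral $\int dk \, (1+\abs{k})^2 \abs{k}^{-2}(1+\abs{k}^2)^{-2}$ is again finite in three dimensions), gives $\abs{\scp{f}{\Phi_{\varphi_t} f}_{L^2(\mathbb{R}^3)}} \leq C \norm{\varphi_t}_{L^2(\mathbb{R}^3)} \norm{f}_{H^1(\mathbb{R}^3)} \norm{f}_{L^2(\mathbb{R}^3)}$, and Young's inequality then yields the stated form bound. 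Consequently $-\Delta + \Phi_{\varphi_t} \geq \tfrac12(-\Delta) - C(1+\norm{\varphi_t}_{L^2(\mathbb{R}^3)}^2) \geq -C(1+\norm{\varphi_t}_{L^2(\mathbb{R}^3)}^2)$, and invoking the energy bound \eqref{eq:estimate Pekar energy}—which rests on the same form bound together with the conservation of $\mathcal{E}$ from Proposition \ref{proposition:solution theory for LP equation}—to control $\norm{\varphi_t}_{L^2(\mathbb{R}^3)}^2 \leq 2(\mathcal{E}[\psi,\varphi]+C)$, one obtains \eqref{eq:preliminary estimates for solution theory bound for energy} after enlarging the constant $C$.
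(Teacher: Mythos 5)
Your proposal is correct and follows essentially the same route as the paper: Cauchy--Schwarz against the weight $(1+\abs{\cdot}^2)^{(\abs{\beta}+1)/2}$ for the $L^\infty$ bounds, the identity $\Phi_{\dot{\varphi}_t}(x)=2\Im\scp{G_x}{\varphi_t}_{L^2(\mathbb{R}^3)}$ for the second estimate, and \eqref{eq:estimate expectation value of G} combined with \eqref{eq:estimate Pekar energy} for the form bound. The only difference is that you spell out the cancellation of the $\psi_t$-dependent source term (via the reality of the Fourier transform of $\abs{k}^{-2}$), which the paper simply asserts.
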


\begin{proof}[Proof of Lemma \ref{lemma:preliminary estimates for solution theory}]
If we insert $\big( 1 + \abs{\cdot}^2 \big)^{ \frac{1 + \abs{\beta}}{2}}$ and it's inverse to the right hand side of $\partial_x^{\beta} \Phi_{\varphi_t}(x) = 2 \Re \scp{\abs{\cdot}^{-1} \partial_x^{\beta} e^{- 2 \pi i x \cdot}}{\varphi_t}_{L^2(\mathbb{R}^3)} $ and apply the Cauchy--Schwarz inequality we obtain the first estimate. The second inequality is derived by similar means because $\Phi_{\dot{\varphi}_t}=  2 \Im \scp{\abs{\cdot}^{-1} e^{- 2 \pi i x \cdot}}{\varphi_t}_{L^2(\mathbb{R}^3)}$. By  \eqref{eq:estimate expectation value of G} and the Cauchy--Schwarz inequality we get
\begin{align}
\abs{\scp{\xi}{\Phi_{\varphi_t} \xi}_{L^2(\mathbb{R}^3)}}
&\leq C \norm{\xi}_{H^1(\mathbb{R}^3)} \norm{\xi}_{L^2(\mathbb{R}^3)} \norm{\varphi_t}_{L^2(\mathbb{R}^3)}
\leq \frac{1}{2} \norm{\xi}_{H^1(\mathbb{R}^3)}^2 + C \norm{\varphi_t}_{L^2(\mathbb{R}^3)}^2 \norm{\xi}_{L^2(\mathbb{R}^3)}^2 
\end{align}
for $\xi \in H^1(\mathbb{R}^3)$. Together with \eqref{eq:estimate Pekar energy} this leads to \eqref{eq:preliminary estimates for solution theory bound for energy}.
\end{proof}

\begin{proof}[Bound for $\norm{\psi_t}_{H^3(\mathbb{R}^3)}$]

The local well-posedness of solution in $H^3(\mathbb{R}^3) \times L_2^2(\mathbb{R}^3)$ can be shown by a standard fixed-point argument. In order to derive a bound on the $H^3$-norm of $\psi_t$ we define the functional
\begin{align}
\mathcal{E}^{(3)}[\psi_t, \varphi_t]
&= \norm{\left( - \Delta + \Phi_{\varphi_t} + M \right)^{3/2} \psi_t}_{L^2(\mathbb{R}^3)}^2 ,
\end{align}
where $M \geq 1$ is a constant depending only on $\mathcal{E}[\psi, \varphi]$ such that $1 \leq - \Delta + \Phi_{\varphi_t} + M$. Note that the existence of $M$ is guaranteed by \eqref{eq:preliminary estimates for solution theory bound for energy}. The functional satisfies the inequalities 
\begin{align}
\label{eq:third moment of the energy comparison to Sobolev norm}
\abs{\mathcal{E}^{(3)}[\psi_t, \varphi_t]
- \norm{\left( - \Delta +M \right)^{3/2} \psi_t}_{L^2(\mathbb{R}^3)}^2}
&\leq \frac{1}{2} \mathcal{E}^{(3)}[\psi_t, \varphi_t] +  C M^2 \left( 1 +  \abs{t}^6 \right)  ,
\\
\label{eq:third moment of the energy time derivative}
\sqrt{\mathcal{E}^{(3)}[\psi_t, \varphi_t]} 
&\leq \sqrt{\mathcal{E}^{(3)}[\psi, \varphi]} + C M \left( 1 + \abs{t}^4 \right) 
\end{align}
with $C > 0$ depending only on the initial data.
Combining the estimates let us obtain
\begin{align}
\norm{\psi_t}_{H^3(\mathbb{R}^3)}
&\leq \norm{\left( - \Delta + M \right)^{3/2} \psi_t}_{L^2(\mathbb{R}^3)}
\leq C \left(  \norm{\left( - \Delta + M \right)^{3/2} \psi}_{L^2(\mathbb{R}^3)} + M \left( 1 + \abs{t}^4 \right) \right)  .
\end{align}
This proves the second inequality in \eqref{eq:solution of the SKG equations bounds for the H-3 and L-2-2 norm}. It remains to prove the inequalities from above. Using
\begin{align}
\left( - \Delta + \Phi_{\varphi_t} + M \right)^{3} - \left( - \Delta + M \right)^{3}
&=
\left( - \Delta + \Phi_{\varphi_t} + M \right)^{2}  \Phi_{\varphi_t} 
+ \Phi_{\varphi_t}  \left( - \Delta + \Phi_{\varphi_t} + M \right)^2
\nonumber \\
&\quad 
- \Phi_{\varphi_t} \left( - \Delta + \Phi_{\varphi_t} + M \right) \Phi_{\varphi_t}
+ \left( - \Delta + M \right) \Phi_{\varphi_t} \left( - \Delta + M \right)
\end{align}
and the Cauchy--Schwarz inequality let us estimate
\begin{align}
&\abs{\mathcal{E}^{(3)}[\psi_t, \varphi_t]
- \norm{\left( - \Delta +M \right)^{3/2} \psi_t}_{L^2(\mathbb{R}^3)}^2}
\nonumber \\
&\quad \leq 
2 \sqrt{\mathcal{E}^{(3)}[\psi_t, \varphi_t]}
\norm{\left( - \Delta + \Phi_{\varphi_t} + M \right)^{1/2} \Phi_{\varphi_t} \psi_t}_{L^2(\mathbb{R}^3)}
+ \norm{\left( - \Delta + \Phi_{\varphi_t} + M \right)^{1/2} \Phi_{\varphi_t}}_{L^2(\mathbb{R}^3)}^2
\nonumber \\
&\qquad + \norm{\Phi_{\varphi_t}}_{L^{\infty}(\mathbb{R}^3)} \norm{\left( - \Delta + M \right) \psi_t}_{L^2(\mathbb{R}^3)}^2
\nonumber \\
&\leq \frac{1}{2} \mathcal{E}^{(3)}[\psi_t, \varphi_t]
+ C \norm{\left( - \Delta + \Phi_{\varphi_t} + M \right)^{1/2} \Phi_{\varphi_t} \psi_t}_{L^2(\mathbb{R}^3)}^2 
+ \norm{\Phi_{\varphi_t}}_{L^{\infty}(\mathbb{R}^3)} \norm{\left( - \Delta + M \right) \psi_t}_{L^2(\mathbb{R}^3)}^2 
\nonumber \\
&\leq \frac{1}{2} \mathcal{E}^{(3)}[\psi_t, \varphi_t]
+ C M^2 \Big[1 +
\norm{\nabla \Phi_{\varphi_t}}_{L^{\infty}(\mathbb{R}^3)}^2 + \norm{ \Phi_{\varphi_t}}^2_{L^{\infty}(\mathbb{R}^3)} \norm{\psi_t}_{H^1(\mathbb{R}^3)}^2
\nonumber \\
&\qquad \qquad \qquad  \qquad \qquad 
+ \norm{ \Phi_{\varphi_t}}_{L^{\infty}(\mathbb{R}^3)}^3
+ \norm{ \Phi_{\varphi_t}}_{L^{\infty}(\mathbb{R}^3)} \norm{\psi_t}^2_{H^2(\mathbb{R}^3)}
\Big] .
\end{align}
Inequality \eqref{eq:third moment of the energy comparison to Sobolev norm} then follows from
Proposition \ref{proposition: Proposition C.2 in FG2017} and Lemma 
\ref{lemma:preliminary estimates for solution theory}. Next, we estimate
\begin{align}
&\frac{d}{dt} \mathcal{E}^{(3)}[\psi_t, \varphi_t]
\nonumber \\
&\quad = 2 \Re \scp{\psi_t}{\Phi_{\dot{\varphi}_t} \left( - \Delta + \Phi_{\varphi_t} + M \right)^2 \psi_t}_{L^2(\mathbb{R}^3)}
\nonumber \\
&\qquad +
\scp{\psi_t}{\left( - \Delta + \Phi_{\varphi_t} + M \right) \Phi_{\dot{\varphi}_t} \left( - \Delta + \Phi_{\varphi_t} + M \right) \psi_t}_{L^2(\mathbb{R}^3)}
\nonumber \\
&\quad \leq  \sqrt{\mathcal{E}^{(3)}[\psi_t, \varphi_t]} 2
\left(
\norm{\left( - \Delta + \Phi_{\varphi_t} + M \right)^{1/2} \Phi_{\dot{\varphi}_t} \psi_t}_{L^2(\mathbb{R}^3)}
+
\norm{\Phi_{\dot{\varphi}_t} \left( - \Delta + \Phi_{\varphi_t} + M \right) \psi_t}_{L^2(\mathbb{R}^3)}
\right)
\nonumber \\
&\quad \leq \sqrt{\mathcal{E}^{(3)}[\psi_t, \varphi_t]} C M \left(
\norm{\nabla \Phi_{\dot{\varphi}_t}}_{L^{\infty}(\mathbb{R}^3)} 
+ \norm{\Phi_{\dot{\varphi}_t}}_{L^{\infty}(\mathbb{R}^3)}  
\left( \norm{\psi_t}_{H^2(\mathbb{R}^3)}
+ \norm{\Phi_{\varphi_t}}_{L^{\infty}(\mathbb{R}^3)} 
\right)
\right) .
\end{align}
By means of 
Proposition \ref{proposition: Proposition C.2 in FG2017} and Lemma 
\ref{lemma:preliminary estimates for solution theory} we get
\begin{align}
\frac{d}{dt} \mathcal{E}^{(3)}[\psi_t, \varphi_t]
&\leq \sqrt{\mathcal{E}^{(3)}[\psi_t, \varphi_t]} C M  \left( 1 + \abs{t}^3 \right)  ,
\end{align}
which implies \eqref{eq:third moment of the energy time derivative}.
\end{proof}

\noindent{\it Acknowledgments.} 
N.L. gratefully acknowledges support from the Swiss National Science
Foundation through the NCCR SwissMap and funding from the European Union's
Horizon 2020 research and innovation programme under the Marie Sk\l
odowska-Curie grant agreement N\textsuperscript{o} 101024712.


\begin{thebibliography}{11}
  \addcontentsline{toc}{chapter}{Bibliography}

\bibitem{AF2014}
Z.~Ammari and M.~Falconi.
Wigner measures approach to the classical limit of the Nelson model:
convergence of dynamics and ground state energy.
\emph{J. Stat. Phys.} 157(2), 330--362 (2014).

\bibitem{AF2017}
Z.~Ammari and M.~Falconi.
Bohr's correspondence principle for the renormalized Nelson model.
\emph{SIAM J. Math. Anal.} 49(6), 5031--5095 (2017).

\bibitem{AFH2022}
Z. Ammari, M. Falconi and F. Hiroshima.
Towards a derivation of Classical ElectroDynamics of charges and fields from QED. \emph{Preprint},
arXiv:2202.05015  (2022).

\bibitem{BCS2017}
C. Boccato, S. Cenatiempo and B.~Schlein.
Quantum Many-Body Fluctuations Around Nonlinear Schr\"odinger Dynamics.
\emph{Ann. Henri Poincar\'{e}} 18, 113--191 (2017).

\bibitem{BOS2015}
N. Benedikter, G. de Oliveira and B.~Schlein.
Quantitative derivation of the Gross-Pitaevskii equation.
\emph{Comm. Pure Appl. Math.} 68(8), 1399--1482 (2015).

\bibitem{BNNS2019}
C.~Brennecke, P.T. Nam, M. Napi\'{o}rkowski and and B.~Schlein.
Fluctuations of N-particle quantum dynamics around the nonlinear Schr\"odinger equation.
\emph{Ann. Inst. Henri Poincar\'{e} C - Anal. Non Lin\'{e}aire} 36(5), 1201--1235 (2019).


\bibitem{BS2019}
C.~Brennecke and B.~Schlein.
Gross-Pitaevskii Dynamics for Bose-Einstein Condensates.
\emph{Analysis \& PDE} 12(6), 1513--1596 (2019).




\bibitem{CCFO2021} R.~Carlone, M.~Correggi, M.~Falconi and M.~Olivieri.
  Microscopic derivation of time-dependent point interactions. \emph{SIAM
 J. Math. Anal.} 53(4), 4657--4691 (2021).


\bibitem{CF2018} M.~Correggi and M.~Falconi. Effective potentials generated by field interaction in the quasi-classical  limit.
\emph{Ann. Henri Poincar\'e} 19(1), 189--235 (2018).
  
\bibitem{CFO2019} M.~Correggi, M.~Falconi and M.~Olivieri.  Quasi-classical
  dynamics.  \emph{J. Eur. Math. Soc.}, to appear. \emph{Preprint},
  \href{https://arxiv.org/abs/1909.13313}{arXiv:1909.13313} (2019).



\bibitem{D1979}
E.\,B.~Davies.
Particle-boson interactions and the weak coupling limit.
\emph{J. Math. Phys.} 20, 345--351 (1979).



\bibitem{F2013}
M.~Falconi.
Classical limit of the Nelson model with cutoff.
\emph{J. Math. Phys.} 54(1), 012303 (2013).

\bibitem{FL2022}
M. Falconi and N. Leopold. 
Derivation of the Maxwell--Schr\"odinger Equations: A note on the infrared sector of the radiation field.
\emph{Preprint}, arXiv:2203.16368 (2022).

\bibitem{FLMP2021}
M. Falconi, N. Leopold, D. Mitrouskas and S. Petrat.
Bogoliubov Dynamics and Higher-order Corrections for the Regularized Nelson Model.
\emph{Preprint}, arXiv:2110.00458 (2021).

\bibitem{FRS2021} D.~Feliciangeli, S.~Rademacher and R.~Seiringer. Persistence of the spectral gap for the Landau--Pekar equations. \emph{Lett. Math. Phys.} 111(19) (2021).

\bibitem{FG2017}
R.\,L.~Frank and Z.~Gang.
Derivation of an effective evolution equation for a strongly coupled polaron.
\emph{Anal. PDE} 10(2), 379--422 (2017).

\bibitem{FG2020}
R.\,L.~Frank and Z.~Gang.
A non-linear adiabatic theorem for the one-dimensional Landau--Pekar equations.
\emph{J. Funct. Anal.} 279(7), 108631 (2020).

\bibitem{FS2014}
R.\,L.~Frank and B.~Schlein.
Dynamics of a strongly coupled polaron.
\emph{Lett. Math. Phys.} 104, 911--929 (2014).

\bibitem{F1937}
H.~Fr\"ohlich. 
Theory of electrical breakdown in ionic crystals.
\emph{Proc. R. Soc. Lond. A} 160 (901), 230--241 (1937).


\bibitem{GNV2006}
J. Ginibre, F.  Nironi and G. Velo.
Partially classical limit of the Nelson model.
\emph{Ann. H. Poincar\'e} 7, 21--43 (2006).

\bibitem{G2017}
M.~Griesemer.
On the dynamics of polarons in the strong-coupling limit.
\emph{Rev. Math. Phys.} 29(10), 1750030 (2017).

\bibitem{GW2016}
M. Griesemer and A. W\"unsch. Self-adjointness and domain of the Fr\"ohlich Hamiltonian. 
\emph{J. Math. Phys.} 57(10), 021902 (2016).

\bibitem{GW2018}
M. Griesemer and A. W\"unsch. On the domain of the Nelson Hamiltonian. 
\emph{J. Math. Phys.} 59, 042111 (2018).



\bibitem{LP1948} L.~D.~Landau and S.~I.~Pekar. Effective mass of a polaron.
\emph{Zh. Eksp. Teor. Fiz.} 18(5), 419--423 (1948).

\bibitem{LS2019} J.~Lampart and J.~Schmidt. On Nelson-Type Hamiltonians and Abstract Boundary Conditions.
\emph{Comm. Math. Phys.} 367, 629--663 (2019).

\bibitem{LMRSS2021}
N.~Leopold, D.~Mitrouskas, S.~Rademacher, B.~Schlein and R.~Seiringer. Landau--Pekar equations and quantum fluctuations for the dynamics of a strongly coupled polaron. \emph{Pure Appl. Anal} 3(4), 653--676 (2021).

\bibitem{LMS2021} N.~Leopold, D.~Mitrouskas and R.~Seiringer. Derivation of the Landau--Pekar equations in a many-body mean-field limit. \emph{Arch. Ration. Mech. Anal.} 240, 383--417 (2021).

\bibitem{LP2019} N.~Leopold and S.~Petrat. Mean-field dynamics for the Nelson model with fermions. \emph{Ann. Henri Poincar{\'e}} 20(10), 3471--3508 (2019).

\bibitem{LP2018}
N.~Leopold and P.~Pickl.
Mean-field limits of particles in interaction with quantized radiation fields.
In: D.~Cadamuro, M.~Duell, W.~Dybalski, and S.~Simonella (eds) \emph{Macroscopic Limits of Quantum Systems}, volume 270 of Springer Proceedings in Mathematics \& Statistics, 185--214 (2018).

\bibitem{LP2020}
N.~Leopold and P.~Pickl.
Derivation of the Maxwell--Schr\"odinger equations from the Pauli--Fierz Hamiltonian. 
\emph{SIAM J. Math. Anal.} 52(5), 4900--4936 (2020).

\bibitem{LRSS2021}
N.~Leopold, S.~Rademacher, B.~Schlein and R.~Seiringer.
The Landau--Pekar equations: adiabatic theorem and accuracy.
\emph{ Anal. \& PDE} 14(7), 2079--2100 (2021).

\bibitem{LNS2015} M.~Lewin, P.\,T.~Nam, and B.~Schlein. Fluctuations around Hartree states in the mean-field regime. \emph{Am.\ J.\ Math.}, 137(6):1613--1650 (2015).

\bibitem{LNSS2015} 
M.~Lewin, P.\,T.~Nam, S.~Serfaty, J.\,P.~Solovej. Bogoliubov spectrum of interacting Bose gases. \emph{Comm. Pure Appl. Math.} 68(3), 413--471 (2015).

\bibitem{LY1958}
E.\,H.~Lieb and K.~Yamazaki,
Ground-state Energy and Effective Mass of the Polaron.
\emph{Phys. Rev.} 111, 728--733 (1958).
    
\bibitem{M2021} D.~Mitrouskas. A note on the Fr\"ohlich dynamics in the strong coupling limit. \emph{Lett. Math. Phys.} 111, 45 (2021). 

\bibitem{MPP2019} D.~Mitrouskas, S. Petrat and P. Pickl. Bogoliubov corrections and trace norm convergence for the Hartree dynamics. \emph{Rev. Math. Phys.} 31(8) (2019). 



\bibitem{NN2017}
P.T. Nam and M. Napi\'{o}rkowski. A note on the validity of Bogoliubov correction to mean-field dynamics.
\emph{J. Math. Pure. Appl.} 108, 662--688 (2017).

\bibitem{NN2019}
P.T. Nam and M. Napi\'{o}rkowski. Norm approximation for many-body quantum dynamics: Focusing case in low dimensions
\emph{Adv. Math.} 350, 547--587 (2019).



\bibitem{NS2020}
P.T. Nam and R. Salzmann. Derivation of 3D Energy-Critical Nonlinear Schr\"odinger Equation and Bogoliubov Excitations for Bose Gases.
\emph{Comm. Math. Phys.} 375, 495--571 (2020).



\bibitem{RS2009}
I.~Rodnianski and B.~Schlein.
Quantum fluctuations and rate of convergence towards mean field dynamics. \emph{Commun. Math. Phys.} 291(1), 31--61 (2009).    



\bibitem{T2002}
S. Teufel.
Effective N-body Dynamics for the Massless Nelson Model and Adiabatic Decoupling without Spectral Gap.
\emph{Ann. Henri Poincar\'e} 3,  939--965 (2002).



\end{thebibliography}

{}

\end{document}